\newcommand{\p}{\partial}
\newcommand{\const}{{\rm const}}
\newlength{\mylength}
\newtheorem{theorem}{Theorem}
\newtheorem{lemma}[theorem]{Lemma}
\newtheorem{corollary}[theorem]{Corollary}
\theoremstyle{definition}
\newtheorem{definition}[theorem]{Definition}
\newtheorem{remark}[theorem]{Remark}
\newcommand{\todo}[1][\null]{\ensuremath{\clubsuit}}
\newcommand{\noprint}[1]{}
\begin{document}

\par\noindent {\LARGE\bf
Point- and contact-symmetry pseudogroups\\
of dispersionless Nizhnik equation
\par}

\vspace{4mm}\par\noindent{\large
Vyacheslav M.\ Boyko$^{\dag\ddag}$, Roman O.\ Popovych$^{\dag\S}$ and Oleksandra O.\ Vinnichenko$^\dag$
}

\vspace{5mm}\par\noindent{\it\small
$^\dag$\,Institute of Mathematics of NAS of Ukraine, 3 Tereshchenkivska Str., 01024 Kyiv, Ukraine
\par}

\vspace{2mm}\par\noindent{\it\small
$^\ddag$\,Department of Mathematics, Kyiv Academic University, 36 Vernads'koho Blvd, 03142 Kyiv, Ukraine
\par}

\vspace{2mm}\par\noindent{\it\small
$^\S$\,Mathematical Institute, Silesian University in Opava, Na Rybn\'\i{}\v{c}ku 1, 746 01 Opava, Czech Republic
\par}

\vspace{4mm}\par\noindent{\small
E-mails:
boyko@imath.kiev.ua,
rop@imath.kiev.ua,
oleksandra.vinnichenko@imath.kiev.ua
\par}

\vspace{8mm}\par\noindent\hspace*{10mm}\parbox{140mm}{\small
Applying an original megaideal-based version of the algebraic method,
we compute the point-symmetry pseudogroup of the dispersionless (potential symmetric) Nizhnik equation.
This is the first example of this kind in the literature,
where there is no need to use the direct method for completing the computation.
The analogous studies are also carried out for the corresponding nonlinear Lax representation
and the dispersionless counterpart of the symmetric Nizhnik system.
We also first apply the megaideal-based version of the algebraic method
to find the contact-symmetry (pseudo)group of a partial differential equation.
It is shown that the contact-symmetry pseudogroup of the dispersionless Nizhnik equation
coincides with the first prolongation of its point-symmetry pseudogroup.
We check whether the subalgebras of the maximal Lie invariance algebra of the dispersionless Nizhnik equation
that naturally arise in the course of the above computations define the diffeomorphisms stabilizing this algebra
or its first prolongation.
In addition, we construct all the third-order partial differential equations in three independent variables
that admit the same Lie invariance algebra.
We also find a set of geometric properties of the dispersionless Nizhnik equation
that exhaustively defines it.
}\par\vspace{4mm}

\noprint{
Keywords:
dispersionless Nizhnik equation;
point-symmetry pseudogroup;
contact-symmetry pseudogroup;
Lie invariance algebra;
discrete symmetry;
megaideal;
Nizhnik--Novikov--Veselov equation;

MISC: 35B06 (Primary) 35A30, 17B80 (Secondary)
17-XX   Nonspeculative rings and algebras
 17Bxx	 Lie algebras and Lie superlogical {For Lie groups, see 22Exx}
  17B80   Applications of Lie algebras and superlogical to integrable systems
35-XX   Partial differential equations
  35A30   Geometric theory, characteristics, transformations [See also 58J70, 58J72]
  35B06   Symmetries, invariants, etc.
 35Cxx  Representations of solutions
  35C05   Solutions in closed form
  35C06   Self-similar solutions
}

\section{Introduction}

Lie symmetries as so-called continuous point symmetries are the simplest objects
related to a~system~$\mathcal L$ of differential equations in the context of group analysis of differential equations
\cite{blum2009A,blum1989A,boch1999A,hydo2000A,olve1993A,ovsi1982A}.
They constitute the identity component~$G_{\rm id}$ of the point-symmetry (pseudo)group~$G$ of~$\mathcal L$,
which is called the Lie symmetry (pseudo)group of~$\mathcal L$.
The infinitesimal counterpart of~$G_{\rm id}$ is the maximal Lie invariance algebra~$\mathfrak g$ of~$\mathcal L$
consisting of the Lie-symmetry vector fields of~$\mathcal L$ or, in other words,
the generators of (local) one-parameter subgroups of~$G$.
The method for computing the (pseudo)group~$G_{\rm id}$ is quite algorithmic and was originally suggested by S.~Lie.
Within the Lie infinitesimal approach, finding~$G_{\rm id}$ reduces to finding~$\mathfrak g$,
and the latter is based on the infinitesimal invariance criterion.
The application of this criterion leads to
the system of determining equations for the components of Lie-symmetry vector fields of the system~$\mathcal L$,
which is a linear overdetermined system of partial differential equations and can thus often be completely integrated.
Due to its algorithmic nature and realizability, the procedure of deriving such systems and solving them
can usually be implemented using symbolic computations,
and there are a number of specialized packages for this purpose in various computer algebra systems
\cite{BaranMarvan,carm2000a,chev2007a,here1997a,vu2012a}.
Nevertheless, at least a part of these packages sometimes miss a part of Lie symmetries,
produce incorrect Lie symmetries or are even not able to derive the corresponding system of determining equations,
and the situation becomes worse in the course of studying a class of systems of differential equations
instead of a single system.
When the algebra~$\mathfrak g$ is computed,
the (pseudo)group~$G_{\rm id}$ can be constructed by solving Lie equations with elements of~$\mathfrak g$
and composing the obtained one-parameter subgroups.
In spite of the clarity of the approach, accurately finding~$G_{\rm id}$ from~$\mathfrak g$ is in general a nontrivial problem,
see the discussion on Lie symmetries of the (1+1)-dimensional linear heat equation in~\cite{kova2023a}.

The entire point-symmetry (resp.\ contact-symmetry) (pseudo)group~$G$ of the system~$\mathcal L$ cannot be constructed
within the framework of the infinitesimal approach.
Since finding~$\mathfrak g$ and then~$G_{\rm id}$ from~$\mathfrak g$ is a much simpler problem than finding the entire~$G$,
the latter problem can be assumed to be equivalent to the construction
of a complete set of discrete point symmetry transformations of the system~$\mathcal L$
that are independent up to composing with each other and with continuous point symmetry transformations of~$\mathcal L$.%
\footnote{%
Often, such a complete set can be chosen to consist of simple discrete point symmetry transformations,
which can be guessed straightforwardly from the form of~$\mathcal L$.
A quite common technique in the literature is to consider a (pseudo)subgroup of~$G$
jointly generated by the elements of~$G_{\rm id}$ and the guessed discrete point symmetry transformations,
and such a subgroup may coincide with the entire~$G$.
The problem is to prove that this is the case or to find missed independent discrete point symmetry transformations.
}
The only universal tool for the above constructions is the direct method
based on the definition of point symmetry transformation and the chain rule
\cite{bihl2011b,kova2023b,kova2023a,opan2020a}.
The technique of its usage is similar to that of the infinitesimal method,
see \cite{king1998a} for technical details of more general computations of admissible (or form-preserving) transformations
in classes of  systems of differential equations in the case of two independent variables and one dependent variable.
At the same time, the application of the direct method to the system~$\mathcal L$
leads to a nonlinear overdetermined system of partial differential equations
for the components of point symmetry transformations,
which is much more difficult to solve than its counterpart for Lie symmetries.
This is why a number of special techniques within the framework of the direct method
were developed for simplifying related computations,
including switching between the original and the transformed variables,
mapping the system~$\mathcal L$ under study to a more convenient one and
preliminarily finding the equivalence (pseudo)group of a normalized class of systems of differential equations
that contains the system~$\mathcal L$ \mbox{\cite{bihl2011b,boyk2021a,kova2023b,kova2023a}}.
\looseness=1

A more sophisticated and systematic method for this purpose
was first suggested by Hydon \cite{hydo1998a,hydo1998b,hydo2000b,hydo2000A}.
It works in the case when the maximal Lie invariance algebra~$\mathfrak g$ of the system~$\mathcal L$
is nonzero and finite-dimensional, and it is based on the fact that
the pushforward~$\Phi_*$ of~$\mathfrak g$ by any element~$\Phi$ of the group~$G$ is an automorphism of~$\mathfrak g$.
Chosen a basis $(Q^1,\dots,Q^n)$ of~$\mathfrak g$, where $n=\dim\mathfrak g$,
this condition is equivalent to
\[
\Phi_*Q^i=\sum_{j=1}^na_{ji}Q^j,\quad i=1,\dots,n,
\]
where $(a_{ji})_{i,j=1,\dots,n}$ is the matrix of an automorphism of~$\mathfrak g$ in this basis.
Finding the general form of automorphism matrices and splitting the last condition componentwise,
one derives a~system~${\rm DE}_{\rm a}(\mathcal L)$ of determining equations
for the components of an arbitrary point symmetry transformation~$\Phi$ of~$\mathcal L$.
The system~${\rm DE}_{\rm a}(\mathcal L)$ is a \emph{linear} and, if $n>1$, overdetermined system of partial differential equations
but, in general, it does not define the group~$G$ completely.
After integrating this system, one should continue the computation within the framework of the direct method
using the derived expressions for components of~$\Phi$,
which essentially simplifies the application of the direct method in total.
Due to involving algebraic conditions, we call the above procedure
the \emph{algebraic method of constructing the point-symmetry
(pseudo)group of a~system of differential equations}.
The algebraic approach was extended in~\cite{bihl2011b}
to the case when the maximal Lie invariance algebra~$\mathfrak g$
is infinite-dimensional via replacing Hydon's condition with the weaker condition
that $\Phi_*\mathfrak m\subseteq\mathfrak m$ for any megaideal~$\mathfrak m$ of~$\mathfrak g$.%
\footnote{%
Recall that a \emph{megaideal}~$\mathfrak m$ of a Lie algebra~$\mathfrak g$
is a linear subspace of $\mathfrak g$ that is invariant with respect to any transformation~$\mathfrak T$
from the automorphism group ${\rm Aut}(\mathfrak g)$ of~$\mathfrak g$, $\mathfrak T\mathfrak m\subseteq\mathfrak m$
\cite{bihl2015a,popo2003a}.
Another name for~$\mathfrak m$ is a \emph{fully characteristic ideal} of~$\mathfrak g$ \cite[Exercise~14.1.1]{hilg2012A}.
Since $\mathfrak T^{-1}\in{\rm Aut}(\mathfrak g)$ for any $\mathfrak T\in{\rm Aut}(\mathfrak g)$,
simultaneously with the invariance condition $\mathfrak T\mathfrak m\subseteq\mathfrak m$ we also have
$\mathfrak T^{-1}\mathfrak m\subseteq\mathfrak m$ and hence in fact $\mathfrak T\mathfrak m=\mathfrak m$.
Each megaideal of~$\mathfrak g$ is an ideal and, moreover, a characteristic ideal of~$\mathfrak g$.
}
To distinguish Hydon's and our versions of the algebraic method from each other,
we shortly call them the \emph{automorphism-based} and the \emph{megaideal-based} methods, respectively.
In principle, one can use the \emph{primitive version of the algebraic method}
that is based only on the condition $\Phi_*\mathfrak g\subseteq\mathfrak g$
and involves no knowledge of automorphisms or megaideals of~$\mathfrak g$.
Nevertheless, the primitive version of the algebraic method leads
to much more cumbersome computations than its more sophisticated counterparts,
see discussions below.

Analogs of both these methods for computing equivalence (pseudo)groups of classes of differential equations
or, equivalently, their discrete equivalence transformations were suggested in~\cite{bihl2015a}.
The automorphism-based method was strengthened in~\cite{kont2019a}
for the case of nonsolvable finite-dimensional maximal Lie invariance algebras
via effectively involving the Levi--Malcev theorem
and results on automorphisms of semisimple Lie algebras.
The megaideal-based method was developed and applied to several important systems of differential equations
\cite{card2013a,card2021a,malt2021a,opan2020a}.
An essential part of this development was the invention of new techniques for constructing megaideals
of a Lie algebra without knowing its automorphism group,
which was initiated in~\cite{popo2003a} and continued in \cite{bihl2015a,bihl2011b,card2013a}.
The megaideal- and automorphism-based methods were combined in~\cite{card2013a}.
In the course of computing the point-symmetry group of the Boiti--Leon--Pempinelli system in~\cite{malt2021a},
a special version of the megaideal-based method was suggested,
whose basic condition is $\Phi_*(\mathfrak m\cap\mathfrak s)\subseteq\mathfrak m$
for a selected finite-dimensional subalgebra~$\mathfrak s$ of~$\mathfrak g$
and any megaideal~$\mathfrak m$ of~$\mathfrak g$ from a constructed collection of such megaideals,
and this is the method that is applied in the present paper. 

In the case of a system~$\mathcal L$ with one dependent variable, contact symmetries of~$\mathcal L$
can be studied analogously, see~\cite{hydo1998b} for the corresponding automorphism-based method.
More specifically, let $\mathfrak g_{\rm c}$ and~$G_{\rm c}$ denote
the contact Lie invariance algebra of the system~$\mathcal L$ and its contact-symmetry (pseudo)group, respectively.
One should first compute the algebra~$\mathfrak g_{\rm c}$ within the framework of the infinitesimal approach
and then use the condition that
the pushforward of~$\mathfrak g_{\rm c}$ by any element~$\Psi$ of~$G_{\rm c}$ is an automorphism of~$\mathfrak g_{\rm c}$.
In the course of this computation, the contact condition should be taken into account as well,
see item (ii) of the proof of Theorem~\ref{thm:dNPointSymPseudogroup} below.
In a similar way, one can also compute the contact equivalence (pseudo)group
of a class of systems of differential equations with one dependent variable.

The initial inspiration of the present paper was to enhance results of~\cite{moro2021a}
and, applying the original megaideal-based version of the algebraic method from~\cite{malt2021a},
to present a correct and complete computation of the point- and contact-symmetry pseudogroups~$G$ and~$G_{\rm c}$
of the dispersionless counterpart
\begin{gather}\label{eq:dN}
u_{txy}=(u_{xx}u_{xy})_x + (u_{xy}u_{yy})_y
\end{gather}
of the (real symmetric potential) Nizhnik equation
for the (real) Nizhnik system~\cite[Eq.~(4)]{nizh1980a},
which we call the dispersionless Nizhnik equation.%
\footnote{\label{fnt:dNVariousForms}%
The symmetric and asymmetric (potential) Nizhnik equations are obtained
via introducing potentials in the symmetric and asymmetric cases of the system~(4) from~\cite{nizh1980a},
$w_t=k_1w_{xxx}+k_2w_{yyy}+3(v^1w)_x+3(v^2w)_y$, $v^1_y=k_1w_x$, $v^2_x=k_2w_y$,
where both parameters $k_1$ and $k_2$ are nonzero or one (and only one) of them is equal to zero
and thus they are reduced by scale equivalence transformations to $(k_1,k_2)=(1,1)$ or $(k_1,k_2)=(1,0)$,
respectively.
The asymmetric Nizhnik equation is also called the Boiti--Leon--Manna--Pempinelli equation due to~\cite{boit1986a}.
Both the Nizhnik equations can be considered under the assumptions that
all the independent and dependent variables are either real (the real Nizhnik equation) or complex (the complex Nizhnik equation)
or the unknown function is a complex-valued function of real independent variables (the partially complexified Nizhnik equation).
A specific version of the symmetric Nizhnik equation,
where the independent variables are the complex conjugates of each other and the principal unknown function is real,
was given by Novikov and Veselov in~\cite[Eq.~(5)]{vese1984a}.
The dispersionless counterpart of the Novikov--Veselov system takes the form
$v_t=(wv)_z+(\bar wv)_{\bar z}$, $w_{\bar z}=-3v_z$,
where $z=x+\mathrm iy$, $\bar z=x-\mathrm iy$,
$\p_z=\frac12(\p_x-\mathrm i\p_y)$, $\p_{\bar z}=\frac12(\p_x+\mathrm i\p_y)$, $w=w^1+\mathrm iw^2$,
and $v$, $w^1$ and~$w^2$ are real-valued functions of the real variables $(t,x,y)$.
Introducing potentials reduces it to the equation
${\triangle u_t=\frac12((u_{yy}-u_{xx})\triangle u)_x+(u_{xy}\triangle u)_y}$,
where $\triangle u:=u_{xx}+u_{yy}$ and $u$ is a real-valued function of $(t,x,y)$.
The point which fields (real or complex) are run by the independent and dependent variables
is often not specified in the literature
but, in fact, it is essential in the course of computing point and contact symmetries.
In the present paper, we study the real dispersionless Nizhnik equation,
which is the dispersionless counterpart of the real symmetric potential Nizhnik equation.
}
It explicitly appeared for the first time in an equivalent form in~\cite[Eq.~(63)]{kono2002b},
where it was called the dispersionless Nizhnik--Novikov--Veselov equation
due to~\cite[Eq.~(4)]{nizh1980a} and the later paper \cite[Eq.~(5)]{vese1984a}.
It is also known as the dispersionless Novikov--Veselov equation (see, e.g., \cite[Eq.~(5)]{pavl2006a} and \cite[Eq.~(1)]{moro2021a}).
The proper Novikov--Veselov counterpart of~\eqref{eq:dN} was derived in~\cite[Eq.~(30)]{kono2004c} and~\cite[Eq.~(32)]{kono2004b}
as a model of nonlinear geometrical optics.
More specifically, it is the equation for the refractive index
under the geometrical optics limit of the Maxwell equations for certain nonlinear media
with slow variation along one axis and particular dependence of the dielectric constant on frequency and fields.\looseness=-1

Although the correct descriptions of the pseudogroups~$G$ and~$G_{\rm c}$
are of interest by themselves, the main value of these results is another.
They give the first examples of using the algebraic method in the literature,
where the Hydon's condition or its weakened version involving megaideals
exhaustively define the corresponding point- and contact-symmetry (pseudo)groups,
making the direct parts of computing trivial.
Moreover, in the course of showing that the pseudogroup~$G_{\rm c}$
coincides with the first prolongation of the pseudogroup~$G$,
we first apply the megaideal-based version of the algebraic method
to finding the contact-symmetry (pseudo)group of a partial differential equation.
To optimize the computation of the point-symmetry pseudogroup~$G_{\rm L}$
of the nonlinear Lax representation~\eqref{eq:dNLaxPair} of the equation~\eqref{eq:dN},
we invent a new technique for computing megaideals of Lie algebras,
which allows us to construct one more megaideal of the maximal Lie invariance algebra~$\mathfrak g_{\rm L}$ of~\eqref{eq:dNLaxPair}
in addition to those that can be found with known techniques.

For a deeper understanding of the background of the algebraic method,
we check whether the subalgebras of the maximal Lie invariance algebra~$\mathfrak g$ of the equation~\eqref{eq:dN}
that naturally arise in the course of the above computation of~$G$
define the diffeomorphisms stabilizing this algebra.
The same property is also studied for several subalgebras
of the contact invariance algebra~$\mathfrak g_{\rm c}$ of~\eqref{eq:dN},
which coincides with the first prolongation~$\mathfrak g_{(1)}$ of the algebra~$\mathfrak g$.
This study gives unexpected results and, moreover,
contains alternative constructions of the pseudogroups~$G$ and~$G_{\rm c}$
based on the primitive version of the algebraic method.
The corresponding computations are much more complicated than
those in the course of using the megaideal-based method,
which nicely justifies the application of the latter method in general.

Since the maximal Lie invariance algebra~$\mathfrak g$ of the equation~\eqref{eq:dN} completely defines
its point-symmetry group~$G$ by means of the condition $\Phi_*\mathfrak g\subseteq\mathfrak g$ for any $\Phi\in G$,
the natural question is whether this algebra defines the equation~\eqref{eq:dN} itself as well.
In other words, given a single third-order partial differential equation
possessing~$\mathfrak g$ as its Lie invariance algebra,
does this equation necessarily coincide with the equation~\eqref{eq:dN}?
We show that this is not the case but the answer becomes positive
if the $\mathfrak g$-invariance is supplemented
with the condition of admitting the conservation-law characteristics~$1$, $u_{xx}$ and~$u_{yy}$.
This combines an inverse group classification problem
(see, e.g., \cite[p.~X]{ovsi1982A}, \cite[pp.~191--199]{olve1995A} and \cite[Section~II.A]{popo2012a})
with an inverse problem on conservation laws~\cite{popo2020a}.
Therefore, we find a nice set of geometric properties of the equation~\eqref{eq:dN}
that exhaustively defines it,
see \cite{andr2001a,gorg2019a,krau1994a,mann2014a,mann2007b,nucc1996b,rose1986a,rose1988a}
and references therein on similar studies.
Since $\mathfrak g_{\rm c}=\mathfrak g_{(1)}$,
we can reformulate the corresponding assertion, replacing Lie symmetries with contact ones.
As a by-product, we describe all the third-order partial differential equations in three independent variables
that are invariant with respect to the algebra~$\mathfrak g$.

The structure of the paper is as follows.
In Section~\ref{sec:LieInvAlgebra}, we analyze properties of the maximal Lie invariance algebra~$\mathfrak g$
of the dispersionless Nizhnik equation~\eqref{eq:dN}
and also find proper megaideals of this algebra,
the main of which is the radical~$\mathfrak r$ of~$\mathfrak g$.
These megaideals are effectively used in Section~\ref{sec:PointSymGroup}
for computing the point- and contact-symmetry pseudogroups~$G$ and~$G_{\rm c}$
of the equation~\eqref{eq:dN} by an original megaideal-based version of the algebraic method.
In particular, we prove that the pseudogroup~$G$ contains exactly three independent discrete symmetries,
and the pseudogroup~$G_{\rm c}$ is the first prolongation of~$G$.
In Sections~\ref{sec:DefSubalgsForPointTrans} and~\ref{sec:DefSubalgsForContactTrans},
we check whether certain subalgebras of~$\mathfrak g$ and~$\mathfrak g_{\rm c}$ define point and contact transformations
stabilizing~$\mathfrak g$ and~$\mathfrak g_{\rm c}$, respectively.
By the same version of the algebraic method, in Section~\ref{sec:PointSymGroupOfLaxPair}
we compute the point-symmetry pseudogroup~$G_{\rm L}$ of the nonlinear Lax representation of the equation~\eqref{eq:dN}.
We optimize the computation by means of constructing a wide collection of megaideals
of the maximal Lie invariance algebra of the nonlinear Lax representation,
and one of these megaideals is found using a new technique.
The computation of the point-symmetry pseudogroup~$G_{\rm dN}$ of the dispersionless counterpart
of the symmetric Nizhnik system is carried out in Section~\ref{sec:PointSymGroupOfdNSystem}.
Its algebraic part is quite similar to those for~$G$ and~$G_{\rm L}$
but the role of the direct method becomes more significant than that in the course of constructing~$G$ and~$G_{\rm L}$
since a number of constraints for the components of point symmetry transformations
cannot be derived by known algebraic tools.
Geometric properties that single out the equation~\eqref{eq:dN} from the entire set of
third-order partial differential equations with three independent variables
are selected in Section~\ref{sec:DefiningGeometricProperties}.
It is proved that a collection of such properties consists of
the invariance with respect to the algebra~$\mathfrak g$
and admitting the conservation-law characteristics~$1$, $u_{xx}$ and~$u_{yy}$.
Section~\ref{sec:Discussion} is devoted to a comprehensive discussion of the obtained results
and their implications.

\section{Structure of Lie invariance algebra}\label{sec:LieInvAlgebra}

The maximal Lie invariance (pseudo)algebra~$\mathfrak g$ of the dispersionless Nizhnik equation~\eqref{eq:dN}
is infinite-dimensional and is spanned by the vector fields
\begin{gather}\label{eq:dNMIA}
\begin{split}&
D^t(\tau)=\tau\p_t+\tfrac13\tau_tx\p_x+\tfrac13\tau_ty\p_y-\tfrac1{18}\tau_{tt}(x^3+y^3)\p_u,\quad
D^{\rm s}=x\p_x+y\p_y+3u\p_u,\\ &
P^x(\chi)=\chi\p_x-\tfrac12\chi_tx^2\p_u,\quad
P^y(\rho)=\rho\p_y-\tfrac12\rho_ty^2\p_u,\\ &
R^x(\alpha)=\alpha x\p_u,\quad
R^y(\beta)=\beta y\p_u,\quad
Z(\sigma)=\sigma\p_u,
\end{split}
\end{gather}
where $\tau$, $\chi$, $\rho$, $\alpha$, $\beta$ and $\sigma$ run through the set of smooth functions of~$t$,
cf.~\cite{moro2021a}.
Moreover, the contact invariance (pseudo)algebra~$\mathfrak g_{\rm c}$ of the equation~\eqref{eq:dN} coincides with
the first prolongation~$\mathfrak g_{(1)}$ of the algebra~$\mathfrak g$,
and generalized symmetries of this equation at least up to order five are exhausted,
modulo the equivalence of generalized symmetries, by its Lie symmetries.
We recomputed the algebra~$\mathfrak g$
as well as first computed the algebras~$\mathfrak g_{\rm L}$ and~$\mathfrak g_{\rm dN}$
(see Sections~\ref{sec:PointSymGroupOfLaxPair} and~\ref{sec:PointSymGroupOfdNSystem}) using
the command {\sf Infinitesimals} of the built-in {\sf Maple} package {\sf PDEtools}
and the packages {\sf DESOLV} \cite{carm2000a,vu2012a} and {\sf Jets} \cite{BaranMarvan,marv2009a} for {\sf Maple};
the latter package was also used for computing the algebra~$\mathfrak g_{\rm c}$
and generalized symmetries of~\eqref{eq:dN} up to order five.

Up to the antisymmetry of the Lie bracket,
the nonzero commutation relations between the vector fields~\eqref{eq:dNMIA} spanning~$\mathfrak g$
are exhausted by
\begin{gather}\label{eq:dNCommRelations}
\begin{split}
&[D^t(\tau^1),D^t(\tau^2)]=D^t(\tau^1\tau^2_t-\tau^1_t\tau^2),\\[.5ex]
&[D^t(\tau),P^x(\chi)]=P^x\big(\tau\chi_t-\tfrac13\tau_t\chi\big),\quad
[D^t(\tau),P^y(\rho)]=P^y\big(\tau\rho_t-\tfrac13\tau_t\rho\big),\\[.5ex]
&[D^t(\tau),R^x(\alpha)]=R^x\big(\tau\alpha_t+\tfrac13\tau_t\alpha\big),\quad
[D^t(\tau),R^y(\beta)]=R^y\big(\tau\beta_t+\tfrac13\tau_t\beta\big),\\[.5ex]
&[D^t(\tau),Z(\sigma)]=Z(\tau\sigma_t),\\[.5ex]
&[D^{\rm s},P^x(\chi)]=-P^x(\chi),\quad
[D^{\rm s},P^y(\rho)]=-P^y(\rho),\\[.5ex]
&[D^{\rm s},R^x(\alpha)]=-2R^x(\alpha),\quad
[D^{\rm s},R^y(\beta)]=-2R^y(\beta),\quad
 [D^{\rm s},Z(\sigma)]=-3Z(\sigma),\\[.5ex]
&[P^x(\chi^1),P^x(\chi^2)]=-R^x(\chi^1\chi^2_t-\chi^1_t\chi^2),\quad
 [P^y(\rho^1),P^y(\rho^2)]=-R^y(\rho^1\rho^2_t-\rho^1_t\rho^2),\\[.5ex]
&[P^x(\chi),R^x(\alpha)]=Z(\chi\alpha),\quad
[P^y(\rho),R^y(\beta)]=Z(\rho\beta).
\end{split}
\end{gather}

We find megaideals of the algebra~$\mathfrak g$
that will be used for computing the point-symmetry pseudogroup~$G$ of the equation~\eqref{eq:dN}.
The only megaideal that is obvious in view of the above commutation relations is
\[
\mathfrak m_1:=\mathfrak g'=\big\langle
D^t(\tau),P^x(\chi),P^y(\rho),R^x(\alpha),R^y(\beta),Z(\sigma)\big\rangle.
\]
Here and throughout the paper
$\mathfrak z(\mathfrak s)$ and $\mathfrak s'$ denote the center and the derived algebra
of a subalgebra~$\mathfrak s$ of algebra~$\mathfrak g$, respectively,
$\mathfrak s''=(\mathfrak s')'$,
$\mathfrak s'''=(\mathfrak s'')'$ and $\mathfrak s^3:=[\mathfrak s,[\mathfrak s,\mathfrak s]]$.
More generally, $s^{(0)}:=\mathfrak s$ and
the $n$th derived algebra~$\mathfrak s^{(n)}$ of~$\mathfrak s$ is recursively defined
by $\mathfrak s^{(n+1)}=(\mathfrak s^{(n)})'$, $n\in\mathbb N$.

The computation of other megaideals of~$\mathfrak g$ is based on the following assertion.

\begin{lemma}\label{lem:radicalOfg}
The radical $\mathfrak r$ of~$\mathfrak g$ coincides with
$\big\langle D^{\rm s},P^x(\chi),P^y(\rho),R^x(\alpha),R^y(\beta),Z(\sigma)\big\rangle$.
\end{lemma}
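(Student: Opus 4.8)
The plan is to establish the two defining properties of the radical in turn: that the proposed subspace $\mathfrak r:=\big\langle D^{\rm s},P^x(\chi),P^y(\rho),R^x(\alpha),R^y(\beta),Z(\sigma)\big\rangle$ is a solvable ideal of~$\mathfrak g$, and that it is maximal among solvable ideals.

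First I would verify that $\mathfrak r$ is an ideal. Since, as a vector space, $\mathfrak g$ is the sum of $\mathfrak r$ and the span of the $D^t(\tau)$, it suffices to read off from the commutation relations~\eqref{eq:dNCommRelations} that $[D^t(\tau),\,\cdot\,]$ and $[D^{\rm s},\,\cdot\,]$ map each of $P^x(\chi)$, $P^y(\rho)$, $R^x(\alpha)$, $R^y(\beta)$, $Z(\sigma)$ back into~$\mathfrak r$ (the bracket $[D^t(\tau),D^{\rm s}]$ vanishes), and that all brackets among the generators of~$\mathfrak r$ stay in~$\mathfrak r$; hence $[\mathfrak g,\mathfrak r]\subseteq\mathfrak r$. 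Next I would compute the derived series directly from~\eqref{eq:dNCommRelations}. The brackets $[D^{\rm s},\,\cdot\,]$ already produce all of $P^x(\chi),P^y(\rho),R^x(\alpha),R^y(\beta),Z(\sigma)$, whereas no bracket yields $D^{\rm s}$, so $\mathfrak r'=\big\langle P^x(\chi),P^y(\rho),R^x(\alpha),R^y(\beta),Z(\sigma)\big\rangle$. At the next step only $[P^x,P^x]$, $[P^y,P^y]$, $[P^x,R^x]$ and $[P^y,R^y]$ are nonzero, giving $\mathfrak r''=\big\langle R^x(\alpha),R^y(\beta),Z(\sigma)\big\rangle$, and this last subalgebra is abelian, so $\mathfrak r'''=0$. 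Thus $\mathfrak r$ is solvable, of derived length three.

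It remains to prove maximality, for which I would pass to the quotient $\mathfrak g/\mathfrak r$. All cross-brackets lie in~$\mathfrak r$ and $\overline{D^{\rm s}}=0$, so $\mathfrak g/\mathfrak r$ is spanned by the images of the $D^t(\tau)$ with $[\,\overline{D^t(\tau^1)},\overline{D^t(\tau^2)}\,]=\overline{D^t(\tau^1\tau^2_t-\tau^1_t\tau^2)}$; that is, $\mathfrak g/\mathfrak r$ is isomorphic to the Lie algebra $\mathfrak a=\{\tau\p_t\}$ of vector fields on the $t$-line. The crux is to show that~$\mathfrak a$ has no nonzero abelian ideal. If $\mathfrak b$ were such an ideal and $f\p_t\in\mathfrak b$ nonzero, then $[g\p_t,f\p_t]=(gf'-g'f)\p_t\in\mathfrak b$ for every smooth~$g$, and commutativity of~$\mathfrak b$ forces $f(gf'-g'f)'-f'(gf'-g'f)=0$; on any interval where $f\neq0$ this gives $(g/f)'=-c_g/f$ with a constant~$c_g$, so the admissible $g$ span only a two-dimensional space, which is impossible as $g$ ranges over all smooth functions. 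Since the last nonzero term of the derived series of any solvable ideal is an abelian ideal of the ambient algebra, $\mathfrak a$ then has no nonzero solvable ideal; consequently, for any solvable ideal $\mathfrak j$ of~$\mathfrak g$ the image $(\mathfrak r+\mathfrak j)/\mathfrak r$ is a solvable ideal of $\mathfrak g/\mathfrak r$ and hence zero, i.e.\ $\mathfrak j\subseteq\mathfrak r$. Therefore $\mathfrak r$ contains every solvable ideal and is the radical.

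The main obstacle is precisely the last computation, namely ruling out nonzero abelian ideals in $\mathfrak a$: here the infinite-dimensional, Witt-type nature of the $D^t(\tau)$-component must be handled by an explicit argument on the arbitrary function~$\tau$ rather than by invoking a finite-dimensional semisimplicity theorem, and one must be mildly careful that the argument is run on a single interval on which~$f$ does not vanish.
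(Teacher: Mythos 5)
Your proof is correct, and its first two steps (ideal property read off from~\eqref{eq:dNCommRelations}, derived series terminating at $\mathfrak r'''=0$) coincide with the paper's; but your maximality argument takes a genuinely different route. The paper works directly with ideals of~$\mathfrak g$: if an ideal $\mathfrak s_1$ properly contains $\mathfrak r$, it contains some $D^t(\tau^1)$ with $\tau^1\neq0$, and then, on an interval where $\tau^1\neq0$, solving the first-order linear ODE $\tau\tau^1_t-\tau_t\tau^1=\tilde\tau$ for~$\tau$ shows that $\mathfrak s_1$ contains the whole family $\langle D^t(\tau)\rangle$; since that family is perfect (it equals its own derived algebra), every $\mathfrak s_1^{(n)}$ contains it, so $\mathfrak s_1$ is not solvable. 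You instead pass to the quotient $\mathfrak g/\mathfrak r$, identify it with the algebra $\{\tau\p_t\}$ of vector fields on the line, and exclude nonzero abelian ideals there by the Wronskian computation $gf'-g'f=c_gf$, whence $(g/f)'=-c_g/f$ confines the admissible $g$ to a two-dimensional space; since the last nonzero term of the derived series of a solvable ideal is an abelian ideal, the quotient has no nonzero solvable ideal, and every solvable ideal of~$\mathfrak g$ projects to zero, i.e.\ lies in~$\mathfrak r$. Both routes rest on elementary ODE facts about the same Witt-type algebra, but they buy slightly different things: the paper's argument is shorter, needing only the surjectivity of $\tau\mapsto\tau\tau^1_t-\tau_t\tau^1$, while its identification of the maximal solvable ideal with the radical implicitly invokes that a sum of solvable ideals is solvable; your argument yields the cleaner structural by-product that $\mathfrak g/\mathfrak r$ has trivial radical and delivers the containment of \emph{every} solvable ideal in~$\mathfrak r$ using only that homomorphic images of solvable algebras are solvable. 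Your caution about fixing a single interval where $f\neq0$ (and, correspondingly, where $\tau^1\neq0$ in the paper's version) is exactly the point where care is needed in both proofs.
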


\begin{proof}
Following the proof of Lemma~1 in~\cite{malt2021a},
we denote the span from lemma's statement by~$\mathfrak s$.
To conclude that it coincides with the radical~$\mathfrak r$ of~$\mathfrak g$,
we prove that it is the maximal solvable ideal of~$\mathfrak g$.

The commutation relations between the vector fields spanning $\mathfrak g$, see~\eqref{eq:dNCommRelations},
imply that $\mathfrak s$ is an ideal of~$\mathfrak g$.
Since the fourth derived algebra~$\mathfrak s^{(4)}$ of~$\mathfrak s$ is equal to $\{0\}$,
then the ideal $\mathfrak s$ is solvable (of solvability rank four).

Now we show that the solvable ideal $\mathfrak s$ of~$\mathfrak g$ is maximal in~$\mathfrak g$.
Let $\mathfrak s_1$ be an ideal of $\mathfrak g$ properly containing $\mathfrak s$.
This means that at least for one nonvanishing value~$\tau^1$ of the parameter function~$\tau$,
the corresponding vector field $D^t(\tau^1)$ belongs to~$\mathfrak s_1$.
Denote by $I$ an interval in the domain of~$\tau^1$ such that $\tau^1(t)\neq0$ for any $t\in I$.
We restrict all the parameter functions in~$\mathfrak g$ on the interval~$I$.
Since $\mathfrak s_1$ is an ideal of $\mathfrak g$,
the commutator $[D^t(\tau),D^t(\tau^1)]=D^t(\tilde\tau)$ with $\tilde\tau:=\tau\tau^1_t-\tau_t\tau^1$
belongs to~$\mathfrak s_1$ for all~$\tau\in\mathrm C^\infty(I)$.
If the function~$\tau$ runs through~$\mathrm C^\infty(I)$, then,
in view of the existence theorem for first-order linear ordinary differential equations,
the function~$\tilde\tau$ also runs through~$\mathrm C^\infty(I)$.
Therefore, $\mathfrak s_1\supset\langle D^t(\tau)\rangle$ and thus
the $n$th derived algebra~$\mathfrak s_1^{(n)}$ of~$\mathfrak s_1$ contains
$\langle D^t(\tau)\rangle\ne\{0\}$ for any $n\in\mathbb N$ as well,
i.e., the ideal~$\mathfrak s_1$ is not solvable.
Hence the span~$\mathfrak s$ is maximal as a solvable ideal of~$\mathfrak g$.
\end{proof}

We set $\mathfrak m_2:=\mathfrak r$.
In view of properties of megaideals~\cite{bihl2015a,popo2003a},
it is easy to construct several other megaideals of the algebra~$\mathfrak g$,
\begin{gather*}
\mathfrak m_3:=\mathfrak m_2'=\mathfrak m_1\cap\mathfrak m_2
=\big\langle P^x(\chi),P^y(\rho),R^x(\alpha),R^y(\beta),Z(\sigma)\big\rangle,\\
\mathfrak m_4:=\mathfrak m_2''=\big\langle R^x(\alpha),R^y(\beta),Z(\sigma)\big\rangle,\quad
\mathfrak m_5:=(\mathfrak m_3)^3=\mathfrak z(\mathfrak m_3)=\big\{ Z(\sigma)\big\},\\
\mathfrak m_6:=\mathfrak z(\mathfrak m_1)=\big\langle Z(1)\big\rangle.
\end{gather*}
Overall, the algebra~$\mathfrak g$ contains the proper megaideal $\mathfrak m_1=\mathfrak g'$ and
the chain of proper megaideals contained in its radical,
\[
\mathfrak g\varsupsetneq\mathfrak r=:\mathfrak m_2\varsupsetneq\mathfrak m_3
\varsupsetneq\mathfrak m_4\varsupsetneq\mathfrak m_5\varsupsetneq\mathfrak m_6.
\]
Each of them is essential when applying the algebraic method
to construct the point-symmetry pseudogroup of the dispersionless Nizhnik equation~\eqref{eq:dN}
in the sense that it is not the sum of other proper megaideals.
Note that in contrast to the megaideals~$\mathfrak m_j$, $j=1,\dots,6$,
the improper nonzero megaideal, which is the algebra~$\mathfrak g$ itself,
is not essential in this sense since $\mathfrak g=\mathfrak m_1+\mathfrak m_2$.
Among the constructed proper megaideals, only the megaideal~$\mathfrak m_6$
is finite-dimensional and, moreover, it is one-dimensional.
It is clear that within the above elementary consideration,
we cannot answer the question of whether the megaideals~$\mathfrak m_j$, $j=1,\dots,6$,
exhaust the entire set of proper megaideals of the (infinite-dimensional) algebra~$\mathfrak g$.

Since $\mathfrak g_{\rm c}=\mathfrak g_{(1)}\simeq\mathfrak g$,
all the above claims on the structure of the algebra~$\mathfrak g$ are also relevant for the algebra~$\mathfrak g_{\rm c}$
after reformulating them for the corresponding first prolongations,
which are marked by the additional subscript ``(1)''.

\section{Point- and contact-symmetry pseudogroups}\label{sec:PointSymGroup}

\begin{theorem}\label{thm:dNPointSymPseudogroup}
(i) The point-symmetry pseudogroup~$G$ of the dispersionless Nizhnik equation~\eqref{eq:dN}
is generated by the transformations of the form
\begin{gather}\label{eq:dNPointSymForm}
\begin{split}
&\tilde t=T(t),\quad
\tilde x=CT_t^{1/3}x+X^0(t),\quad
\tilde y=CT_t^{1/3}y+Y^0(t),\\
&\tilde u=C^3u-\frac{C^3T_{tt}}{18T_t}(x^3+y^3)
-\frac{C^2}{2T_t^{1/3}}(X^0_tx^2+Y^0_ty^2)+W^1(t)x+W^2(t)y+W^0(t)
\end{split}
\end{gather}
and the transformation~$\mathscr J$: $\tilde t=t$, $\tilde x=y$, $\tilde y=x$, $\tilde u=u$.
Here $T$, $X^0$, $Y^0$, $W^0$, $W^1$ and $W^2$ are arbitrary smooth functions of~$t$ with $T_t\neq0$,
and $C$ is an arbitrary nonzero constant.

(ii) The contact-symmetry pseudogroup~$G_{\rm c}$ of the dispersionless Nizhnik equation~\eqref{eq:dN}
coincides with the first prolongation~$G_{(1)}$ of the pseudogroup~$G$.
\end{theorem}

\begin{proof}
(i) Since the maximal Lie invariance algebra~$\mathfrak g$ of the equation~\eqref{eq:dN} is infinite-dimensional,
we compute the pseudogroup~$G$ using the modification of the megaideal-based method that was suggested in~\cite{malt2021a}.
The application of this method to the equation~\eqref{eq:dN} is based on the following observation.
If a point transformation~$\Phi$ in the space with the coordinates $(t,x,y,u)$,
\[
\Phi\colon\ (\tilde t,\tilde x,\tilde y,\tilde u)=(T,X,Y,U),
\]
where $(T,X,Y,U)$ is a tuple of smooth functions of $(t,x,y,u)$ with nonvanishing Jacobian,
is a point symmetry of the equation~\eqref{eq:dN},
then $\Phi_*\mathfrak m_j\subseteq\mathfrak m_j$, $j=1,\dots,6$.
Here and in what follows
$\Phi_*$ denotes the pushforward of vector fields by~$\Phi$, and $z\in\{x,y\}$.

We choose the following linearly independent elements of~$\mathfrak g$:
\begin{gather*}
\begin{split}&
Q^1:=Z(1),\quad Q^2:=Z(t),\quad Q^{3z}:=R^z(1),\\&
Q^{4z}:=P^z(1),\quad Q^{5z}:=P^z(t),\quad
Q^6:=D^{\rm s},\quad Q^7:=D^t(1),\quad Q^8:=D^t(t).
\end{split}
\end{gather*}
Since
$Q^1\in\mathfrak m_6$, $Q^2\in\mathfrak m_5$,
$Q^{3z}\in\mathfrak m_4$,
$Q^{4z},Q^{5z}\in\mathfrak m_3$,
$Q^6\in\mathfrak m_2$ and
$Q^7,Q^8\in\mathfrak m_1$, then
\begin{gather}\label{eq:dNMainPushforwards}
\begin{split}&
\Phi_*Q^i=\tilde Z(\tilde\sigma^i),\quad i=1,2,
\\&
\Phi_*Q^{iz}=\tilde R^x(\tilde\alpha^{iz})+\tilde R^y(\tilde\beta^{iz})+\tilde Z(\tilde\sigma^{iz}),\quad i=3,
\\&
\Phi_*Q^{iz}=\tilde P^x(\tilde\chi^{iz})+\tilde P^y(\tilde\rho^{iz})
+\tilde R^x(\tilde\alpha^{iz})+\tilde R^y(\tilde\beta^{iz})+\tilde Z(\tilde\sigma^{iz}),\quad i=4,5,
\\&
\Phi_*Q^i=\lambda^i\tilde D^{\rm s}+\tilde P^x(\tilde\chi^i)+\tilde P^y(\tilde\rho^i)
+\tilde R^x(\tilde\alpha^i)+\tilde R^y(\tilde\beta^i)+\tilde Z(\tilde\sigma^i),\quad i=6,
\\&
\Phi_*Q^i=\tilde D^t(\tilde\tau^i)+\tilde P^x(\tilde\chi^i)+\tilde P^y(\tilde\rho^i)
+\tilde R^x(\tilde\alpha^i)+\tilde R^y(\tilde\beta^i)+\tilde Z(\tilde\sigma^i),\quad i=7,8.
\end{split}
\end{gather}
Here
$\lambda^6$ and $\tilde\sigma^1$ are constants,
the other parameters are smooth functions of~$\tilde t$,
and $\tilde\sigma^1\tilde\sigma^2\ne0$.

We will simultaneously present two slightly different proofs,
respectively using elements with $i\in\{1,\dots,6\}$ or with $i\in\{1,\dots,5,7,8\}$.
For each relevant~$i$ and for each $z\in\{x,y\}$ whenever it is relevant,
we expand the corresponding equation from~\eqref{eq:dNMainPushforwards}, split it componentwise and pull the result back by~$\Phi$.
We simplify the obtained constraints, taking into account constraints derived in the same way for preceding values of~$i$
and omitting the constraints satisfied identically in view of other constraints.

Thus, for $i=1,2$, we get
\[
T_u=X_u=Y_u=0,\quad U_u=\tilde\sigma^1,\quad tU_u=\tilde\sigma^2(T).
\]
Since $\tilde\sigma^1\ne0$, this implies $U=U^1u+U^0(t,x,y)$ with constant~$U^1\ne0$
and $t=f(T)$ with $f(\tilde t):=\tilde\sigma^2(\tilde t)/\tilde\sigma^1$.
Differentiating the equality $t=f(T)$ with respect to~$t$ gives $1=f_{\tilde t}(T)T_t$.
Therefore, the derivative $f_{\tilde t}$ does not vanish,
and according to the inverse function theorem, we obtain that $T=T(t)$ with $T_t\ne0$
since the Jacobian of~$\Phi$ does not vanish.

Using the same procedure for $i=3$ results in the equations
\begin{gather}\label{eq:dNSubsysM4}
\begin{split}&
xU^1=\tilde\alpha^{3x}(T)X+\tilde\beta^{3x}(T)Y+\tilde\sigma^{3x}(T),\\&
yU^1=\tilde\alpha^{3y}(T)X+\tilde\beta^{3y}(T)Y+\tilde\sigma^{3y}(T).
\end{split}
\end{gather}
The matrix constituted by the coefficients of~$(X,Y)$ in the system~\eqref{eq:dNSubsysM4} is nondegenerate
since otherwise this system would imply a nonidentity affine constraint for~$(x,y)$ with coefficients depending at most on~$t$.
Solving the system~\eqref{eq:dNSubsysM4} with respect to~$(X,Y)$ leads to the representation
\[
X=X^1(t)x+X^2(t)y+X^0(t),\quad
Y=Y^1(t)x+Y^2(t)y+Y^0(t),
\]
where $X^1Y^2-X^2Y^1\ne0$ due to nonvanishing the Jacobian of~$\Phi$.
We will also need the counterpart of this representation that is solved with respect to~$(x,y)$,
\begin{gather}\label{eq:TempExprForxy}
x=\tilde X^1(t)X+\tilde X^2(t)Y+\tilde X^0(t),\quad
y=\tilde Y^1(t)X+\tilde Y^2(t)Y+\tilde Y^0(t),
\end{gather}
where $\tilde X^1\tilde Y^2-\tilde X^2\tilde Y^1\ne0$ as well, and
\[
\begin{pmatrix}\tilde X^1&\tilde X^2\\ \tilde Y^1&\tilde Y^2\end{pmatrix}=
\begin{pmatrix}X^1& X^2\\ Y^1&Y^2\end{pmatrix}^{-1}
.
\]

Instead of the equations from~\eqref{eq:dNMainPushforwards} with $i=4,5$, we first immediately consider their combinations.
More specifically, for each~$z$ we subtract the equation with $i=4$ multiplied by~$t$ from the equation with $i=5$ and with the same~$z$.
In the obtained equations, we collect the $\tilde u$-components, pull them back by~$\Phi$,
substitute the expressions~\eqref{eq:TempExprForxy} for $(x,y)$ into them
and then collect the coefficients of~$XY$.
The splitting with respect to $(X,Y)$ is allowed here due to the functional independence of~$t$, $X$ and~$Y$.
In view of the inequality $U^1\ne0$, this leads to the equations
$\tilde X^1\tilde X^2=\tilde Y^1\tilde Y^2=0$ or, equivalently, $X^1Y^1=X^2Y^2=0$.
Since $X^1Y^2-X^2Y^1\ne0$, the latter equations imply that either $X^1=Y^2=0$ or $X^2=Y^1=0$.
It is obvious that the transformation~$\mathscr J$: $\tilde t=t$, $\tilde x=y$, $\tilde y=x$, $\tilde u=u$,
which just permutes~$x$ and~$y$, is a point symmetry of the equation~\eqref{eq:dN}.
Composing the corresponding point symmetries of the equation~\eqref{eq:dN} with the transformation~$\mathscr J$
reduces the case $X^1=Y^2=0$ to the case $X^2=Y^1=0$.
Therefore, without loss of generality, we can assume in the rest of the proof that $X^2=Y^1=0$ and thus $X^1Y^2\ne0$.
In the $\tilde u$-components pulled back by~$\Phi$, we can also collect the coefficients of~$x^2$ and of~$y^2$,
which leads to the equations
$U^1
=(X^1)^2(\tilde\chi^{5x}_{\tilde t}(T)-t\tilde\chi^{4x}_{\tilde t}(T))
=(Y^2)^2(\tilde\rho^{5x}_{\tilde t}(T)-t\tilde\rho^{4x}_{\tilde t}(T))$.
Now we proceed with the equations from~\eqref{eq:dNMainPushforwards} with $i=4,5$ in the usual way.
Considering $\tilde x$- and $\tilde y$-components, we derive the equations
$\tilde\chi^{4x}(T)=X^1$, $\tilde\chi^{5x}(T)=tX^1$,
$\tilde\rho^{4y}(T)=Y^2$, $\tilde\rho^{5y}(T)=tY^2$.
Therefore, $(X^1)^3=(Y^2)^3=U^1T_t$, and thus $X^1=Y^2=F:=CT_t^{1/3}\ne0$ and $U^1=C^3$
with constant $C:=(U^1)^{1/3}\ne0$.
The optimal way to obtain the rest of the equations of this step,
\begin{gather*}
U^0_x=-\frac{F_t}{2T_t}(Fx+X^0)^2+\tilde\alpha^{4x}(T)(Fx+X^0)+\tilde\beta^{4x}(T)(Fy+Y^0)+\tilde\sigma^{4x}(T),\\
U^0_y=-\frac{F_t}{2T_t}(Fy+Y^0)^2+\tilde\alpha^{4y}(T)(Fx+X^0)+\tilde\beta^{4y}(T)(Fy+Y^0)+\tilde\sigma^{4y}(T),
\end{gather*}
is to consider the $\tilde u$-components for $i=4$ and $z\in\{x,y\}$.
The compatibility condition of these equations is $U^0_{xy}=U^0_{yx}$, giving $\tilde\beta^{4x}=\tilde\alpha^{4y}$.
Their joint integration implies the representation
\begin{gather*}
U^0=-\frac{F^2F_t}{6T_t}(x^3+y^3)+W^3x^2+W^4xy+W^5y^2+W^1x+W^2y+W^0,
\end{gather*}
where $W^0$, \dots, $W^5$ are smooth functions of~$t$ that are not constrained on this step.

There are two ways for further computations.

The first way is to implement the standard procedure for $i=6$, which gives the equations
$\lambda^6=1$, $\tilde\chi^6(T)=-X^0$, $\tilde\rho^6(T)=-Y^0$,
$W^3-\frac12\tilde\chi^6_{\tilde t}(T)F^2=0$,
$W^4=0$ and
$W^5-\frac12\tilde\rho^6_{\tilde t}(T)F^2=0$.
Hence
\[W^3=-\frac12C^2T_t^{-1/3}X^0_t,\quad W^5=-\frac12C^2T_t^{-1/3}Y^0_t,\]
and we do not need to use the megaideal~$\mathfrak m_1$.

The second way involves certain equations following
from the equations with $i=7,8$ in~\eqref{eq:dNMainPushforwards}.
The corresponding computation is a bit more complicated than the above one
and involves the megaideal~$\mathfrak m_1$ instead of~$\mathfrak m_2$.
Nevertheless, as shown below in Section~\ref{sec:DefSubalgsForPointTrans},
the subalgebra of~$\mathfrak g$ underlying this way has a nicer property
than the analogous subalgebra for the first way.

Thus, up to composing with the transformation~$\mathscr J$,
the transformation~$\Phi$ has the form~\eqref{eq:dNPointSymForm} declared in the statement of the theorem.
It is straightforward to check by the direct substitution that
any point transformation of this form is a point symmetry of the equation~\eqref{eq:dN}.

(ii) To prove the equality $G_{\rm c}=G_{(1)}$,
we apply the same modification of the megaideal-based method,
where the maximal Lie invariance algebra~$\mathfrak g$ is replaced with
the contact invariance algebra~$\mathfrak g_{\rm c}=\mathfrak g_{(1)}$,
and the point transformation~$\Phi$ is replaced with a contact transformation
\begin{subequations}\label{eq:dNGenContactTrans}
\begin{gather}\label{eq:dNGenContactTransA}
\Psi\colon\ (\tilde t,\tilde x,\tilde y,\tilde u,\tilde u_{\tilde t},\tilde u_{\tilde x},\tilde u_{\tilde y})
=(Z^t,Z^x,Z^y,U,U^t,U^x,U^y).
\end{gather}
In~$\Psi$, the tuple on the right-hand side is a tuple of smooth functions of $(t,x,y,u,u_t,u_x,u_y)$
with nonvanishing Jacobian,
which additionally satisfies the contact condition
\begin{gather}\label{eq:dNGenContactTransB}
(Z^\mu_\nu+Z^\mu_uu_\nu)U^\mu=U_\nu+U_uu_\nu,\quad
Z^\mu_{u_\nu}U^\mu=U_{u_\nu}.
\end{gather}
\end{subequations}
Here and in what follows the indices~$\mu$ and~$\nu$
run through the set $\{t,x,y\}$,
and we assume summation for repeated indices.
If the transformation~$\Psi$ is a contact symmetry of the equation~\eqref{eq:dN},
then $\Psi_*\mathfrak m_{j(1)}\subseteq\mathfrak m_{j(1)}$, $j=1,\dots,6$,
where $\Psi_*$ denotes the pushforward of contact vector fields by~$\Psi$.
To the counterpart of the collection of equations~\eqref{eq:dNMainPushforwards} for the contact case,
we apply the procedure that is completely analogous to that described after~\eqref{eq:dNMainPushforwards}.
From the equation with $i=1$, we in particular derive the constraints $Z^\mu_u=0$.
Then the equations with $i=2$, $(i,z)=(3,x)$ and $(i,z)=(3,y)$
imply the constraints $Z^\mu_{u_t}=0$, $Z^\mu_{u_x}=0$ and $Z^\mu_{u_y}=0$, respectively.
In view of the contact condition, this means that $U_{u_\nu}=0$ as well,
and thus the contact transformation~$\Psi$ is the first prolongation
of a point transformation in the space with the coordinates $(t,x,y,u)$.\looseness=1
\end{proof}

\begin{corollary}\label{cor:dNDiscrSyms}
The identity component~$G_{\rm id}$ of the point-symmetry pseudogroup~$G$ of the dispersionless Nizhnik equation~\eqref{eq:dN}
consists of the transformations of the form~\eqref{eq:dNPointSymForm} with $T_t>0$ and $C>0$.
A complete list of discrete point symmetry transformations of the equation~\eqref{eq:dN}
that are independent up to composing with each other and with transformations from~$G_{\rm id}$
is exhausted by three commuting involutions, which can be chosen to be
the permutation~$\mathscr J$ of the variables~$x$ and~$y$, $(\tilde t,\tilde x,\tilde y,\tilde u)=(t,y,x,u)$,
and two transformations~$\mathscr I^{\rm i}$ and~$\mathscr I^{\rm s}$ alternating the signs of $(t,x,y)$ and of $(x,y,u)$, respectively,
$\mathscr I^{\rm i}\colon(\tilde t,\tilde x,\tilde y,\tilde u)=(-t,-x,-y,u)$,
$\mathscr I^{\rm s}\colon(\tilde t,\tilde x,\tilde y,\tilde u)=(t,-x,-y,-u)$.
\end{corollary}

Therefore, the quotient group~$G/G_{\rm id}$ of the pseudogroup~$G$
with respect to its identity component~$G_{\rm id}$ is isomorphic to the group $\mathbb Z_2\times\mathbb Z_2\times\mathbb Z_2$.

\begin{remark}\label{rem:dNinvolutions}
In Corollary~\ref{cor:dNDiscrSyms} and analogous Corollaries~\ref{cor:dNLaxPairDiscrSyms} and~\ref{cor:dNSystemDiscrSyms} below,
we assume that each of the listed discrete transformations is defined on the entire corresponding underlying space
and absorbs its restrictions.
The claims on the structure of the related discrete groups after the indicated corollaries are rigorous only
under this assumption.
The same assumption should be imposed on the elements of~$\mathfrak H$
in the proof of Theorem~\ref{thm:dNSystemPointSymPseudogroup} for~$\mathfrak H$ to be a group.
\end{remark}

\section{Defining subalgebras for point transformations}\label{sec:DefSubalgsForPointTrans}

In the course of applying the megaideal-based method
in the proof of Theorem~\ref{thm:dNPointSymPseudogroup},
we expand the basic condition $\Phi_*Q\in\mathfrak m$,
where $\mathfrak m$ is the minimal megaideal of~$\mathfrak g$ containing the vector field~$Q$,
only for 11 (linearly independent) vector fields from the algebra~$\mathfrak g$,
which is infinite-dimensional.
These vector fields span an 11-dimensional subalgebra~$\mathfrak s$ of~$\mathfrak g$.
In fact, we separately consider two subalgebras of~$\mathfrak s$,
\begin{gather*}
\mathfrak s_1=\langle Z(1),Z(t),R^x(1),R^y(1),P^x(1),P^y(1),P^x(t),P^y(t),D^{\rm s}\rangle,\\
\mathfrak s_2=\langle Z(1),Z(t),R^x(1),R^y(1),P^x(1),P^y(1),P^x(t),P^y(t),D^t(1),D^t(t)\rangle.
\end{gather*}
Moreover, in the other example of applying the same modification of the megaideal-based method in~\cite{malt2021a},
which was computing the point-symmetry group of the Boiti--Leon--Pempinelli system,
the selected linearly independent vector fields also span a subalgebra of the corresponding maximal Lie invariance algebra.
Nevertheless, it is still not well understood how common this phenomenon is.
That subalgebra has the following interesting property:

\begin{definition}\label{def:DefiningSubalg}
We call a proper subalgebra~$\mathfrak s$ of a Lie algebra~$\mathfrak a$ of vector fields
a \emph{subalgebra defining the diffeomorphisms that stabilize~$\mathfrak a$}
if the conditions $\Phi_*\mathfrak a\subseteq\mathfrak a$ and $\Phi_*\mathfrak s\subseteq\mathfrak a$
for local diffeomorphisms~$\Phi$ in the underlying space are equivalent.
\end{definition}

The implication $\Phi_*\mathfrak a\subseteq\mathfrak a\Rightarrow\Phi_*\mathfrak s\subseteq\mathfrak a$ is obvious,
whereas the inverse implication does not hold in general, and its verification requires nontrivial computations.

For a better understanding of the general foundations of the algebraic method in question,
it is instructive to check whether the subalgebras~$\mathfrak s_1$ and~$\mathfrak s_2$ are of the kind
introduced in Definition~\ref{def:DefiningSubalg}.

\begin{theorem}\label{thm:dNDefSubalgs}
The subalgebra~$\mathfrak s_2$ of the algebra~$\mathfrak g$ defines the diffeomorphisms that stabilize~$\mathfrak g$,
whereas the subalgebra~$\mathfrak s_1$
and even the subalgebra~$\bar{\mathfrak s}_1:=\mathfrak s_1+\langle D^t(1)\rangle$ does not have this property.
\end{theorem}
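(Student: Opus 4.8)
The plan is to verify the two parts of Theorem~\ref{thm:dNDefSubalgs} separately, but using the same general strategy. In each case I must decide whether the weaker stabilization condition $\Phi_*\mathfrak s\subseteq\mathfrak g$ already forces $\Phi_*\mathfrak g\subseteq\mathfrak g$, equivalently (by the proof of Theorem~\ref{thm:dNCompletePointSymGroup}) whether it forces $\Phi$ to have the form~\eqref{eq:dNPointSymForm} up to composition with~$J$. The natural approach is to rerun the componentwise splitting of the proof of Theorem~\ref{thm:dNCompletePointSymGroup}, but now replacing the megaideal-based conditions $\Phi_*Q^i\in\mathfrak m_j$ with the plain conditions $\Phi_*Q\in\mathfrak g$ for $Q$ ranging over a basis of $\mathfrak s_2$ (resp.\ $\mathfrak s_1$, $\bar{\mathfrak s}_1$). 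Here the right-hand side is the full infinite-dimensional $\mathfrak g$ rather than a small megaideal, so the pushforward of each generator may a~priori be an arbitrary element of~$\mathfrak g$, i.e.\ an arbitrary linear combination $D^t(\tau)+D^{\rm s}\lambda+P^x(\chi)+P^y(\rho)+R^x(\alpha)+R^y(\beta)+Z(\sigma)$ with the parameter functions unknown.

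For the affirmative claim on~$\mathfrak s_2$, I would expand $\Phi_*Q\in\mathfrak g$ for each of the ten generators, pull back componentwise, and exploit that $\mathfrak s_2$ contains the two scaling-type generators $Z(1),Z(t)$, the shifts $R^x(1),R^y(1),P^x(1),P^y(1),P^x(t),P^y(t)$ and crucially the two copies $D^t(1),D^t(t)$ of the time reparametrization. The generators $Z(1),Z(t)$ give $T_u=X_u=Y_u=0$ and linearity of $U$ in $u$ exactly as in the proof of part~(i); the $R^z$ and $P^z$ generators then pin down the affine-in-$(x,y)$ structure of $(X,Y)$ and the quadratic-plus-lower-order structure of $U$; and the presence of both $D^t(1)$ and $D^t(t)$ should supply precisely the relation fixing $T=T(t)$ with $T_t\ne0$ and tying the remaining coefficients together into the shape~\eqref{eq:dNPointSymForm}, since in the original proof the $i=7,8$ equations (the ``second way'') were already shown to complete the computation. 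The main point is that no proper subset of these conditions is invoked: every generator of $\mathfrak s_2$ plays the role its megaideal played before, so the plain condition $\Phi_*\mathfrak s_2\subseteq\mathfrak g$ recovers the entire derivation and hence $\Phi\in G$.

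For the negative claims on~$\mathfrak s_1$ and~$\bar{\mathfrak s}_1$, the strategy is instead to exhibit an explicit local diffeomorphism~$\Phi$ with $\Phi_*\mathfrak s_1\subseteq\mathfrak g$ (resp.\ $\Phi_*\bar{\mathfrak s}_1\subseteq\mathfrak g$) but $\Phi_*\mathfrak g\not\subseteq\mathfrak g$. The key structural observation is that $\mathfrak s_1$ replaces the two $D^t$ generators of $\mathfrak s_2$ by the single scaling $D^{\rm s}$, and $\bar{\mathfrak s}_1$ adds back only one copy $D^t(1)$; in both cases the full $t$-reparametrization freedom of $\mathfrak g$ is not probed, so the constraint $T=T(t)$ need not emerge and one expects extra diffeomorphisms---for instance ones mixing $t$ nonlinearly into the other variables, or introducing functional dependence forbidden in~\eqref{eq:dNPointSymForm}---to survive. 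Concretely I would look for a $\Phi$ for which each of $Z(1),Z(t),R^z(1),P^z(1),P^z(t),D^{\rm s}$ (and $D^t(1)$ for $\bar{\mathfrak s}_1$) still pushes forward into $\mathfrak g$, then check that some other element of $\mathfrak g$, such as $D^t(t)$ or a generic $D^t(\tau)$, fails to, thereby witnessing $\Phi\notin G$.

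The hard part will be the negative direction: constructing the explicit stabilizing-but-not-symmetry diffeomorphisms requires solving the overdetermined system coming from $\Phi_*\mathfrak s_1\subseteq\mathfrak g$ with the right-hand parameter functions left free, which is exactly the cumbersome primitive-version computation the introduction warns is much harder than the megaideal-based one. I expect that because $D^{\rm s}$ only gives a homogeneity (scaling) relation rather than the reparametrization identities, the system will admit a genuinely larger solution family, and the challenge is to pin down its general solution cleanly enough to identify which members violate $\Phi_*D^t(t)\in\mathfrak g$. Isolating the minimal extra generator ($D^t(1)$) and showing it still does not suffice—so that $\bar{\mathfrak s}_1$ also fails—will require the most careful bookkeeping, since one must verify that adding $D^t(1)$ genuinely fails to reintroduce the constraint $T=T(t)$ that only the full pair $D^t(1),D^t(t)$ can enforce.
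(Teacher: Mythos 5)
Your overall strategy---rerunning the componentwise splitting of the proof of Theorem~\ref{thm:dNCompletePointSymGroup} with the plain conditions $\Phi_*Q\in\mathfrak g$, so that each pushforward is an a~priori arbitrary combination as in~\eqref{eq:dNMainPushforwardsMod}---is exactly the paper's approach, and for $\mathfrak s_2$ it can be pushed through. But your structural diagnosis of \emph{where} $\mathfrak s_1$ and $\bar{\mathfrak s}_1$ fail is wrong, and it would send the negative half of the proof down a dead end. You claim that without the pair $D^t(1),D^t(t)$ ``the constraint $T=T(t)$ need not emerge'' and propose to hunt for counterexamples mixing $t$ nonlinearly into the other variables. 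In fact $T_u=0$, $X_u=Y_u=0$, $T_x=T_y=0$ (hence $T=T(t)$ with $T_t\ne0$) and the affine-in-$(x,y)$ form of $(X,Y)$ all follow already from the conditions attached to $Z(1)$, $Z(t)$, $R^z(1)$, $P^z(1)$, $P^z(t)$, which lie in $\mathfrak s_1$: in the paper's proof, $T=T(t)$ comes from the $\tilde t$-components of the $P^z(1),P^z(t)$ equations together with the $Z(1),Z(t)$ equations, not from $D^t(1),D^t(t)$. Consequently no counterexample of the kind you describe exists; the same misattribution also distorts your positive part, where you assign to $D^t(1),D^t(t)$ the role of fixing $T=T(t)$. (A lesser inaccuracy: the $Z(1),Z(t)$ step is not ``exactly as in the proof of part~(i)''---since the pushforwards may now contain $D^t$-, $D^{\rm s}$-, $P$- and $R$-terms, even $T_u=X_u=Y_u=0$ requires the nontrivial contradiction arguments via the inverse function theorem that constitute most of the paper's proof.)

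What the pair $D^t(1),D^t(t)$ actually buys is different. Taking the $\tilde t$-components and the coefficients of~$x$ in the $\tilde x$-components of the equations with $\kappa=7,8$ gives $T_t=\tilde\tau^7(T)$, $F_t/F=\frac13\tilde\tau^7_{\tilde t}(T)+\lambda^7$, $tT_t=\tilde\tau^8(T)$ and $tF_t/F=\frac13t\tilde\tau^8_{\tilde t}(T)+\lambda^8$, whence $t\lambda^7=\lambda^8$ and so $\lambda^7=\lambda^8=0$; this is what locks the spatial scaling factor to $F=CT_t^{1/3}$ and completes the derivation of the form~\eqref{eq:dNPointSymForm}. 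If instead only $D^{\rm s}$ and $D^t(1)$ are available (the case of $\bar{\mathfrak s}_1$), the analogous computation leaves $F_t/F=\frac13T_{tt}/T_t+\lambda^7$ with an \emph{arbitrary constant} $\lambda^7$, i.e.\ $F=CT_t^{1/3}e^{\lambda^7t}$, so the solution set of the determining system properly contains~$G$; since by Theorem~\ref{thm:dNCompletePointSymGroup} the stabilizer of~$\mathfrak g$ coincides with~$G$, any solution with $\lambda^7\ne0$ witnesses that $\bar{\mathfrak s}_1$, and a~fortiori $\mathfrak s_1\subset\bar{\mathfrak s}_1$, does not define the stabilizing diffeomorphisms. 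The counterexamples are thus not wilder transformations but near-symmetries differing from~\eqref{eq:dNPointSymForm} only by the factor $e^{\lambda^7t}$ in the scaling of $x$ and $y$; your plan, as written, looks for violations of constraints that cannot in fact be violated and would never find them.
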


\begin{proof}
We follow the proof of Theorem~\ref{thm:dNPointSymPseudogroup}
and use the same numeration of the selected elements of the algebra~$\mathfrak g$,
but for each basis element~$Q$ of the subalgebra~$\mathfrak s_1$ we employ the condition $\Phi_*Q\in\mathfrak g$
instead of the condition $\Phi_*Q\in\mathfrak m$,
where $\mathfrak m$ is the minimal megaideal of~$\mathfrak g$ containing the vector field~$Q$.
In other words, we replace the equations~\eqref{eq:dNMainPushforwards} with the equations
\begin{gather}\label{eq:dNMainPushforwardsMod}
\Phi_*Q^\kappa=\lambda^\kappa\tilde D^{\rm s}+\tilde D^t(\tilde\tau^\kappa)+\tilde P^x(\tilde\chi^\kappa)+\tilde P^y(\tilde\rho^\kappa)
+\tilde R^x(\tilde\alpha^\kappa)+\tilde R^y(\tilde\beta^\kappa)+\tilde Z(\tilde\sigma^\kappa),
\end{gather}
where
$\lambda^\kappa$ are constants,
$\tilde\tau^\kappa$, $\tilde\chi^\kappa$, $\tilde\rho^\kappa$, $\tilde\alpha^\kappa$, $\tilde\beta^\kappa$ and~$\tilde\sigma^\kappa$
are smooth functions of~$\tilde t$, and
the index~$\kappa$ runs the set $\big\{1,2,3z,4z,5z,6,7,8\mid z\in\{x,y\}\big\}$.

Collecting $\tilde t$-components in the equations with $\kappa=1,2$,
we derive the equations $T_u=\tilde\tau^1(T)$ and $tT_u=\tilde\tau^2(T)$.
Suppose that $T_u\ne0$, and thus $\tilde\tau^1\tilde\tau^2\ne0$.
Recombining the above equations leads to the equation
$t=f(T)$ with $f(\tilde t):=\tilde\tau^2(\tilde t)/\tilde\tau^1(\tilde t)$.
Differentiating it with respect to~$t$ gives $1=f_{\tilde t}(T)T_t$.
Therefore, the derivative $f_{\tilde t}$ does not vanish,
and according to the inverse function theorem, we obtain that $T=T(t)$,
which contradicts the supposition $T_u\ne0$.
Therefore, $T_u=0$ and also $\tilde\tau^1=\tilde\tau^2=0$.

Collecting $\tilde x$- and $\tilde y$-components in the same equations with $\kappa=1,2$
leads to the equations
$ X_u=\lambda^1X+\tilde\chi^1(T)$,
$tX_u=\lambda^2X+\tilde\chi^2(T)$,
$ Y_u=\lambda^1Y+\tilde\rho^1(T)$ and
$tY_u=\lambda^2Y+\tilde\rho^2(T)$,
which can be combined to
$(\lambda^1t-\lambda^2)X+t\tilde\chi^1(T)-\tilde\chi^2(T)=0$ and
$(\lambda^1t-\lambda^2)Y+t\tilde\rho^1(T)-\tilde\rho^2(T)=0$.
Suppose that $(X_u,Y_u)\ne(0,0)$.
Then we can split at least one of the last two equations with respect to
$X$ or~$Y$, respectively.
As a result, we obtain the equation $\lambda^1t-\lambda^2=0$,
which splits further with respect to~$t$ to $\lambda^1=\lambda^2=0$.
Therefore, we also have
$t\tilde\chi^1(T)=\tilde\chi^2(T)$ and
$t\tilde\rho^1(T)=\tilde\rho^2(T)$.
Moreover, $(\tilde\chi^1\tilde\chi^2,\tilde\rho^1\tilde\rho^2)\ne(0,0)$
due to the supposition $(X_u,Y_u)\ne(0,0)$.
Following the consideration of $t$-components, we again derive an equation of the form
$t=f(T)$ with $f_{\tilde t}\ne0$ and obtain in view of the inverse function theorem that $T=T(t)$.
Then we collect $\tilde u$-components in the same equations with $\kappa=1,2$
and derive the equations
\begin{gather*}
 U_u=-\frac12\tilde\chi^1_{\tilde t}(T)X^2-\frac12\tilde\rho^1_{\tilde t}(T)Y^2
+\tilde\alpha^1(T)X+\tilde\beta^1(T)Y+\tilde\sigma^1(T),
\\
tU_u=-\frac12\tilde\chi^2_{\tilde t}(T)X^2-\frac12\tilde\rho^2_{\tilde t}(T)Y^2
+\tilde\alpha^2(T)X+\tilde\beta^2(T)Y+\tilde\sigma^2(T).
\end{gather*}
We subtract the second equation from the first one multiplied by~$t$.
Since $t$, $X$ and~$Y$ are functionally independent,
the equation obtained in this way can be split with respect to~$(X,Y)$,
which in particular results in the equations
$t\tilde\chi^1_{\tilde t}(T)=\tilde\chi^2_{\tilde t}(T)$,
$t\tilde\rho^1_{\tilde t}(T)=\tilde\rho^2_{\tilde t}(T)$.
Pairwise differential consequences of these equations jointly with the equations
$t\tilde\chi^1(T)=\tilde\chi^2(T)$ and
$t\tilde\rho^1(T)=\tilde\rho^2(T)$ are the equations
$\tilde\chi^1=\tilde\chi^2=0$ and $\tilde\rho^1=\tilde\rho^2=0$, respectively.
Therefore, we have the equations $X_u=Y_u=0$,
which contradict the supposition $(X_u,Y_u)\ne(0,0)$.
This is why in fact $X_u=Y_u=0$ as well as
$\lambda^1=\lambda^2=0$,
$\tilde\chi^1=\tilde\chi^2=0$, $\tilde\rho^1=\tilde\rho^2=0$ and $U_u\ne0$.

Under the derived constraints,
the only essential equation that is obtained via collecting $\tilde u$-components
in the equations with $\kappa=1,2$ is
$U_u=\tilde\alpha^1(T)X+\tilde\beta^1(T)Y+\tilde\sigma^1(T)$.

We temporarily jump to the equations with $\kappa=4z,5z$, $z\in\{x,y\}$,
where we only collect $\tilde t$-components on this step, obtaining
$T_z=\tilde\tau^{4z}(T)$, $tT_z=\tilde\tau^{5z}(T)$, and thus $t\tilde\tau^{4z}(T)=\tilde\tau^{5z}(T)$.
Supposing that $T_z\ne0$ for some $z\in\{x,y\}$, we then have $\tilde\tau^{4z}(T)\ne0$ and
$t=f(T)$ with $f=\tilde\tau^{5z}/\tilde\tau^{4z}$.
Using the same arguments as at the beginning of the proof,
we obtain that the function~$T$ depends only on~$t$, which contradicts the supposition $T_z\ne0$.
Hence $T_x=T_y=0$, i.e.,
$T$ is nevertheless a function of~$t$ only, $T=T(t)$ with $T_t\ne0$,
and also $\tilde\tau^{4z}=\tilde\tau^{5z}=0$.

We return to the equations with $\kappa=3z$, $z\in\{x,y\}$,
which we also consider simultaneously.
We successively collect $\tilde t$-, $\tilde x$- and $\tilde y$-components
and split the obtained equations with respect to~$X$ and~$Y$
since the functions~$T$, $X$ and~$Y$ are functionally independent.
This gives the constraints $\tilde\tau^{3z}=\tilde\chi^{3z}=\tilde\rho^{3z}=0$, $\lambda^{3z}=0$.
Then collecting of $\tilde u$-components leads to the constraints
$zU_u=\tilde\alpha^{3z}(T)X+\tilde\beta^{3z}(T)Y+\tilde\sigma^{3z}(T)$.
In view of the above expression for~$U_u$,
this means that $x$ and~$y$ can be represented as linear fractional functions of $(X,Y)$
with coefficients depending on~$T$.
Since the inverse~$\Phi^{-1}$ belongs to the pseudogroup~$G$ as the transformation~$\Phi$ does,
we can permit $(t,x,y)$ and $(T,X,Y)$ in the last claim.
In other words, $X$ and~$Y$ are linear fractional functions of $(x,y)$
with coefficients depending on~$t$, $X={\rm N}^X/{\rm D}$ and $Y={\rm N}^Y/{\rm D}$,
where the numerators and the denominator respectively are
${\rm N}^X=X^1(t)x+X^2(t)y+X^0(t)$, ${\rm N}^Y=Y^1(t)x+Y^2(t)y+Y^0(t)$ and ${\rm D}=K^1(t)x+K^2(t)y+K^0(t)$
for some smooth functions $X^0$, $X^1$, $X^2$, $Y^0$, $Y^1$, $Y^2$, $K^0$, $K^1$ and $K^2$ of~$t$.

Now we consider the equations with $\kappa=4z$, $z\in\{x,y\}$.
The equations $X_z=\lambda^{4z}X+\tilde\chi^{4z}(T)$ and $Y_z=\lambda^{4z}Y+\tilde\rho^{4z}(T)$
which are obtained by successively collecting $\tilde x$- and $\tilde y$-components,
reduce to
\begin{gather}\label{eq:Mu4z}
\begin{split}
&X^1{\rm D}-K^1{\rm N}^X=\lambda^{4x}{\rm N}^X{\rm D}+\tilde\chi^{4x}(T){\rm D}^2,\quad
 X^2{\rm D}-K^2{\rm N}^X=\lambda^{4y}{\rm N}^X{\rm D}+\tilde\chi^{4y}(T){\rm D}^2,\\
&Y^1{\rm D}-K^1{\rm N}^Y=\lambda^{4x}{\rm N}^Y{\rm D}+\tilde\rho^{4x}(T){\rm D}^2,\quad
 Y^2{\rm D}-K^2{\rm N}^Y=\lambda^{4y}{\rm N}^Y{\rm D}+\tilde\rho^{4y}(T){\rm D}^2.
\end{split}
\end{gather}
Suppose that $(K^1,K^2)\ne(0,0)$.
Then $X^1K^2-X^2K^1\ne0$ or $Y^1K^2-Y^2K^1\ne0$
since otherwise the Jacobian of the functions~$T$, $X$ and~$Y$ is zero.
Recall that the point transformation~$\mathscr J$: $\tilde t=t$, $\tilde x=y$, $\tilde y=x$, $\tilde u=u$,
which just permutes~$x$ and~$y$, is an obvious point symmetry of the equation~\eqref{eq:dN}.
This is why we can assume without loss of generality that $X^1K^2-X^2K^1\ne0$.
Hence the Jacobian of the functions~${\rm N}^X$ and~${\rm D}$ with respect to $(x,y)$ is nonzero,
and we can split the first two equations in~\eqref{eq:Mu4z} with respect to $({\rm N}^X,{\rm D})$.
As a result, we in particular derive the constraints $X^1=X^2=0$,
which contradict the inequality $X^1K^2-X^2K^1\ne0$.
Therefore, $K^1=K^2=0$, i.e., the functions~$X$ and~$Y$ are affine in $(x,y)$ with coefficients depending on~$t$.
Re-denoting $X^k/K^0$ by~$X^k$ and $Y^k/K^0$ by~$Y^k$, $k=0,1,2$,
we can completely follow the part with $i=4,5$ in item (i) of the proof of Theorem~\ref{thm:dNPointSymPseudogroup}.

The computation for $\kappa=6$ and further is again different.

Taking only the $\tilde t$-components and
the coefficients of~$x$ in the $\tilde x$-components or, equivalently,
the coefficients of~$y$ in the $\tilde y$-components in~\eqref{eq:dNMainPushforwardsMod} with $\kappa=7$ and with $\kappa=8$,
we respectively obtain
$T_t=\tilde\tau^7(T)$, \smash{$F_t/F=\frac13T_{tt}/T_t+\lambda^7$} and
$tT_t=\tilde\tau^8(T)$, \smash{$tF_t/F=\frac13tT_{tt}/T_t+\lambda^8$}.
Combining the second and fourth equations to exclude~$F_t/F$ gives $t\lambda^7=\lambda^8$, i.e., $\lambda^7=\lambda^8=0$,
and thus we can complete the proof for the subalgebra~$\mathfrak s_2$
as in the second way in item (i) on the proof of Theorem~\ref{thm:dNPointSymPseudogroup}.
Therefore, the condition $\Phi_*\mathfrak s_2\subseteq\mathfrak g$ implies $\Phi_*\mathfrak g\subseteq\mathfrak g$.
In other words, the subalgebra~$\mathfrak s_2$ defines the diffeomorphisms that stabilize~$\mathfrak g$.

If we use the equations~\eqref{eq:dNMainPushforwardsMod} with $\kappa=6,7$ instead of those with $\kappa=7,8$,
then we again obtain the equations
\[
\frac{F_t}F=\frac{T_{tt}}{3T_t}+\lambda^7,\quad
W^3=-\frac{F^2}{2T_{t}}X^0_t,\quad W^4=0,\quad
W^5=-\frac{F^2}{2T_{t}}Y^0_t
\]
for transformation parameters, and only these equations and their differential consequences.
Here the parameter~$\lambda^7$ is an arbitrary constant,
and thus the set of point transformations~$\Phi$ satisfying the condition $\Phi_*\mathfrak s_1\subseteq\mathfrak g$
properly contains the group~$G$,
which coincides, in view of Theorem~\ref{thm:dNPointSymPseudogroup},
with the set of point transformations~$\Phi$ satisfying the condition $\Phi_*\mathfrak g\subseteq\mathfrak g$.
Therefore, the subalgebra~$\bar{\mathfrak s}_1$
does not define completely the diffeomorphisms that stabilize~$\mathfrak g$.
Then the subalgebra~$\mathfrak s_1$ as that contained in~$\bar{\mathfrak s}_1$
all the more has the same property.
\end{proof}

\section{Defining subalgebras for contact transformations}\label{sec:DefSubalgsForContactTrans}

Subalgebras defining the diffeomorphisms that stabilize the entire corresponding algebras
can also be considered for algebras of contact vector fields and local contact diffeomorphisms.
The transition from the point case to the contact one complicates the problem
due to extending the space coordinatized by the independent and dependent variables
with the first-order jet variables and thus essentially increasing the total number of coordinates.

Theorem~\ref{thm:dNDefSubalgs} implies
that the first prolongation~$\mathfrak s_{1(1)}$
of the subalgebra~$\mathfrak s_1$ of~$\mathfrak g$
does not define local contact diffeomorphisms that stabilize~$\mathfrak g_{(1)}$
since the subalgebra~$\mathfrak s_1$ itself does not define
local diffeomorphisms that stabilize~$\mathfrak g$.
It is not clear whether the first prolongation~$\mathfrak s_{2(1)}$
of the subalgebra~$\mathfrak s_2$ of~$\mathfrak g$ differs from~$\mathfrak s_{1(1)}$
in the sense of defining local contact diffeomorphisms that stabilize~$\mathfrak g_{(1)}$.
To answer this question it is necessary to integrate
cumbersome parameterized nonlinear overdetermined systems of differential equations,
and its solution requires more sophisticated techniques
than those used in the proofs of Theorems~\ref{thm:dNPointSymPseudogroup} and~\ref{thm:dNDefSubalgs}.
The latter techniques are still efficient only if we extend the algebra to be tested.

\begin{theorem}\label{thm:dNContDefSubalgs}
A contact transformation~$\Psi$ with the basic space $\mathbb R^3_{t,x,y}\times\mathbb R_u$
satisfies the condition $\Psi_*\mathfrak s_{3(1)}\subseteq\mathfrak g_{(1)}$ for the subalgebra
\[
\mathfrak s_3=\langle Z(1),Z(t),Z(t^2),R^x(1),R^y(1),R^x(t),R^y(t)\rangle
\]
of the algebra~$\mathfrak g$
only if it is the first prolongation of a point transformation in the above space.
\end{theorem}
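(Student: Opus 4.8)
The plan is to reproduce the scheme of the proofs of Theorems~\ref{thm:dNCompletePointSymGroup} and~\ref{thm:dNDefSubalgs}, but with the target megaideals replaced by the entire prolonged algebra~$\mathfrak g_{(1)}$, since the hypothesis only guarantees $\Psi_*\mathfrak s_{3(1)}\subseteq\mathfrak g_{(1)}$ rather than membership in a small megaideal. First I would prolong the seven generators of~$\mathfrak s_3$ to first order. From $Z(\sigma)_{(1)}=\sigma\p_u+\sigma_t\p_{u_t}$ and $R^z(\gamma)_{(1)}=\gamma z\p_u+\gamma_tz\p_{u_t}+\gamma\p_{u_z}$ one gets
\[
Z(1)_{(1)}=\p_u,\quad Z(t)_{(1)}=t\p_u+\p_{u_t},\quad Z(t^2)_{(1)}=t^2\p_u+2t\p_{u_t},
\]
\[
R^z(1)_{(1)}=z\p_u+\p_{u_z},\quad R^z(t)_{(1)}=tz\p_u+z\p_{u_t}+t\p_{u_z},\quad z\in\{x,y\}.
\]
All of these are vertical fields, and the coefficients of their $\p_u$-, $\p_{u_t}$-, $\p_{u_x}$- and $\p_{u_y}$-parts are the functions $1,t,t^2,x,y,tx,ty$; this controlled redundancy is exactly what will make the linear splitting below effective, and it is the reason the tested subalgebra has to be the enlarged~$\mathfrak s_3$ rather than~$\mathfrak s_{2(1)}$.

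For each generator~$Q$ I would impose $\Psi_*Q\in\mathfrak g_{(1)}$, expand the identity componentwise, pull the result back by~$\Psi$ and split it with respect to the functionally independent target coordinates, exactly as after~\eqref{eq:dNMainPushforwardsMod}. Writing the vertical part of~$Q$ as $(\eta,\eta^t,\eta^x,\eta^y)$, the $\tilde t$-component reads $\eta Z^t_u+\eta^tZ^t_{u_t}+\eta^xZ^t_{u_x}+\eta^yZ^t_{u_y}=\tilde\tau(Z^t)$, where $\tilde\tau$ is the $D^t$-parameter of the image and depends on $\tilde t=Z^t$ only. Regarding $Z^t_u,Z^t_{u_t},Z^t_{u_x},Z^t_{u_y}$ as unknowns, the seven generators give an overdetermined linear system whose coefficient vectors span~$\mathbb R^4$ at each point; solving it and then imposing compatibility of the mixed partial derivatives of~$Z^t$ (for instance $\p_{u_t}Z^t_u=\p_uZ^t_{u_t}$) forces the corresponding $D^t$-parameters to be constant multiples of one another. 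The quadratic coefficient contributed by $Z(t^2)_{(1)}$ then yields, upon differentiating the resulting identity with respect to~$t$ and to~$u_t$, that $Z^t_{u_t}=0$ and hence $Z^t_u=Z^t_{u_x}=Z^t_{u_y}=0$ and $\tilde\tau\equiv0$ for all seven images. Thus $Z^t$ is independent of the jet variables and of~$u$.

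Next I would pass to the $\tilde x$- and $\tilde y$-components. Because $\tilde\tau\equiv0$ for every image, the $\tilde x$-component becomes $\eta Z^x_u+\eta^tZ^x_{u_t}+\eta^xZ^x_{u_x}+\eta^yZ^x_{u_y}=\lambda Z^x+\tilde\chi(Z^t)$, where $\lambda$ is the (constant) $D^{\rm s}$-coefficient and $\tilde\chi$ the $P^x$-parameter of the image, with the analogous equation for~$Z^y$. Since $Z^t$ no longer depends on the jet variables, differentiating the $Z(t^2)$-consistency relation with respect to~$u_t$ gives $(2t\lambda^2-t^2\lambda^1-\lambda^3)Z^x_{u_t}=0$, the superscripts labelling the generators $Z(1),Z(t),Z(t^2)$; as the bracket is a nonzero polynomial in~$t$ unless $\lambda^1=\lambda^2=\lambda^3=0$, this forces $Z^x_{u_t}=0$. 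The generators $R^x(t)_{(1)}$ and $R^y(t)_{(1)}$ play the same role for $Z^x_{u_x}$ and $Z^x_{u_y}$, producing analogous polynomial-in-$t$ identities, and likewise for~$Z^y$; carrying this through gives $Z^x_{u_\nu}=Z^y_{u_\nu}=0$ for $\nu\in\{t,x,y\}$.

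Together with the previous step this yields $Z^\mu_{u_\nu}=0$ for all $\mu,\nu\in\{t,x,y\}$, whereupon the second contact condition in~\eqref{eq:dNGenContactTransB}, namely $Z^\mu_{u_\nu}U^\mu=U_{u_\nu}$, immediately gives $U_{u_\nu}=0$. Hence $Z^t$, $Z^x$, $Z^y$ and~$U$ depend only on $(t,x,y,u)$, so $\Psi$ is the first prolongation of the point transformation $(t,x,y,u)\mapsto(Z^t,Z^x,Z^y,U)$, as asserted. I expect the main obstacle to be the $\tilde x$- and $\tilde y$-component analysis in the degenerate branches where the constants $\lambda^a$ (and then the $P^z$-type parameters) vanish: there the polynomial splitting loses its force, and one must bring in the $\tilde u$-components of the pushforward equations---whose right-hand sides carry the parameters $\tilde\alpha^a,\tilde\beta^a,\tilde\sigma^a$ together with the quadratic terms in~$Z^x$ and~$Z^y$---and combine them with the first contact condition to close the system. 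This coupling of the base-coordinate equations with the $\tilde u$-component and the contact conditions, all involving unknown parameter functions of the unknown~$Z^t$, is what makes the present computation markedly heavier than those in Theorems~\ref{thm:dNCompletePointSymGroup} and~\ref{thm:dNDefSubalgs}.
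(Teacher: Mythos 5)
Your proposal follows the paper's own route: prolong the seven generators of~$\mathfrak s_3$ (your prolongation formulas are correct), expand the conditions $\Psi_*\mathscr Q^i_{(1)}\in\mathfrak g_{(1)}$ componentwise, pull back by~$\Psi$, and exploit the quadratic-in-$t$ structure coming from the triple $Z(1),Z(t),Z(t^2)$ and the linear-in-$t$ structure coming from the pairs $R^z(1),R^z(t)$. Your treatment of the $\tilde t$-components and of the branches where the coefficients~$\lambda^a$ do not all vanish is sound. The genuine gap is that the branch you defer as ``the main obstacle'', namely $\lambda^1=\lambda^2=\lambda^3=0$ (and analogously $\lambda^{4z}=\lambda^{5z}=0$ for the $R$-pairs), is not an exceptional side case: under the contradiction hypothesis that some of $Z^x_u,Z^x_{u_t},Z^y_u,Z^y_{u_t}$ is nonzero, the splitting \emph{forces} exactly this branch together with the identities~\eqref{eq:dNS3SystemForChiRhoA}, and since then $Z^x_u=\tilde\chi^1(Z^t)$, $Z^x_{u_t}=\tilde\chi^2(Z^t)-t\tilde\chi^1(Z^t)$, etc., the parameters $(\tilde\chi^1,\tilde\chi^2,\tilde\rho^1,\tilde\rho^2)$ cannot all vanish. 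Refuting this situation is where essentially all of the work in the paper's proof lies: one first derives $T=T(t)$ from~\eqref{eq:dNS3SystemForChiRhoA}, then collects the $u$- and $u_t$-components of the same pushforward relations to obtain that the first and second $\tilde t$-derivatives of $\tilde\chi^a,\tilde\rho^a$ satisfy the same quadratic-in-$t$ identities, namely~\eqref{eq:dNS3SystemForChiRhoB} and~\eqref{eq:dNS3SystemForChiRhoC}, and finally takes the specific differential consequences $\p_t\eqref{eq:dNS3SystemForChiRhoA}-T_t\eqref{eq:dNS3SystemForChiRhoB}$ and $\p_t^{\,2}\eqref{eq:dNS3SystemForChiRhoA}-(2T_t\p_t+T_{tt})\eqref{eq:dNS3SystemForChiRhoB}+T_t^{\,2}\eqref{eq:dNS3SystemForChiRhoC}$, which collapse to $\tilde\chi^a=\tilde\rho^a=0$ --- the desired contradiction. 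An analogous two-level argument (values plus first derivatives of the parameters) closes the $R$-pair branch. Without this mechanism, or a substitute for it, the theorem is unproven in precisely the case that carries its content.

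A secondary point: your proposed tool for closing that branch is off target. The first contact condition in~\eqref{eq:dNGenContactTransB} is not needed there --- the pushforward relations~\eqref{eq:dNContactPushforwards} alone suffice, as just described --- and the contact condition enters only at the very end, exactly in the way you do use it, to convert $Z^\mu_{u_\nu}=0$ into $U_{u_\nu}=0$. So the missing idea is not ``couple the base equations with the contact condition'' but ``the parameter functions, their first derivatives and their second derivatives all satisfy one and the same polynomial identity in~$t$, and cross-differentiating these identities annihilates the parameters''. This is also why the enlarged subalgebra~$\mathfrak s_3$ (with $Z(t^2)$ included) is needed: the second-derivative level~\eqref{eq:dNS3SystemForChiRhoC} is unavailable from a two-element family alone.
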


\begin{proof}
Suppose that a contact transformation~$\Psi$ with the basic space $\mathbb R^3_{t,x,y}\times\mathbb R_u$,
which is of the general form~\eqref{eq:dNGenContactTrans},
satisfies the condition $\Psi_*\mathfrak s_{3(1)}\subseteq\mathfrak g_{(1)}$.
For convenience, we re-denote the basis elements of~$\mathfrak s_3$ as
\begin{gather*}
\mathscr Q^1:=Z(1),\quad
\mathscr Q^2:=Z(t),\quad
\mathscr Q^3:=Z(t^2),\\
\mathscr Q^4:=R^x(1),\quad
\mathscr Q^5:=R^y(1),\quad
\mathscr Q^6:=R^x(t),\quad
\mathscr Q^7:=R^y(t).
\end{gather*}
Then we expand the condition $\Psi_*\mathfrak s_{3(1)}\subseteq\mathfrak g_{(1)}$ to
\begin{gather}\label{eq:dNContactPushforwards}
\Psi_*\mathscr Q^i_{(1)}=\lambda^i\tilde D^{\rm s}_{(1)}+\tilde D^t_{(1)}(\tilde\tau^i)
+\tilde P^x_{(1)}(\tilde\chi^i)+\tilde P^y_{(1)}(\tilde\rho^i)
+\tilde R^x_{(1)}(\tilde\alpha^i)+\tilde R^y_{(1)}(\tilde\beta^i)+\tilde Z_{(1)}(\tilde\sigma^i),
\end{gather}
where
$\lambda^i$ are constants,
$\tilde\tau^i$, $\tilde\chi^i$, $\tilde\rho^i$, $\tilde\alpha^i$, $\tilde\beta^i$ and~$\tilde\sigma^i$
are smooth functions of~$\tilde t$, and
the index~$i$ runs from~1 to~7.

Collecting $t$-components in the equations~\eqref{eq:dNContactPushforwards} with $i=1,2,3$,
we derive the equations
\[
T_u=\tilde\tau^1(T),\quad
tT_u+T_{u_t}=\tilde\tau^2(T),\quad
t^2T_u+2tT_{u_t}=\tilde\tau^3(T),
\]
whose algebraic consequence is the equation $\tilde\tau^1(T)t^2-2\tilde\tau^2(T)t+\tilde\tau^3(T)=0$.
Suppose that $(\tilde\tau^1,\tilde\tau^2)\ne(0,0)$.
Then the last equation implies that $t=f(T)$, and, similarly to the proof of Theorem~\ref{thm:dNDefSubalgs},
we successively have $T=T(t)$, $\tilde\tau^1=0$ and $\tilde\tau^2=0$, which contradicts the supposition.
Hence $\tilde\tau^1=\tilde\tau^2=0$ and $T_u=T_{u_t}=0$, and thus $\tilde\tau^3=0$ as well.

The next step is to collect  $x$- and $y$-components in the same equations with $i=1,2,3$.
It results in the equations
\begin{gather*}
X_u=\lambda^1X+\tilde\chi^1(T),\quad
tX_u+X_{u_t}=\lambda^2X+\tilde\chi^2(T),\quad
t^2X_u+2tX_{u_t}=\lambda^3X+\tilde\chi^3(T),
\\
Y_u=\lambda^1Y+\tilde\rho^1(T),\quad
tY_u+Y_{u_t}=\lambda^2Y+\tilde\rho^2(T),\quad
t^2Y_u+2tY_{u_t}=\lambda^3Y+\tilde\rho^3(T).
\end{gather*}
We separately combine the equations in each row to exclude derivatives of~$X$ and~$Y$,
\begin{gather*}
(t^2\lambda^1+2t\lambda^2-\lambda^3)X+t^2\tilde\chi^1(T)+2t\tilde\chi^2(T)-\tilde\chi^3(T)=0,\\
(t^2\lambda^1+2t\lambda^2-\lambda^3)Y+t^2\tilde\rho^1(T)+2t\tilde\rho^2(T)-\tilde\rho^3(T)=0.
\end{gather*}
Suppose that $(X_u,X_{u_t},Y_u,Y_{u_t})\ne(0,0,0,0)$.
Then, we can split the last system with respect to~$X$ and~$Y$,
which leads to the equations
$t^2\lambda^1+2t\lambda^2-\lambda^3=0$,
\begin{subequations}\label{eq:dNS3SystemForChiRho}
\begin{gather}\label{eq:dNS3SystemForChiRhoA}
t^2\tilde\chi^1(T)+2t\tilde\chi^2(T)-\tilde\chi^3(T)=0,\quad
t^2\tilde\rho^1(T)+2t\tilde\rho^2(T)-\tilde\rho^3(T)=0.
\end{gather}
The first equation means that $\lambda^1=\lambda^2=\lambda^3=0$,
and hence $(\tilde\chi^1,\tilde\chi^2,\tilde\rho^1,\tilde\rho^2)\ne(0,0,0,0)$.
Following the above consideration of $t$-components, we obtain that
the function~$T$ depends only on~$t$, $T=T(t)$.
We continue the analysis of the equations~\eqref{eq:dNContactPushforwards} with $i=1,2,3$, collecting $u$-components.
This gives the equations
\begin{gather*}
U_u=
-\frac12\tilde\chi^1_{\tilde t}(T)X^2
-\frac12\tilde\rho^1_{\tilde t}(T)Y^2
+\tilde\alpha^1(T)X+\tilde\beta^1(T)Y+\tilde\sigma^1(T),
\\
tU_u+U_{u_t}=
-\frac12\tilde\chi^2_{\tilde t}(T)X^2
-\frac12\tilde\rho^2_{\tilde t}(T)Y^2
+\tilde\alpha^2(T)X+\tilde\beta^2(T)Y+\tilde\sigma^2(T),
\\
t^2U_u+2tU_{u_t}=
-\frac12\tilde\chi^3_{\tilde t}(T)X^2
-\frac12\tilde\rho^3_{\tilde t}(T)Y^2
+\tilde\alpha^3(T)X+\tilde\beta^3(T)Y+\tilde\sigma^3(T).
\end{gather*}
We linearly combine the first, the second and the third equations with coefficients~$t^2$, $-2t$ and~$1$, respectively.
We can split the obtained algebraic consequence of these equations with respect to~$X$ and~$Y$
since $T$, $X$ and $Y$ are functionally independent, $T$ depends on~$t$ only
and thus $t$, $X$ and $Y$ are functionally independent.
As a result, we derive the equations
\begin{gather}\label{eq:dNS3SystemForChiRhoB}
t^2\tilde\chi^1_{\tilde t}(T)+2t\tilde\chi^2_{\tilde t}(T)-\tilde\chi^3_{\tilde t}(T)=0,\quad
t^2\tilde\rho^1_{\tilde t}(T)+2t\tilde\rho^2_{\tilde t}(T)-\tilde\rho^3_{\tilde t}(T)=0.
\end{gather}
In view of them, the analogous consideration of $u_t$-components in the equations with $i=1,2,3$
gives the equations
\begin{gather}\label{eq:dNS3SystemForChiRhoC}
t^2\tilde\chi^1_{\tilde {tt}}(T)+2t\tilde\chi^2_{\tilde {tt}}(T)-\tilde\chi^3_{\tilde {tt}}(T)=0,\quad
t^2\tilde\rho^1_{\tilde {tt}}(T)+2t\tilde\rho^2_{\tilde {tt}}(T)-\tilde\rho^3_{\tilde {tt}}(T)=0.
\end{gather}
\end{subequations}
We construct, separately for the equations with respect to~$\chi$ and for the equations with respect to~$\rho$,
the differential consequences of the system~\eqref{eq:dNS3SystemForChiRho}
that have the structures
$\p_t\eqref{eq:dNS3SystemForChiRhoA}-T_t\eqref{eq:dNS3SystemForChiRhoB}$ and
$\p_t^{\,2}\eqref{eq:dNS3SystemForChiRhoA}-(2T_t\p_t+T_{tt})\eqref{eq:dNS3SystemForChiRhoB}+T_t^{\,2}\eqref{eq:dNS3SystemForChiRhoC}$.
After the additional division by~2, these differential consequences take the form
$t\tilde\chi^1(T)+\tilde\chi^2(T)=0$, $t\tilde\rho^1(T)+\tilde\rho^2(T)=0$,
$\tilde\chi^1(T)=0$ and $\tilde\rho^1(T)=0$
and implies, jointly with~\eqref{eq:dNS3SystemForChiRhoA} that
$\tilde\chi^1=\tilde\chi^2=\tilde\chi^3=0$ and $\tilde\rho^1=\tilde\rho^2=\tilde\rho^3=0$,
which contradicts the supposition $(X_u,X_{u_t},Y_u,Y_{u_t})\ne(0,0,0,0)$.
Hence $X_u=Y_u=X_{u_t}=Y_{u_t}=0$.

The next step is to collect $t$-components in the equations with $i=4,5$.
It gives the equations $T_{u_z}=\tilde\tau^{4z}(T)$ and $tT_{u_z}=\tilde\tau^{5z}(T)$ with $z\in\{x,y\}$.
Suppose that $T_{u_z}\ne0$.
Then, similarly to the beginning of the proof of Theorem~\ref{thm:dNDefSubalgs}
we again derive that $T=T(t)$ and thus $T_{u_z}=0$, which contradicts the supposition.
This implies $T_{u_z}=0$ and $\tilde\tau^{4z}=\tilde\tau^{5z}=0$ as well.

Collecting $x$- and $y$-components in the same equations with $i=4,5$ leads to the equations
\begin{gather*}
X_{u_z}=\lambda^{4z}X+\tilde\chi^{4z}(T),\quad tX_{u_z}=\lambda^{5z}X+\tilde\chi^{5z}(T),\\
Y_{u_z}=\lambda^{4z}Y+\tilde\rho^{4z}(T),\quad tY_{u_z}=\lambda^{5z}Y+\tilde\rho^{5z}(T).
\end{gather*}
They are combined to
$(t\lambda^{4z}-\lambda^{5z})X+t\tilde\chi^{4z}-\tilde\chi^{5z}=0$ and
$(t\lambda^{4z}-\lambda^{5z})Y+t\tilde\rho^{4z}-\tilde\rho^{5z}=0$.
Suppose that $(X_{u_x},Y_{u_x},X_{u_y},Y_{u_y})\ne(0,0,0,0)$.
Then we can successively split these combinations with respect to~$X$ and~$Y$
and in addition split the coefficients of~$X$ and~$Y$ with respect to~$t$,
which gives $\lambda^{4z}=\lambda^{5z}=0$,
$t\tilde\chi^{4z}=\tilde\chi^{5z}$ and
$t\tilde\rho^{4z}=\tilde\rho^{5z}$ with $z\in\{x,y\}$.
After collecting $u$-components in the equations with $i=4,5$, we have
\begin{gather*}
zU_u+U_{u_z}=-\frac12\tilde\chi^{4z}_{\tilde t}(T)X^2
-\frac12\tilde\rho^{4z}_{\tilde t}(T)Y^2
+\tilde\alpha^{4z}(T)X+\tilde\beta^{4z}(T)Y+\tilde\sigma^{4z}(T),
\\
t(zU_u+U_{u_z})=-\frac12\tilde\chi^{5z}_{\tilde t}(T)X^2
-\frac12\tilde\rho^{5z}_{\tilde t}(T)Y^2
+\tilde\alpha^{5z}(T)X+\tilde\beta^{5z}(T)Y+\tilde\sigma^{5z}(T).
\end{gather*}
We combine the last equations, subtracting the first equation multiplied by~$t$ from the second one.
Splitting the combination with respect to~$X$ and $Y$ gives
$t\tilde\chi^{4z}_{\tilde t}-\tilde\chi^{5z}_{\tilde t}=0$ and
$t\tilde\rho^{4z}_{\tilde t}-\tilde\rho^{5z}_{\tilde t}=0$.
Differential consequences of the derived system for~$\chi^{4z}$ and~$\rho^{5z}$
are the equations $\tilde\chi^{4z}=\tilde\chi^{5z}=\tilde\rho^{4z}=\tilde\rho^{5z}=0$,
which obviously imply $X_{u_x}=Y_{u_x}=X_{u_y}=Y_{u_y}=0$.
The obtained contradiction with the supposition $(X_{u_x},Y_{u_x},X_{u_y},Y_{u_y})\ne(0,0,0,0)$
means that we ultimately have $X_{u_x}=Y_{u_x}=X_{u_y}=Y_{u_y}=0$.

Due to the independence of $(T,X,Y)$ on $u_t,u_x,u_y$,
it follows from the contact condition~\eqref{eq:dNGenContactTransB} that $U_{u_t}=U_{u_x}=U_{u_y}=0$ as well.
Therefore, the contact transformation $\Psi$ is the first prolongation of a point transformation
in the basic space $\mathbb R^3_{t,x,y}\times\mathbb R_u$.
\end{proof}

\begin{corollary}\label{cor:dNContDefSubalgs}
The first prolongation of the subalgebra~$\mathfrak s_2+\mathfrak s_3$,
which is a subalgebra of the algebra~$\mathfrak g_{\rm c}=\mathfrak g_{(1)}$,
defines the diffeomorphisms of the corresponding first-order jet space that stabilize~$\mathfrak g_{\rm c}$.
\end{corollary}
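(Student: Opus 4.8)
The plan is to obtain the corollary by combining Theorems~\ref{thm:dNContDefSubalgs} and~\ref{thm:dNDefSubalgs} via the naturality of prolongation, with essentially no new analytic work. In view of Definition~\ref{def:DefiningSubalg}, the implication $\Psi_*\mathfrak g_{(1)}\subseteq\mathfrak g_{(1)}\Rightarrow\Psi_*(\mathfrak s_2+\mathfrak s_3)_{(1)}\subseteq\mathfrak g_{(1)}$ is immediate from the inclusion $(\mathfrak s_2+\mathfrak s_3)_{(1)}\subseteq\mathfrak g_{(1)}$, so only the reverse implication needs to be established. I would first record that $\mathfrak s_2+\mathfrak s_3$ is indeed a proper subalgebra of~$\mathfrak g$; this is a routine check using~\eqref{eq:dNCommRelations}, where the only nonobvious closure relation is $[P^x(t),R^x(t)]=Z(t^2)$ together with its $y$-counterpart, which is exactly what forces the generator $Z(t^2)$ into~$\mathfrak s_3$.

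Next I would assume $\Psi_*(\mathfrak s_2+\mathfrak s_3)_{(1)}\subseteq\mathfrak g_{(1)}$ for a local contact diffeomorphism~$\Psi$ of the first-order jet space. Since $\mathfrak s_{3(1)}\subseteq(\mathfrak s_2+\mathfrak s_3)_{(1)}$, this in particular yields $\Psi_*\mathfrak s_{3(1)}\subseteq\mathfrak g_{(1)}$, and Theorem~\ref{thm:dNContDefSubalgs} then forces~$\Psi$ to be the first prolongation $\Phi_{(1)}$ of a point transformation~$\Phi$ in the basic space $\mathbb R^3_{t,x,y}\times\mathbb R_u$. The crucial step is then to descend from the contact level to the point level: for a point transformation~$\Phi$ and a vector field~$Q$ on the basic space, prolongation commutes with pushforward, $(\Phi_{(1)})_*Q_{(1)}=(\Phi_*Q)_{(1)}$, because $\Phi_{(1)}$ conjugates the prolonged flow of~$Q$ to the prolonged flow of~$\Phi_*Q$. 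As the map $R\mapsto R_{(1)}$ is injective and $\mathfrak g_{(1)}$ consists precisely of the first prolongations of the elements of~$\mathfrak g$, the condition $(\Phi_*Q)_{(1)}\in\mathfrak g_{(1)}$ is equivalent to $\Phi_*Q\in\mathfrak g$. Applying this to each basis element, the hypothesis becomes $\Phi_*(\mathfrak s_2+\mathfrak s_3)\subseteq\mathfrak g$, and in particular $\Phi_*\mathfrak s_2\subseteq\mathfrak g$.

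I would finish by invoking Theorem~\ref{thm:dNDefSubalgs}: the subalgebra~$\mathfrak s_2$ defines the diffeomorphisms that stabilize~$\mathfrak g$, so $\Phi_*\mathfrak s_2\subseteq\mathfrak g$ gives $\Phi_*\mathfrak g\subseteq\mathfrak g$. Prolonging once more yields $\Psi_*\mathfrak g_{(1)}=(\Phi_*\mathfrak g)_{(1)}\subseteq\mathfrak g_{(1)}=\mathfrak g_{\rm c}$, which is the required implication. The labor is light precisely because the two ingredients play complementary roles: the part~$\mathfrak s_3$ supplies, through Theorem~\ref{thm:dNContDefSubalgs}, the information that~$\Psi$ cannot be a genuinely contact transformation, while the part~$\mathfrak s_2$ supplies, through Theorem~\ref{thm:dNDefSubalgs}, the information that the underlying point transformation stabilizes~$\mathfrak g$. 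I do not expect a genuine obstacle here; the only point demanding care is the clean statement of the compatibility $(\Phi_{(1)})_*Q_{(1)}=(\Phi_*Q)_{(1)}$ together with the injectivity of prolongation on vector fields, after which the corollary follows formally from the two preceding theorems.
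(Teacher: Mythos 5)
Your proposal is correct and follows essentially the same route as the paper's proof: use the inclusion $\mathfrak s_{3(1)}\subseteq(\mathfrak s_2+\mathfrak s_3)_{(1)}$ together with Theorem~\ref{thm:dNContDefSubalgs} to force $\Psi=\Phi_{(1)}$ for a point transformation~$\Phi$, reduce the hypothesis to $\Phi_*\mathfrak s_2\subseteq\mathfrak g$, apply Theorem~\ref{thm:dNDefSubalgs}, and prolong the conclusion. Your additional details---the closure check $[P^x(t),R^x(t)]=Z(t^2)$ showing $\mathfrak s_2+\mathfrak s_3$ is a subalgebra, and the explicit naturality statement $(\Phi_{(1)})_*Q_{(1)}=(\Phi_*Q)_{(1)}$---merely make explicit what the paper leaves implicit in the phrase ``reduces to''.
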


\begin{proof}
If a contact transformation~$\Psi$ with the basic space $\mathbb R^3_{t,x,y}\times\mathbb R_u$
satisfies the condition $\Psi_*(\mathfrak s_2+\mathfrak s_3)_{(1)}\subseteq\mathfrak g_{(1)}$,
then it satisfies the weaker condition $\Psi_*\mathfrak s_{3(1)}\subseteq\mathfrak g_{(1)}$.
In view of Theorem~\ref{thm:dNContDefSubalgs}, this implies
that the contact transformation~$\Psi$ is the first prolongation of a point transformation~$\Phi$
in the basic space $\mathbb R^3_{t,x,y}\times\mathbb R_u$, $\Psi=\Phi_{(1)}$
and the condition $\Psi_*(\mathfrak s_2+\mathfrak s_3)_{(1)}\subseteq\mathfrak g_{(1)}$
reduces to the condition $\Phi_*(\mathfrak s_2+\mathfrak s_3)\subseteq\mathfrak g$.
In particular, $\Phi_*\mathfrak s_2\subseteq\mathfrak g$.
According to Theorem~\ref{thm:dNDefSubalgs}, we have $\Phi_*\mathfrak g\subseteq\mathfrak g$.
The first prolongation of the last condition gives the required property of~$\Psi$,
$\Psi_*\mathfrak g_{(1)}\subseteq\mathfrak g_{(1)}$.
\end{proof}

\section{Point-symmetry pseudogroup of nonlinear Lax representation}\label{sec:PointSymGroupOfLaxPair}

A nonlinear Lax representation of the dispersionless Nizhnik equation~\eqref{eq:dN},%
\footnote{%
The corresponding linear nonisospectral Lax representation is
$\chi_t=(p^2+p^{-4}u_{xy}^{\,\,3}+u_{xx}+p^{-2}u_{xy}u_{yy})\chi_x
-(pu_{xxx}-p^{-1}(u_{xy}u_{yy})_x-p^{-3}u_{xy}^{\,\,2}u_{xxy})\chi_p$,
$\chi_y=p^{-2}u_{xy}\chi_x+p^{-1}u_{xxy}\chi_p$,
where $p$ is a variable spectral parameter, $\chi=\chi(t,x,y,p)$ and  $u=u(t,x,y)$.
See, e.g., \cite[p.~360]{serg2018a} and references therein
for linear nonisospectral Lax representations
and the procedure of converting a nonlinear Lax representation into its linear nonisospectral counterpart
in the (1+2)-dimensional case.
}
\begin{gather}\label{eq:dNLaxPair}
v_t=\frac13\left(v_x^3-\frac{u_{xy}^3}{v_x^3}\right)+u_{xx}v_x-\frac{u_{xy}u_{yy}}{v_x},\quad
v_y=-\frac{u_{xy}}{v_x},
\end{gather}
was derived as a dispersionless counterpart%
\footnote{%
See a technique of limit transitions to dispersionless counterparts of
(1+2)-dimensional differential equations and of the corresponding Lax representations in~\cite[p.~167]{zakh1994a}.
}
of the Lax representation of the Nizhnik equation, cf.~\cite{pavl2006a}.
The maximal Lie invariance (pseudo)algebra~$\mathfrak g_{\rm L}$ of the system~\eqref{eq:dNLaxPair} is spanned by the vector fields
\begin{gather*}
\begin{split}&
\bar D^t(\tau)=\tau\p_t+\tfrac13\tau_tx\p_x+\tfrac13\tau_ty\p_y-\tfrac1{18}\tau_{tt}(x^3+y^3)\p_u,\quad
\bar D^{\rm s}=x\p_x+y\p_y+3u\p_u+\tfrac32v\p_v,\\ &
\bar P^x(\chi)=\chi\p_x-\tfrac12\chi_tx^2\p_u,\quad
\bar P^y(\rho)=\rho\p_y-\tfrac12\rho_ty^2\p_u,\\ &
\bar R^x(\alpha)=\alpha x\p_u,\quad
\bar R^y(\beta)=\beta y\p_u,\quad
\bar Z(\sigma)=\sigma\p_u,\quad
\bar P^v=\p_v,
\end{split}
\end{gather*}
where $\tau$, $\chi$, $\rho$, $\alpha$, $\beta$ and~$\sigma$ are again arbitrary smooth functions of~$t$.
The algebra~$\mathfrak g_{\rm L}$ is infinite-dimensional as the algebra~$\mathfrak g$
and is obtained from~$\mathfrak g$ by extending the vector fields from~$\mathfrak g$
to the additional dependent variable~$v$ and supplementing the extended algebra with the vector field~$\bar P^v$.
The appearance of~$\bar P^v$ is natural and related to the fact that the unknown function~$v$
is defined up to a constant summand.
This is why we can say that the maximal Lie invariance algebra~$\mathfrak g_{\rm L}$ of the system~\eqref{eq:dNLaxPair}
is induced by the maximal Lie invariance algebra~$\mathfrak g$ of the equation~\eqref{eq:dN}.
Up to the antisymmetry of the Lie bracket,
the nonzero commutation relations between vector fields spanning~$\mathfrak g_{\rm L}$
are exhausted by the counterparts of the commutation relations~\eqref{eq:dNCommRelations}
and one more commutation relation involving the vector field~$\bar P^v$,
$[\bar P^v,\bar D^{\rm s}]=\tfrac32\bar P^v$.

Analogously to Lemma~\ref{lem:radicalOfg}, we can prove the following assertion.

\begin{lemma}\label{lem:radicalOfgL}
The radical of~$\mathfrak g_{\rm L}$ is
$\mathfrak r_{\rm L}=\big\langle\bar D^{\rm s},\bar P^x(\chi),\bar P^y(\rho),
\bar R^x(\alpha),\bar R^y(\beta),\bar Z(\sigma),\bar P^v\big\rangle$.
\end{lemma}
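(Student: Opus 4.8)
The plan is to follow the template of Lemma~\ref{lem:radicalOfg} almost verbatim, exploiting the fact that $\mathfrak g_{\rm L}$ is obtained from~$\mathfrak g$ by a trivial extension to the variable~$v$ together with the single extra generator~$\bar P^v$. First I would set
\[
\mathfrak r_{\rm L}:=\big\langle\bar D^{\rm s},\bar P^x(\chi),\bar P^y(\rho),
\bar R^x(\alpha),\bar R^y(\beta),\bar Z(\sigma),\bar P^v\big\rangle
\]
and argue that it is an ideal: the commutation relations that are the barred counterparts of~\eqref{eq:dNCommRelations} keep all brackets among these generators, and among these with the $\bar D^t(\tau)$, inside~$\mathfrak r_{\rm L}$, while the only new relation $[\bar P^v,\bar D^{\rm s}]=\tfrac32\bar P^v$ again lands in~$\mathfrak r_{\rm L}$. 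Next I would check solvability by computing the derived series; since the barred span without~$\bar P^v$ is just the image of~$\mathfrak s$ from Lemma~\ref{lem:radicalOfg} and~$\bar P^v$ spans a one-dimensional piece that the bracket with~$\bar D^{\rm s}$ sends back into~$\langle\bar P^v\rangle$, the derived algebras terminate at~$\{0\}$ after finitely many steps (the added generator cannot raise the solvability rank beyond that of~$\mathfrak s$).

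The heart of the proof, exactly as in Lemma~\ref{lem:radicalOfg}, is maximality: I would suppose an ideal~$\mathfrak s_1\supsetneq\mathfrak r_{\rm L}$ and show it is not solvable. Any such proper extension must contain some $\bar D^t(\tau^1)$ with $\tau^1$ not identically zero, and then the identical restriction-to-an-interval argument applies. Restricting all parameter functions to an interval~$I$ on which $\tau^1$ does not vanish, the bracket $[\bar D^t(\tau),\bar D^t(\tau^1)]=\bar D^t(\tau\tau^1_t-\tau_t\tau^1)$ produces, via the existence theorem for first-order linear ODEs, every $\bar D^t(\tilde\tau)$ with $\tilde\tau\in\mathrm C^\infty(I)$; hence $\mathfrak s_1\supseteq\langle\bar D^t(\tau)\rangle$, whose derived algebra reproduces itself, so~$\mathfrak s_1^{(n)}\neq\{0\}$ for all~$n$ and~$\mathfrak s_1$ is non-solvable. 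This forces $\mathfrak r_{\rm L}$ to be the maximal solvable ideal, i.e.\ the radical.

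I do not expect any genuine obstacle here, and the statement even says the proof is ``analogous to Lemma~\ref{lem:radicalOfg}''; the only point requiring a moment's care is confirming that the extra generator~$\bar P^v$ does not create a new non-solvable piece or a larger solvable ideal. The relation $[\bar P^v,\bar D^{\rm s}]=\tfrac32\bar P^v$ shows that $\langle\bar P^v\rangle$ together with~$\bar D^{\rm s}$ forms only a two-dimensional solvable (non-abelian) block of the same flavour as the $(D^{\rm s},R^z)$ and $(D^{\rm s},P^z)$ blocks already present in~$\mathfrak g$, so it merely enlarges the solvable radical rather than obstructing solvability. The decisive non-solvable behaviour still comes solely from the $\bar D^t$-sector, which is untouched by the extension; thus the maximality argument transfers without modification, and the identification $\mathfrak r_{\rm L}=\mathfrak r\oplus\langle\bar P^v\rangle$ (as a vector space, with~$\mathfrak r$ extended to~$v$) confirms the claimed radical.
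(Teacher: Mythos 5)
Your proof is correct and is essentially what the paper intends: the paper states Lemma~\ref{lem:radicalOfgL} with only the remark that it can be proved ``analogously to Lemma~\ref{lem:radicalOfg}'', and your argument is exactly that analogue, correctly handling the only new feature, namely that $\bar P^v$ enlarges the solvable ideal (via $[\bar P^v,\bar D^{\rm s}]=\tfrac32\bar P^v$ and the vanishing of all its other brackets) without affecting the maximality argument, which rests solely on the non-solvable $\bar D^t$-sector. The one loosely worded point, that the extra generator ``cannot raise the solvability rank,'' is harmless: $\bar P^v$ drops out of the second derived algebra because $\bar D^{\rm s}$ is already absent from the first, so the derived series still terminates.
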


Further following the consideration of Section~\ref{sec:LieInvAlgebra},
we construct several megaideals of the algebra~$\mathfrak g_{\rm L}$,
\begin{gather*}
{\mathfrak g_{\rm L}}'
=\big\langle\bar D^t(\tau),\bar P^x(\chi),\bar P^y(\rho),\bar R^x(\alpha),\bar R^y(\beta),\bar Z(\sigma),\bar P^v\big\rangle,\quad
\\
\bar{\mathfrak m}_1:={\mathfrak g_{\rm L}}''
=\big\langle\bar D^t(\tau),\bar P^x(\chi),\bar P^y(\rho),\bar R^x(\alpha),\bar R^y(\beta),\bar Z(\sigma)\big\rangle,\quad
\\
\bar{\mathfrak m}_2:=\mathfrak r_{\rm L}=\big\langle\bar D^{\rm s},\bar P^x(\chi),\bar P^y(\rho),
\bar R^x(\alpha),\bar R^y(\beta),\bar Z(\sigma),\bar P^v\big\rangle,
\\
\bar{\mathfrak m}_2'={\mathfrak g_{\rm L}}'\cap\bar{\mathfrak m}_2
=\big\langle\bar P^x(\chi),\bar P^y(\rho),\bar R^x(\alpha),\bar R^y(\beta),\bar Z(\sigma),\bar P^v\big\rangle,
\\
\bar{\mathfrak m}_3:=\bar{\mathfrak m}_1\cap\bar{\mathfrak m}_2'
=\big\langle\bar P^x(\chi),\bar P^y(\rho),\bar R^x(\alpha),\bar R^y(\beta),\bar Z(\sigma)\big\rangle,
\\
\bar{\mathfrak m}_4:=\bar{\mathfrak m}_2''=\big\langle\bar R^x(\alpha),\bar R^y(\beta),\bar Z(\sigma)\big\rangle,\quad
\bar{\mathfrak m}_5:=\bar{\mathfrak m}_2'''=\big\{\bar Z(\sigma)\big\},
\\
\mathfrak z({\mathfrak g_{\rm L}}')=\big\langle Z(1),\bar P^v\big\rangle,\quad
\bar{\mathfrak m}_6:=\mathfrak z({\mathfrak g_{\rm L}}'')=\big\langle Z(1)\big\rangle,\quad
\bar{\mathfrak m}_7:=\big\langle \bar P^v\big\rangle.
\end{gather*}
The technique for finding the megaideal~$\bar{\mathfrak m}_7$ differs from that for the other obtained megaideals.
This is why we formulate the claim on~$\bar{\mathfrak m}_7$ as an assertion.

\begin{lemma}\label{lem:SpecificMegaidealOfgL}
The span $\bar{\mathfrak m}_7:=\big\langle \bar P^v\big\rangle$ is a megaideal of~$\mathfrak g_{\rm L}$.
\end{lemma}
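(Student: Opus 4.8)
The plan is to exploit the two megaideals already at hand that sandwich $\bar{\mathfrak m}_7$, namely the $2$-dimensional abelian megaideal $\mathfrak c:=\mathfrak z(\mathfrak g_{\rm L}')=\langle Z(1),\bar P^v\rangle$ and the line $\bar{\mathfrak m}_6=\langle Z(1)\rangle$. By definition a megaideal is preserved by every automorphism of~$\mathfrak g_{\rm L}$, so it suffices to prove that $\varphi(\bar P^v)\in\langle\bar P^v\rangle$ for every automorphism~$\varphi$ of~$\mathfrak g_{\rm L}$. The obstruction is that $\bar{\mathfrak m}_7$ is merely a complement of the megaideal~$\bar{\mathfrak m}_6$ inside the megaideal~$\mathfrak c$, and such a complement is not a megaideal in general. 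The \emph{new ingredient} is to extract an automorphism-invariant distinction between the two lines $\langle Z(1)\rangle$ and $\langle\bar P^v\rangle$ from the adjoint action of the whole algebra on~$\mathfrak c$.

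Concretely, I would first compute the restricted adjoint operators $\mathrm{ad}_W|_{\mathfrak c}$ as $W$ runs over the spanning fields of~$\mathfrak g_{\rm L}$. Using the commutation relations (the counterparts of~\eqref{eq:dNCommRelations} together with $[\bar P^v,\bar D^{\rm s}]=\tfrac32\bar P^v$), every generator except~$\bar D^{\rm s}$ centralizes~$\mathfrak c$, while $\bar D^{\rm s}$ acts by the diagonal operator $A:=\mathrm{diag}(-3,-\tfrac32)$ in the basis $(Z(1),\bar P^v)$, since $[\bar D^{\rm s},Z(1)]=-3Z(1)$ and $[\bar D^{\rm s},\bar P^v]=-\tfrac32\bar P^v$. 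Hence the image $\{\mathrm{ad}_W|_{\mathfrak c}\mid W\in\mathfrak g_{\rm L}\}$ is the single line $\langle A\rangle\subset\mathfrak{gl}(\mathfrak c)$, and $A$ is diagonalizable with distinct eigenvalues whose eigenlines are exactly $\langle Z(1)\rangle$ and $\langle\bar P^v\rangle$.

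Then I would fix an arbitrary automorphism~$\varphi$ of~$\mathfrak g_{\rm L}$ and set $P:=\varphi|_{\mathfrak c}\in GL(\mathfrak c)$, which is well defined because $\mathfrak c$ is a megaideal. The identity $\varphi([W,c])=[\varphi(W),\varphi(c)]$ gives the equivariance $\mathrm{ad}_{\varphi(\bar D^{\rm s})}|_{\mathfrak c}=PAP^{-1}$; since $\varphi(\bar D^{\rm s})\in\mathfrak g_{\rm L}$, its restricted adjoint lies in $\langle A\rangle$, so $PAP^{-1}=\kappa A$ for some nonzero scalar~$\kappa$ (nonzero because $A$ is invertible). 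Comparing the eigenvalue multisets of $\kappa A$ and~$A$, namely $\{-3\kappa,-\tfrac32\kappa\}$ and $\{-3,-\tfrac32\}$, forces $\kappa=1$: the only alternative, interchanging the two eigenvalues, would require $\kappa=\tfrac12$ and $\kappa=2$ simultaneously. Consequently $P$ commutes with~$A$, and as $A$ has distinct eigenvalues, $P$ is diagonal in the basis $(Z(1),\bar P^v)$ and therefore preserves the eigenline $\langle\bar P^v\rangle$. Since $\varphi$ was arbitrary, $\bar{\mathfrak m}_7=\langle\bar P^v\rangle$ is invariant under all automorphisms of~$\mathfrak g_{\rm L}$, i.e., a megaideal.

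The genuinely delicate point, and the crux of the technique, is the arithmetic at the end: the distinction between the two eigenlines survives passage to automorphisms precisely because the ad-eigenvalues $-3$ and $-\tfrac32$ are not interchangeable under a common rescaling (their ratio~$2$ does not equal its reciprocal). Had the two eigenvalues been equal or opposite, $P$ could mix the lines and $\bar{\mathfrak m}_7$ would fail to be a megaideal; this is exactly why the standard megaideal operations do not yield $\bar{\mathfrak m}_7$ and a separate argument is needed. The adjoint-action computation in the second step is routine but must be carried out in full, so as to guarantee that the image is the single line $\langle A\rangle$ rather than a larger subalgebra of~$\mathfrak{gl}(\mathfrak c)$, which is what makes the eigenvalue comparison decisive.
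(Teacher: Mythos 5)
Your proof is correct, and it reaches the conclusion by a different scaffolding than the paper's, although both arguments hinge on the same arithmetic fact: the $\mathrm{ad}_{\bar D^{\rm s}}$-weights $-3$ and $-\tfrac32$ of $\bar Z(1)$ and $\bar P^v$ are distinct and not interchangeable under a common rescaling. The paper's proof first pins down the shapes of the images of both vector fields using the megaideals $\mathfrak r_{\rm L}$, $\mathfrak r_{\rm L}'$, $\mathfrak z(\mathfrak g_{\rm L}')$ and $\mathfrak z(\mathfrak g_{\rm L}'')$, namely $\varphi(\bar D^{\rm s})=c_0\bar D^{\rm s}+\cdots+b_0\bar P^v$ with $c_0\ne0$ and $\varphi(\bar P^v)=a_1\bar Z(1)+b_1\bar P^v$ with $b_1\ne0$, and then evaluates the single identity $[\varphi(\bar D^{\rm s}),\varphi(\bar P^v)]=-\tfrac32\varphi(\bar P^v)$, which yields $(2c_0-1)a_1=(c_0-1)b_1=0$, hence $c_0=1$ and $a_1=0$. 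You never touch $\varphi(\bar D^{\rm s})$ explicitly: you work entirely inside the two-dimensional megaideal $\mathfrak c=\mathfrak z(\mathfrak g_{\rm L}')$, observe that the image of the restricted adjoint map $W\mapsto\mathrm{ad}_W|_{\mathfrak c}$ is the line spanned by $A=\mathrm{diag}(-3,-\tfrac32)$, and then combine equivariance ($PAP^{-1}=\kappa A$) with the eigenvalue-multiset comparison to force $\kappa=1$, so that $P$ preserves the two eigenlines. What your version buys: it does not rely on Lemma~\ref{lem:radicalOfgL}, i.e., on knowing that the radical $\mathfrak r_{\rm L}$ and its derived algebra are megaideals, and it isolates a reusable criterion -- a finite-dimensional megaideal on which the adjoint image is spanned by a single operator with distinct, non-opposite nonzero eigenvalues has its eigenlines as megaideals. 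What the paper's version buys: a shorter computation (one bracket, two linear equations) and, as a by-product, the normalization $c_0=1$ of the $\bar D^{\rm s}$-component of $\varphi(\bar D^{\rm s})$, i.e., partial information about the automorphism itself rather than only about $\varphi(\bar P^v)$.
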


\begin{proof}
Let $\varphi$ be an arbitrary automorphism of~$\mathfrak g_{\rm L}$.
Since
$\bar D^{\rm s}\in\bar{\mathfrak m}_2\setminus\bar{\mathfrak m}_2'$ and
$\bar P^v\in\mathfrak z({\mathfrak g_{\rm L}}')\setminus\bar{\mathfrak m}_6$,
where
$\bar{\mathfrak m}_2:=\mathfrak r_{\rm L}$,
$\bar{\mathfrak m}_2'$,
$\mathfrak z({\mathfrak g_{\rm L}}')$ and
$\bar{\mathfrak m}_6:=\mathfrak z({\mathfrak g_{\rm L}}'')$ are megaideals of~$\mathfrak g_{\rm L}$,
we have
\begin{gather*}
\varphi(\bar D^{\rm s})=c_0\bar D^{\rm s}
+\bar P^x(\chi^0)+\bar P^y(\rho^0)+\bar R^x(\alpha^0)+\bar R^y(\beta^0)+\bar Z(\sigma^0)+b_0\bar P^v,
\\
\varphi(\bar P^v)=a_1\bar Z(1)+b_1\bar P^v,\quad
\end{gather*}
where $\chi^0$, $\rho^0$, $\alpha^0$, $\beta^0$ and~$\sigma^0$ are smooth functions of~$t$, and
$c_0$, $b_0$, $a_1$ and~$b_1$ are constants with $c_0b_1\ne0$.
Then we evaluate the defining automorphism property at $\varphi$ and
the pair of vector fields $\big(\bar D^{\rm s},\bar P^v\big)$,
\begin{gather*}
\big[\varphi\big(\bar D^{\rm s}),\varphi\big(\bar P^v\big)\big]
=\varphi\big([\bar D^{\rm s},\bar P^v]\big)=-\tfrac32\varphi\big(\bar P^v\big)\quad \sim\quad
(2c_0-1)a_1\bar Z(1)+(c_0-1)b_1\bar P^v=0, 
\end{gather*}
which implies $(2c_0-1)a_1=(c_0-1)b_1=0$.
In view of $b_1\ne0$, we derive $c_0=1$ and then $a_1=0$.
In other words, $\varphi(\bar P^v)\in\langle\bar P^v\rangle=:\bar{\mathfrak m}_7$
for any automorphism~$\varphi$ of~$\mathfrak g_{\rm L}$,
i.e., $\bar{\mathfrak m}_7$ is a megaideal of~$\mathfrak g_{\rm L}$.
\end{proof}

The improper megaideal~$\mathfrak g_{\rm L}$
and the proper megaideals~${\mathfrak g_{\rm L}}'$, $\bar{\mathfrak m}_2'$ and~$\mathfrak z({\mathfrak g_{\rm L}}')$
are inessential in the course of constructing the point-symmetry pseudogroup of the system~\eqref{eq:dNLaxPair}
using the algebraic method since they are sums of other megaideals,
$\mathfrak g_{\rm L}=\bar{\mathfrak m}_1+\bar{\mathfrak m}_2$,
${\mathfrak g_{\rm L}}'=\bar{\mathfrak m}_1+\bar{\mathfrak m}_7$,
$\bar{\mathfrak m}_2'=\bar{\mathfrak m}_3+\bar{\mathfrak m}_7$,
and
$\mathfrak z({\mathfrak g_{\rm L}}')=\bar{\mathfrak m}_6+\bar{\mathfrak m}_7$.
The two constructed one-dimensional megaideals~$\bar{\mathfrak m}_6$ and~$\bar{\mathfrak m}_7$
are the most important for further consideration.
As for the algebra~$\mathfrak g$,
we cannot of course answer the question of whether the above megaideals
exhaust the entire set of proper megaideals of the (infinite-dimensional) algebra~$\mathfrak g_{\rm L}$.

\begin{theorem}\label{thm:dNCompletePointSymGroupOfLaxRepresentation}
The point-symmetry pseudogroup~$G_{\rm L}$ of the nonlinear Lax representation~\eqref{eq:dNLaxPair}
is generated by the transformations of the form
\[
\begin{split}
&\tilde t=T(t),\quad
 \tilde x=A^{2/3}T_t^{1/3}x+X^0(t),\quad
 \tilde y=A^{2/3}T_t^{1/3}y+Y^0(t),\\
&\tilde u=A^2u-\frac{A^2T_{tt}}{18T_t}(x^3+y^3)
 -\frac{A^{4/3}}{2T_t^{1/3}}(X^0_tx^2+Y^0_ty^2)+W^1(t)x+W^2(t)y+W^0(t),\\
&\tilde v=Av+B
\end{split}
\]
and the transformation~$\mathscr J$: $\tilde t=t$, $\tilde x=y$, $\tilde y=x$, $\tilde u=u$, $\tilde v=v$.
Here $T$, $X^0$, $Y^0$, $W^0$, $W^1$ and $W^2$ are arbitrary smooth functions of~$t$
with $T_t\neq0$, and $A$ and~$B$ are arbitrary constants with $A\ne0$.
\looseness=-1
\end{theorem}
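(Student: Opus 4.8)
The plan is to rerun the megaideal-based version of the algebraic method exactly as in the proof of Theorem~\ref{thm:dNCompletePointSymGroup}, now in the space with coordinates $(t,x,y,u,v)$ and for the extended algebra~$\mathfrak g_{\rm L}$. Write the sought point symmetry as $\Phi\colon(\tilde t,\tilde x,\tilde y,\tilde u,\tilde v)=(T,X,Y,U,V)$, a tuple of smooth functions of $(t,x,y,u,v)$ with nonvanishing Jacobian, so that $\Phi_*\bar{\mathfrak m}_j\subseteq\bar{\mathfrak m}_j$ for $j=1,\dots,7$. The decisive new ingredient is the one-dimensional megaideal $\bar{\mathfrak m}_7=\langle\bar P^v\rangle$ from Lemma~\ref{lem:SpecificMegaidealOfgL}: since $\bar P^v=\p_v$, the inclusion $\Phi_*\bar P^v\in\langle\bar P^v\rangle$ forces, upon componentwise splitting, $T_v=X_v=Y_v=U_v=0$ and $V_v=A$ for a nonzero constant~$A$. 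Thus $T$, $X$, $Y$, $U$ depend only on $(t,x,y,u)$ and $V=Av+V^0(t,x,y,u)$.

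Next I would reduce the determination of $(T,X,Y,U)$ to the point case. Apart from~$\bar D^{\rm s}$, every basis vector field of~$\mathfrak g_{\rm L}$ entering the computation has vanishing $\p_v$-component, and the $\p_v$-term of~$\bar D^{\rm s}$ is immaterial for the $v$-independent functions $(T,X,Y,U)$; moreover the megaideals $\bar{\mathfrak m}_1,\dots,\bar{\mathfrak m}_6$ project onto the megaideals $\mathfrak m_1,\dots,\mathfrak m_6$ of~$\mathfrak g$ and the relevant commutation relations coincide with~\eqref{eq:dNCommRelations}. Choosing the barred analogues of the basis elements used in the proof of Theorem~\ref{thm:dNCompletePointSymGroup} together with~$\bar P^v$, the analysis of the $(\tilde t,\tilde x,\tilde y,\tilde u)$-components is word for word the same and yields, up to composing with the obvious symmetry~$J$, the form~\eqref{eq:dNPointSymForm} for $(T,X,Y,U)$ with a free nonzero constant~$C$. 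In parallel, the $\p_{\tilde v}$-components of the pushforwards of $\bar Z(1)$, $\bar P^x(1)$ and $\bar P^y(1)$, whose images lie in the megaideals $\bar{\mathfrak m}_6$ and $\bar{\mathfrak m}_3$ carrying no $\p_v$-component, give $V^0_u=V^0_x=V^0_y=0$, while the $\p_{\tilde v}$-component of $\Phi_*\bar D^{\rm s}$, whose image lies in $\bar{\mathfrak m}_2=\mathfrak r_{\rm L}$, additionally fixes $\lambda=1$ and forces the surviving function $V^0=V^0(t)$ to be a constant~$B$. Hence $V=Av+B$.

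Here the algebraic part must stop: the automorphism/megaideal conditions leave $C$, the scaling of $(x,y,u)$, and $A$, the scaling of~$v$, as \emph{independent} free constants, because~$v$ is linked to the rest of~$\mathfrak g_{\rm L}$ only through the single relation $[\bar P^v,\bar D^{\rm s}]=\tfrac32\bar P^v$, which merely reproduces $V_v=A=\const$ and $\lambda=1$. The concluding step is therefore a short application of the direct method: substituting the derived transformation into the Lax system~\eqref{eq:dNLaxPair}. The relative $\bar D^{\rm s}$-weights of~$u$ and~$v$, namely~$3$ and~$\tfrac32$, suggest, and matching coefficients in the second equation $v_y=-u_{xy}/v_x$ confirms, the single constraint $A^2=C^3$, i.e.\ $C=A^{2/3}$: for the constant-scaling part one has $\tilde v_{\tilde y}=(A/C)v_y$, $\tilde u_{\tilde x\tilde y}=Cu_{xy}$, $\tilde v_{\tilde x}=(A/C)v_x$, whence $(A/C)v_y=(C^2/A)v_y$. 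With $C=A^{2/3}$ the first equation of~\eqref{eq:dNLaxPair} is then satisfied identically and imposes nothing further, while $B$ and the functions $X^0$, $Y^0$, $W^0$, $W^1$, $W^2$, $T$ (with $T_t\ne0$) remain free; rewriting~\eqref{eq:dNPointSymForm} with $C=A^{2/3}$ reproduces the announced form, and a direct check shows that~$J$ extended by $\tilde v=v$ is a symmetry of~\eqref{eq:dNLaxPair}.

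I expect the main obstacle to be precisely this last point: recognizing that the purely algebraic conditions cannot separate the $u$- and $v$-scalings, so that the constraint $A^2=C^3$ is genuinely a consequence of the Lax equations rather than of the structure of~$\mathfrak g_{\rm L}$, and then carrying out the (finite but slightly involved) substitution, including the $t$-dependent pieces, to verify that no constraint beyond $A^2=C^3$ arises.
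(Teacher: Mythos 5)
Your proposal is correct and follows essentially the same route as the paper's proof: the megaideal $\bar{\mathfrak m}_7=\langle\bar P^v\rangle$ forces $T_v=X_v=Y_v=U_v=0$ and $V_v=\const$, the point-case analysis of $(T,X,Y,U)$ is inherited via the projection onto $(t,x,y,u)$-space, the $v$-components of the remaining pushforward conditions yield $V=Av+B$, and the direct method supplies the single closing constraint $C^3=A^2$ that the purely algebraic part cannot see. The only cosmetic difference is that you deduce the constancy of $V^0$ from the $\tilde v$-component of $\Phi_*\bar D^{\rm s}\in\bar{\mathfrak m}_2$ (splitting with respect to $v$), whereas the paper collects $v$-components of $\Phi_*\bar D^t(1)\in\bar{\mathfrak m}_1$ to obtain $V_t=0$ directly.
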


\begin{proof}
We follow item~(i) of the proof of Theorem~\ref{thm:dNPointSymPseudogroup},
replacing the point transformation~$\Phi$ by the point transformation~$\bar\Phi$
in the extended space with the coordinates $(t,x,y,u,v)$,
\[
\bar\Phi\colon\ (\tilde t,\tilde x,\tilde y,\tilde u,\tilde v)=(T,X,Y,U,V),
\]
where $(T,X,Y,U,V)$ is a tuple of smooth functions of $(t,x,y,u,v)$ with nonvanishing Jacobian.
The condition $\Phi_*\bar P^v\subseteq\bar{\mathfrak m}_7$ implies that $T_v=X_v=Y_v=U_v=0$ and $V_v=\const$.
Since $\varpi_*\mathfrak g_{\rm L}=\mathfrak g$,
where $\varpi$ in the natural projection from~$\mathbb R^5_{t,x,y,u,v}$ onto~$\mathbb R^4_{t,x,y,u}$,
the independence of $(T,X,Y,U)$ on~$v$ means that
the $t$-, $x$-, $y$- and $u$-components of~$\bar\Phi$ satisfy all the constraints
derived in item~(i) of the proof of Theorem~\ref{thm:dNPointSymPseudogroup}
for the components of the transformation~$\Phi$,
i.e., they have, up to composing with the transformation~$\mathscr J$: $(\tilde t,\tilde x,\tilde y,\tilde u,\tilde v)=(t,y,x,u,v)$,
the form~\eqref{eq:dNPointSymForm}.
It is obvious that the transformation~$\mathscr J$ is a point symmetry of the system~\eqref{eq:dNLaxPair}.
Then collecting $v$-components in the expanded conditions
$\Phi_*\bar Z(1)\in\bar{\mathfrak m}_6$,
$\Phi_*\bar P^z(1)\in\bar{\mathfrak m}_3$, $z\in\{x,y\}$, and
$\Phi_*\bar D^t(1)\in\bar{\mathfrak m}_1$,
lead to the equations $V_t=V_x=V_y=V_u=0$.
Hence it has the form $V=Av+B$, where $A$ and~$B$ are arbitrary constants with $A\ne0$.

Each point transformation~$\Phi$ whose components are of the above form
satisfies the condition $\Phi_*\mathfrak g_{\rm L}\subseteq\mathfrak g_{\rm L}$,
which means that this form cannot be constrained more
within the framework of the purely algebraic method.

We complete the proof with computing by the direct method.
More specifically, using the chain rule,
we derive expressions for derivatives of $(\tilde u,\tilde v)$ with respect to $(\tilde t,\tilde x,\tilde y)$
up to order two in terms of the variables and derivatives without tildes,
successively substitute the obtained expressions
and the expressions for the leading derivatives~$v_t$ and~$v_y$ in view of the system~\eqref{eq:dNLaxPair}
into the system~\eqref{eq:dNLaxPair} written in terms of variables with tildes
and split the derived equations with respect to the other (parametric) derivatives of~$u$ and~$v$ up to order two.
The resulting system of equations for parameters of point symmetry transformations of the system~\eqref{eq:dNLaxPair}
reduces to the single equation $C^3=A^2$, i.e., $C=A^{2/3}>0$.
The computation can be simplified by factoring out the transformation~$\mathscr J$ and the transformations
related to varying the pseudogroup parameters $T$, $X^0$, $Y^0$, $W^0$, $W^1$, $W^2$ and~$B$,
which are obviously point symmetry transformations of~\eqref{eq:dNLaxPair}.
In other words, we can set $T=t$, $X^0=Y^0=W^0=W^1=W^2=0$ and $B=0$.
\end{proof}

\begin{corollary}\label{cor:dNLaxPairDiscrSyms}
A complete list of discrete point symmetry transformations of the system~\eqref{eq:dNLaxPair}
that are independent up to composing with each other and with continuous point symmetry transformations of this system
is exhausted by three commuting involutions, which can be chosen to be
the permutation~$\bar{\mathscr J}$ of the variables~$x$ and~$y$, $(\tilde t,\tilde x,\tilde y,\tilde u,\tilde v)=(t,y,x,u,v)$,
and two transformations~$\bar{\mathscr I}^{\rm i}$ and~$\bar{\mathscr I}^v$ alternating the signs of $(t,x,y)$ and of $v$, respectively,
$\bar{\mathscr I}^{\rm i}\colon(\tilde t,\tilde x,\tilde y,\tilde u,\tilde v)=(-t,-x,-y,u,v)$ and
$\bar{\mathscr I}^v\colon(\tilde t,\tilde x,\tilde y,\tilde u,\tilde v)=(t,x,y,u,-v)$.
\end{corollary}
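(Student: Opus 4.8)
The plan is to read off the component structure of the pseudogroup~$G_{\rm L}$ directly from its explicit description in Theorem~\ref{thm:dNCompletePointSymGroupOfLaxRepresentation}, since that theorem already reduces the problem to a finite combinatorial analysis of discrete invariants. First I would single out the identity component~$G_{\rm L,id}$: it consists of the transformations of the form displayed in Theorem~\ref{thm:dNCompletePointSymGroupOfLaxRepresentation}, with no factor of~$J$, for which $T_t>0$ and $A>0$. Indeed, the functional parameters $T$ (restricted to $T_t>0$), $X^0$, $Y^0$, $W^0$, $W^1$, $W^2$ together with the constant~$B$ range over spaces that are connected and contractible to the identity values $T=t$, $X^0=\dots=W^2=0$, $B=0$, while the positive constant~$A$ is joined to $A=1$ along the ray $A>0$; composing these continuous deformations yields a path in~$G_{\rm L}$ from any such transformation to the identity.

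Next I would observe that, apart from membership in~$G_{\rm L,id}$, a transformation from~$G_{\rm L}$ carries only three discrete invariants: the sign $\sgn T_t$, the sign $\sgn A$, and the presence or absence of the permutation factor~$J$. None of these can be altered by a continuous deformation inside~$G_{\rm L}$, because $T_t$ and~$A$ never vanish and~$J$ is not of the displayed continuous form. Hence the quotient $G_{\rm L}/G_{\rm L,id}$ is generated by at most these three $\mathbb Z_2$-valued data, so it has order at most eight.

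To complete the identification I would exhibit the three listed involutions as independent representatives of exactly these data. The transformation~$\bar I^{\rm i}$ is the element of the displayed form with $T(t)=-t$ and $A=1$ (all other parameters trivial), for which $A^{2/3}T_t^{1/3}=-1$, $A^2=1$ and $T_{tt}=0$, so it realizes $\sgn T_t=-1$ while fixing $u$ and~$v$; the transformation~$\bar I^v$ is the element with $A=-1$, $T=t$ and all other parameters trivial, for which $A^{2/3}=1$ and $A^2=1$, so it realizes $\sgn A=-1$ while fixing $t$, $x$, $y$, $u$; and $\bar J=J$ realizes the permutation factor. A direct check from these explicit formulas shows that each of the three transformations is an involution and that they pairwise commute. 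Since the three generators change three distinct discrete invariants, they are independent, the quotient $G_{\rm L}/G_{\rm L,id}$ is exactly $\mathbb Z_2\times\mathbb Z_2\times\mathbb Z_2$, and the corollary follows.

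The main subtlety to be careful about is the consistent real interpretation of the fractional powers $A^{2/3}$ and $T_t^{1/3}$ entering the coefficients in Theorem~\ref{thm:dNCompletePointSymGroupOfLaxRepresentation}; it is precisely this interpretation that keeps $\sgn T_t$ and $\sgn A$ genuinely independent and prevents the sign of~$A$ from being absorbed into the spatial scaling. One must verify that neither~$\bar I^{\rm i}$ nor~$\bar I^v$ lies in~$G_{\rm L,id}$ nor in the coset of the other, i.e.\ that these signs cannot be removed by any continuous choice of the remaining parameters, which is the only point where the argument uses more than the bare list of generators.
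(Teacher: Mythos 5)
Your proposal is correct and follows essentially the same route as the paper: Corollary~\ref{cor:dNLaxPairDiscrSyms} is stated there as an immediate consequence of the explicit description of~$G_{\rm L}$ in Theorem~\ref{thm:dNCompletePointSymGroupOfLaxRepresentation}, with the identity component singled out by $T_t>0$, $A>0$ and the quotient generated by the three $\mathbb Z_2$-invariants $\sgn T_t$, $\sgn A$ and the $J$-parity. Your fleshing out of the connectivity of the parameter space, the realization of the cosets by $\bar J$, $\bar I^{\rm i}$, $\bar I^v$, and the remark on the real interpretation of the fractional powers $A^{2/3}$, $T_t^{1/3}$ (which is what keeps the two sign invariants independent) are exactly the details the paper leaves implicit.
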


Therefore, analogously to the pseudogroup~$G$,
the quotient group of the point-symmetry pseudogroup~$G_{\rm L}$
of the nonlinear Lax representation~\eqref{eq:dNLaxPair} of the dispersionless Nizhnik equation~\eqref{eq:dN}
with respect to its identity component is isomorphic to the group $\mathbb Z_2\times\mathbb Z_2\times\mathbb Z_2$.

\begin{remark}\label{rem:ActionOfIsOnLaxRepresentation}
The point transformation~$\bar{\mathscr I}^{\rm s}$: $(\tilde t,\tilde x,\tilde y,\tilde u,\tilde v)=(t,-x,-y,-u,v)$,
which is the trivial extension of the discrete point symmetry transformation~$\mathscr I^{\rm s}$ of the equation~\eqref{eq:dN} to~$v$,
maps the nonlinear Lax representation~\eqref{eq:dNLaxPair} of the equation~\eqref{eq:dN}
to an equivalent nonlinear Lax representation of the same equation,
\[
v_t=-\frac13\left(v_x^3+\frac{u_{xy}^3}{v_x^3}\right)+u_{xx}v_x+\frac{u_{xy}u_{yy}}{v_x},\quad
v_y=\frac{u_{xy}}{v_x}.
\]
\end{remark}

\section{Point-symmetry pseudogroup\\ of dispersionless Nizhnik system}\label{sec:PointSymGroupOfdNSystem}

The equation~\eqref{eq:dN} is in fact a potential equation of the dispersionless counterpart
\begin{gather}\label{eq:dNSystem}
p_t=(h^1p)_x+(h^2p)_y,\quad
h^1_y=p_x,\quad
h^2_x=p_y
\end{gather}
of the original symmetric Nizhnik system \cite[Eq.~(4)]{nizh1980a}, cf.\ footnote~\ref{fnt:dNVariousForms}.
(We re-denote the dependent variables and scale the system variables
for canceling the coefficient~3 on the nonlinear summands
and for setting the constant parameters~$k_1$ and~$k_2$ to~1.)
Indeed, using the last two equations of the system~\eqref{eq:dNSystem} as ``short'' conservation laws,
we introduce the potentials~$\varphi^1$ and~$\varphi^2$ defined by the equations
\[
\varphi^1_x=h^1,\quad \varphi^1_y=p
\quad\mbox{and}\quad
\varphi^2_y=h^2,\quad \varphi^2_x=p.
\]
Therefore, we also have the ``short'' first-level potential conservation law $\varphi^1_y=\varphi^2_x$,
for which the associated second-level potential~$u$ is defined by the equations
with $u_x=\varphi^1$, $u_y=\varphi^2$.%
\footnote{%
See \cite[Section~3.5]{popo2008a} and the end of Section~VI in~\cite{kunz2008a} for related terminology.
The idea of the iterative procedure of introducing potentials can be traced back to~\cite{wahl1975a}.
}
The dependent variables of the system~\eqref{eq:dNSystem} are expressed in terms of the potential~$u$ alone,
$p=u_{xy}$, $h^1=u_{xx}$ and $h^2=u_{yy}$.
Substituting these expressions into the first equation of the system~\eqref{eq:dNSystem},
we derive the equation~\eqref{eq:dN} for the potential~$u$.

Since the equation~\eqref{eq:dN} and the system~\eqref{eq:dNSystem} are related in a nonlocal way,
the maximal Lie invariance (pseudo)algebra~$\mathfrak g_{\rm dN}$ and the point-symmetry pseudogroup~$G_{\rm dN}$
the system~\eqref{eq:dNSystem} cannot be directly derived
from their counterparts~$\mathfrak g$ and~$G$ for the equation~\eqref{eq:dN}
and should be computed independently.
At the same time, each Lie-symmetry vector field~$Q$ of~\eqref{eq:dN}
as belonging to the span of the vector fields~\eqref{eq:dNMIA}
induces a Lie-symmetry vector field~$\hat Q$ of~\eqref{eq:dNSystem}.
The induction map $\mathcal M_*\colon\mathfrak g\to\mathfrak g_{\rm dN}$
is the composition of the standard second prolongation
and the projection from the second-order jet space over the basic space $\mathbb R^3_{t,x,y}\times\mathbb R_u$
onto the space with coordinates $(t,x,y,p,h^1,h^2)$ under the identification $(p,h^1,h^2)=(u_{xy},u_{xx},u_{yy})$.
It is obvious that $\mathcal M_*$ is a Lie-algebra homomorphism with
$\ker\mathcal M_*=\big\langle R^x(\alpha),R^y(\beta),Z(\sigma)\big\rangle$.
The problem is to prove that the homomorphism~$\mathcal M_*$ is surjective,
i.e., it is an epimorphism, $\mathop{\rm im}\mathcal M_*=\mathfrak g_{\rm dN}$.
This is really the case since
computations within the framework of the classical Lie infinitesimal approach show that
\[
\mathfrak g_{\rm dN}=\big\langle\hat D^t(\tau),\hat D^{\rm s},\hat P^x(\chi),\hat P^y(\rho)\big\rangle=\mathcal M_*\mathfrak g,
\]
where
$\hat D^t(\tau)=\mathcal M_*D^t(\tau)$,
$\hat D^{\rm s}=\mathcal M_*D^{\rm s}$,
$\hat P^x(\chi)=\mathcal M_*P^x(\chi)$,
$\hat P^y(\rho)=\mathcal M_*P^y(\rho)$, i.e.,
\begin{gather*}
\hat D^t(\tau)=\tau\p_t+\tfrac13\tau_tx\p_x+\tfrac13\tau_ty\p_y
-\tfrac23\tau_tp\p_p-\tfrac13(2\tau_th^1+\tau_{tt}x)\p_{h^1}-\tfrac13(2\tau_th^2+\tau_{tt}y)\p_{h^2},\\
\hat D^{\rm s}=x\p_x+y\p_y+p\p_p+h^1\p_{h^1}+h^2\p_{h^2},\quad
\hat P^x(\chi)=\chi\p_x-\chi_t\p_{h^1},\quad
\hat P^y(\rho)=\rho\p_y-\rho_t\p_{h^2},
\end{gather*}
and the parameter functions~$\tau$, $\chi$ and $\rho$ run through the set of smooth functions of~$t$.

Since $\mathcal M_*$ is a Lie-algebra epimorphism with
$\ker\mathcal M_*=\big\langle R^x(\alpha),R^y(\beta),Z(\sigma)\big\rangle$,
the nonzero commutation relations between the vector fields spanning~$\mathfrak g_{\rm dN}$
are exhausted, up to the antisymmetry of the Lie bracket, by
\begin{gather}\label{eq:dNSystemCommRelations}
\begin{split}&
[\hat D^t(\tau^1),\hat D^t(\tau^2)]=\hat D^t(\tau^1\tau^2_t-\tau^1_t\tau^2),\\[.5ex]&
[\hat D^t(\tau),\hat P^x(\chi)]=\hat P^x\big(\tau\chi_t-\tfrac13\tau_t\chi\big),\quad
[\hat D^t(\tau),\hat P^y(\rho)]=\hat P^y\big(\tau\rho_t-\tfrac13\tau_t\rho\big),\\[.5ex]&
[\hat D^{\rm s},\hat P^x(\chi)]=-\hat P^x(\chi),\quad
[\hat D^{\rm s},\hat P^y(\rho)]=-\hat P^y(\rho).
\end{split}
\end{gather}
Accordingly, we were able to construct the following proper megaideals of the algebra~$\mathfrak g_{\rm dN}$:
\begin{gather*}
\hat{\mathfrak m}_1:={\mathfrak g_{\rm dN}}'=\big\langle\hat D^t(\tau),\hat P^x(\chi),\hat P^y(\rho)\big\rangle,\\
\hat{\mathfrak m}_2:=\mathfrak r_{\rm dN}=\big\langle\hat D^{\rm s},\hat P^x(\chi),\hat P^y(\rho)\big\rangle,\\
\hat{\mathfrak m}_3:=\hat{\mathfrak m}_2'=\hat{\mathfrak m}_1\cap\hat{\mathfrak m}_2=
\big\langle\hat P^x(\chi),\hat P^y(\rho)\big\rangle.
\end{gather*}

In view of Theorem~\ref{thm:dNPointSymPseudogroup},
the analogous map $\mathcal M\colon G\to G_{\rm dN}$ is a pseudogroup homomorphism,
where $\ker\mathcal M$ is the pseudosubgroup of~$G$
constituted by the transformations of the form~\eqref{eq:dNPointSymForm}
with $T=t$, $C=1$ and $X^0=Y^0=0$.
Although the problem is again to prove the surjection property of~$\mathcal M$,
the presence of this homomorphism allows us to easily make
a conjecture on the general form of point symmetry transformations of the system~\eqref{eq:dNSystem},
which we then prove using the modified version of the megaideal-based method that was suggested in~\cite{malt2021a}.

\begin{theorem}\label{thm:dNSystemPointSymPseudogroup}
The point-symmetry pseudogroup~$G_{\rm dN}$ of the dispersionless Nizhnik system~\eqref{eq:dNSystem}
is generated by the transformations of the form
\begin{gather}\label{eq:dNSystemPointSymForm}
\begin{split}&
\tilde t=T(t),\quad
\tilde x=CT_t^{1/3}x+X^0(t),\quad
\tilde y=CT_t^{1/3}y+Y^0(t),\\&
\tilde p=\frac C{T_t^{2/3}}p,\quad
\tilde h^1=\frac C{T_t^{2/3}}h^1-\frac{CT_{tt}}{3T_t^{5/3}}x-\frac{X^0_t}{T_t},\quad
\tilde h^2=\frac C{T_t^{2/3}}h^2-\frac{CT_{tt}}{3T_t^{5/3}}y-\frac{Y^0_t}{T_t}
\end{split}
\end{gather}
and the transformation~$\hat{\mathscr J}$: $\tilde t=t$, $\tilde x=y$, $\tilde y=x$, $\tilde p=p$, $\tilde h^1=h^2$, $\tilde h^2=h^1$.
Here $T$, $X^0$ and $Y^0$ are arbitrary smooth functions of~$t$ with $T_t\neq0$,
and $C$ is an arbitrary nonzero constant.
\end{theorem}

\begin{proof}
Although the procedure of proving is in general analogous to
that in the proof of Theorem~\ref{thm:dNPointSymPseudogroup},
computational details are essentially different.
Consider a point transformation~$\Phi$ in the space with the coordinates $(t,x,y,p,h^1,h^2)$,
\[
\Phi\colon\ (\tilde t,\tilde x,\tilde y,\tilde p,\tilde h^1,\tilde h^2)=(T,X,Y,P,H^1,H^2),
\]
where $(T,X,Y,P,H^1,H^2)$ is a tuple of smooth functions of $(t,x,y,p,h^1,h^2)$ with nonvanishing Jacobian.
If it is a point symmetry of the system~\eqref{eq:dNSystem},
then the pushforward~$\Phi_*$ of vector fields by~$\Phi$ satisfies the conditions
$\Phi_*\hat{\mathfrak m}_3\subseteq\hat{\mathfrak m}_3$,
$\Phi_*(\hat{\mathfrak m}_1\setminus\hat{\mathfrak m}_3)\subseteq\hat{\mathfrak m}_1\setminus\hat{\mathfrak m}_3$ and
$\Phi_*(\hat{\mathfrak m}_2\setminus\hat{\mathfrak m}_3)\subseteq\hat{\mathfrak m}_2\setminus\hat{\mathfrak m}_3$,
and, moreover, $\ker\Phi_*=\{0\}$.
For evaluating~$\Phi_*$,
we choose the following linearly independent vector fields from~$\mathfrak g$:
\begin{gather*}
Q^{1z}:=\hat P^z(1),\quad Q^{2z}:=\hat P^z(t),\quad Q^{3z}:=\hat P^z(t^2),\\
Q^4:=\hat D^t(1),\quad Q^5:=\hat D^t(t),\quad Q^6:=\hat D^{\rm s}
\end{gather*}
with $z\in\{x,y\}$. Since
$Q^{1z},Q^{2z},Q^{3z}\in\hat{\mathfrak m}_3$,
$Q^4,Q^5\in\hat{\mathfrak m}_1\setminus\hat{\mathfrak m}_3$ and
$Q^6\in\hat{\mathfrak m}_2\setminus\hat{\mathfrak m}_3$,
then
\begin{gather}\label{eq:dNSystemMainPushforwards}
\begin{split}&
\Phi_*Q^{iz}=\tilde P^x(\tilde\chi^{iz})+\tilde P^y(\tilde\rho^{iz}),\quad(\tilde\chi^{iz},\tilde\rho^{iz})\ne(0,0) ,\quad i=1,2,3,
\\&
\Phi_*Q^i=\tilde D^t(\tilde\tau^i)+\tilde P^x(\tilde\chi^i)+\tilde P^y(\tilde\rho^i),\quad \tilde\tau^i\ne0,\quad i=4,5,
\\&
\Phi_*Q^i=\lambda^i\tilde D^{\rm s}+\tilde P^x(\tilde\chi^i)+\tilde P^y(\tilde\rho^i),\quad \lambda^i\ne0,\quad i=6.
\end{split}
\end{gather}
Let \eqref{eq:dNSystemMainPushforwards}$_{iz}$, $i=1,2,3$, and  \eqref{eq:dNSystemMainPushforwards}$_i$, $i=4,5,6$,
refer to the $i$th equation in the system~\eqref{eq:dNSystemMainPushforwards} with $z\in\{x,y\}$ for $i=1,2,3$.

The identity $\Phi_*Q^{3z}-2\Phi_*(t)\Phi_*Q^{2z}+\Phi_*(t^2)\Phi_*Q^{1z}=0$
and the corresponding combination of the equations
\eqref{eq:dNSystemMainPushforwards}$_{1z}$, \eqref{eq:dNSystemMainPushforwards}$_{2z}$ and \eqref{eq:dNSystemMainPushforwards}$_{3z}$
imply the system
\begin{gather*}
\tilde\chi^{3z}(T)-2t\tilde\chi^{2z}(T)+t^2\tilde\chi^{1z}(T)=0,\quad
\tilde\chi^{3z}_{\tilde t}(T)-2t\tilde\chi^{2z}_{\tilde t}(T)+t^2\tilde\chi^{1z}_{\tilde t}(T)=0,
\\
\tilde\rho^{3z}(T)-2t\tilde\rho^{2z}(T)+t^2\tilde\rho^{1z}(T)=0,\quad
\tilde\rho^{3z}_{\tilde t}(T)-2t\tilde\rho^{2z}_{\tilde t}(T)+t^2\tilde\rho^{1z}_{\tilde t}(T)=0,
\end{gather*}
whose differential consequences are
$\tilde\chi^{2z}(T)=t\tilde\chi^{1z}(T)$ and
$\tilde\rho^{2z}(T)=t\tilde\rho^{1z}(T)$.
Similarly to the previous proofs, we derive from these equations that $T=T(t)$ with $T_t\ne0$.
We collect the components in the equations~\eqref{eq:dNSystemMainPushforwards}$_{1z}$
and in the combination $\Phi_*(t)\eqref{eq:dNSystemMainPushforwards}_{1z}-\eqref{eq:dNSystemMainPushforwards}_{2z}$,
deriving the constraints
\begin{gather*}
X_z=\tilde\chi^{1z}(T),\quad
Y_z=\tilde\rho^{1z}(T),\quad
P_z=0,\quad
X_{h^z}=Y_{h^z}=P_{h^z}=0,\\
H^1_z=-\tilde\chi^{1z}_{\tilde t}(T)=-\frac{X_{zt}}{T_t},\quad
H^2_z=-\tilde\rho^{1z}_{\tilde t}(T)=-\frac{Y_{zt}}{T_t},\\
H^1_{h^z}=\tilde\chi^{2z}_{\tilde t}(T)-t\tilde\chi^{1z}_{\tilde t}(T)=\frac{\tilde\chi^{1z}(T)}{T_t}=\frac{X_z}{T_t},\quad
H^2_{h^z}=\tilde\rho^{2z}_{\tilde t}(T)-t\tilde\rho^{1z}_{\tilde t}(T)=\frac{\tilde\rho^{1z}(T)}{T_t}=\frac{Y_z}{T_t}
\end{gather*}
with $h^x:=h^1$ and $h^y:=h^2$.
Hence
\begin{gather}\label{eq:dNSystemPreliminaryFormOfPointSyms}
\begin{split}&
X=X^1(t)x+X^2(t)y+X^0(t,p),\quad
Y=Y^1(t)x+Y^2(t)y+Y^0(t,p),\quad
P=P(t,p),\\&
H^1=\frac{X^1}{T_t}h^1+\frac{X^2}{T_t}h^2-\frac{X^1_t}{T_t}x-\frac{X^2_t}{T_t}y+\breve H^1(t,p),\\&
H^2=\frac{Y^1}{T_t}h^1+\frac{Y^2}{T_t}h^2-\frac{Y^1_t}{T_t}x-\frac{Y^2_t}{T_t}y+\breve H^2(t,p),
\end{split}
\end{gather}
where $X^1$, $X^2$, $X^0$, $Y^1$, $Y^2$, $Y^0$, $P$, $\breve H^1$ and $\breve H^2$
are sufficiently smooth functions of their arguments with $P_p(X^1Y^2-X^2Y^1)\ne0$.

The componentwise splitting of the equations~\eqref{eq:dNSystemMainPushforwards}$_4$,
$3\eqref{eq:dNSystemMainPushforwards}_5-3\Phi_*(t)\eqref{eq:dNSystemMainPushforwards}_4$ and~\eqref{eq:dNSystemMainPushforwards}$_6$
leads to the system
\begin{gather*}
T_t=\tilde\tau^4(T),\quad
X_t= \tfrac13\tilde\tau^4_{\tilde t}(T)X+\tilde\chi^4(T),\quad
Y_t= \tfrac13\tilde\tau^4_{\tilde t}(T)Y+\tilde\rho^4(T),\quad
P_t=-\tfrac23\tilde\tau^4_{\tilde t}(T)P,\\
H^1_t=-\tfrac23\tilde\tau^4_{\tilde t}(T)H^1-\tfrac13\tilde\tau^4_{\tilde t\tilde t}(T)X-\tilde\chi^4_{\tilde t}(T),\quad
H^2_t=-\tfrac23\tilde\tau^4_{\tilde t}(T)H^2-\tfrac13\tilde\tau^4_{\tilde t\tilde t}(T)Y-\tilde\rho^4_{\tilde t}(T),
\\[1ex]
\tilde\tau^5(T)=t\tilde\tau^4(T),\quad
xX_x+yX_y-2pX_p=X+3\tilde\chi^5(T)-3t\tilde\chi^4(T),\\
xY_x+yY_y-2pY_p=Y+3\tilde\rho^5(T)-3t\tilde\rho^4(T),\quad
pP_p=P,\\
xH^1_x+yH^1_y-2pH^1_p-2h^1H^1_{h^1}-2h^2H^1_{h^2}=-2H^1+(T_t^{-1})_tX-3\tilde\chi^5_{\tilde t}(T)+3t\tilde\chi^4_{\tilde t}(T),\\
xH^2_x+yH^2_y-2pH^2_p-2h^1H^2_{h^1}-2h^2H^2_{h^2}=-2H^2+(T_t^{-1})_tY-3\tilde\rho^5_{\tilde t}(T)+3t\tilde\rho^4_{\tilde t}(T),
\\[1ex]
\noprint{
tT_t=\tilde\tau^5(T),\quad
tX_t+\tfrac13xX_x+\tfrac13yX_y-\tfrac23pX_p= \tfrac13\tilde\tau^5_{\tilde t}(T)X+\tilde\chi^5(T),\\
tY_t+\tfrac13xY_x+\tfrac13yY_y-\tfrac23pY_p= \tfrac13\tilde\tau^5_{\tilde t}(T)Y+\tilde\rho^5(T),\quad
tP_t-\tfrac23pP_p=-\tfrac23\tilde\tau^5_{\tilde t}(T)P,\\
tH^1_t+\tfrac13xH^1_x+\tfrac13yH^1_y-\tfrac23pH^1_p-\tfrac23h^1H^1_{h^1}-\tfrac23h^2H^1_{h^2}
=-\tfrac23\tilde\tau^5_{\tilde t}(T)H^1-\tfrac13\tilde\tau^5_{\tilde t\tilde t}(T)X-\tilde\chi^5_{\tilde t}(T),\\
tH^2_t+\tfrac13xH^2_x+\tfrac13yH^2_y-\tfrac23pH^2_p-\tfrac23h^1H^2_{h^1}-\tfrac23h^2H^2_{h^2}
=-\tfrac23\tilde\tau^5_{\tilde t}(T)H^2-\tfrac13\tilde\tau^5_{\tilde t\tilde t}(T)Y-\tilde\rho^5_{\tilde t}(T).
\\[1ex]
}
xX_x+yX_y+pX_p=\lambda^6X+\tilde\chi^6(T),\quad
xY_x+yY_y+pY_p=\lambda^6Y+\tilde\rho^6(T),\quad
pP_p=\lambda^6P,\\
xH^1_x+yH^1_y+pH^1_p+h^1H^1_{h^1}+h^2H^1_{h^2}=\lambda^6H^1-\tilde\chi^6_{\tilde t}(T),
\\
xH^2_x+yH^2_y+pH^2_p+h^1H^2_{h^1}+h^2H^2_{h^2}=\lambda^6H^2-\tilde\rho^6_{\tilde t}(T).
\end{gather*}
Here we at once take into account the constraints $\tilde\tau^4(T)=T_t$ and $\tilde\tau^5(T)=tT_t$,
in view of which we have
$\tilde\tau^4_{\tilde t}(T)=T_{tt}/T_t$,
$\tilde\tau^5_{\tilde t}(T)-t\tilde\tau^4_{\tilde t}(T)=1$,
$\tilde\tau^4_{\tilde t\tilde t}(T)=(T_{tt}/T_t)_t/T_t$ and
$\tilde\tau^5_{\tilde t\tilde t}(T)-t\tilde\tau^4_{\tilde t\tilde t}(T)=-(T_t^{-1})_t$. 
We substitute the earlier derived form~\eqref{eq:dNSystemPreliminaryFormOfPointSyms} of the components of~$\Phi$ into the above system,
split the expanded system with respect to~$(x,y,h^1,h^2)$ and solve the obtained system of constraints
\begin{gather*}
X^j_t=\frac{T_{tt}}{3T_t}X^j,\ \
Y^j_t=\frac{T_{tt}}{3T_t}Y^j,\ \ j=1,2,\quad
X^0_t=\frac{T_{tt}}{3T_t}X^0+\tilde\chi^4(T),\quad
Y^0_t=\frac{T_{tt}}{3T_t}Y^0+\tilde\rho^4(T),
\\
P_t=-\frac{2T_{tt}}{3T_t}P,\ \,
\breve H^1_t=-\frac{2T_{tt}}{3T_t}\breve H^1\!-\!\left(\!\frac{T_{tt}}{T_t}\!\right)_{\!\!t}\!\frac{X^0}{3T_t}-\tilde\chi^4_{\tilde t}(T),\ \,
\breve H^2_t=-\frac{2T_{tt}}{3T_t}\breve H^2\!-\!\left(\!\frac{T_{tt}}{T_t}\!\right)_{\!\!t}\!\frac{Y^0}{3T_t}-\tilde\rho^4_{\tilde t}(T),
\\
pX^0_p=X^0+\tilde\chi^6(T),\quad
-2pX^0_p=X^0+3\tilde\chi^5(T)-3t\tilde\chi^4(T),\quad
\\
pY^0_p=Y^0+\tilde\rho^6(T),\quad
-2pY^0_p=Y^0+3\tilde\rho^5(T)-3t\tilde\rho^4(T),\quad
\quad
pP_p=P,\quad \lambda^6=1,\quad
\\
p\breve H^1_p=\breve H^1-\tilde\chi^6_{\tilde t}(T),\quad
3xH^1_x+3yH^1_y=(T_t^{-1})_tX-2\tilde\chi^6_{\tilde t}(T)-3\tilde\chi^5_{\tilde t}(T)+3t\tilde\chi^4_{\tilde t}(T),
\\
p\breve H^2_p=\breve H^2-\tilde\rho^6_{\tilde t}(T),\quad
3xH^2_x+3yH^2_y=(T_t^{-1})_tY-2\tilde\rho^6_{\tilde t}(T)-3\tilde\rho^5_{\tilde t}(T)+3t\tilde\rho^4_{\tilde t}(T).
\end{gather*}
This system implies that in fact $X^0=X^0(t)$ and $Y^0=Y^0(t)$.
The other its independent consequences are only
\begin{gather*}
\tilde\chi^4(T)=X^0_t-\frac{T_{tt}}{3T_t}X^0,\quad
\tilde\rho^4(T)=Y^0_t-\frac{T_{tt}}{3T_t}Y^0,
\\
3\big(\tilde\chi^5(T)-t\tilde\chi^4(T)\big)=\tilde\chi^6(T)=-X^0,\quad
3\big(\tilde\rho^5(T)-t\tilde\rho^4(T)\big)=\tilde\rho^6(T)=-Y^0,
\\
X^j_t=\frac{T_{tt}}{3T_t}X^j,\ \
Y^j_t=\frac{T_{tt}}{3T_t}Y^j,\ \ j=1,2,\quad
P_t=-\frac{2T_{tt}}{3T_t}P,\quad
pP_p=P,
\\
p\breve H^1_p=\breve H^1+\frac{X^0_t}{T_t},\quad
\breve H^1_t=-\frac{2T_{tt}}{3T_t}\breve H^1-\frac{X^0_{tt}}{T_t}+\frac{T_{tt}}{3T_t^2}X^0_t,
\\
p\breve H^2_p=\breve H^2+\frac{Y^0_t}{T_t},\quad
\breve H^2_t=-\frac{2T_{tt}}{3T_t}\breve H^2-\frac{Y^0_{tt}}{T_t}+\frac{T_{tt}}{3T_t^2}Y^0_t.
\noprint{
\\
(T_t^{-1})_tX^0-2\tilde\chi^6_{\tilde t}(T)-3\tilde\chi^5_{\tilde t}(T)+3t\tilde\chi^4_{\tilde t}(T)\equiv0,
\\
(T_t^{-1})_tY^0-2\tilde\rho^6_{\tilde t}(T)-3\tilde\rho^5_{\tilde t}(T)+3t\tilde\rho^4_{\tilde t}(T)\equiv0.
}
\end{gather*}
The equations for the parameter functions involved in $\Phi$ integrate to
\begin{gather}\label{eq:dNSystemPreliminaryFormOfPointSyms2}
X^j=A_jT_t^{1/3},\ \
Y^j=B_jT_t^{1/3},\ \
P=\frac{Cp}{T_t^{2/3}},\ \
\breve H^1=\frac{E_1p}{T_t^{2/3}}-\frac{X^0_t}{T_t},\ \
\breve H^2=\frac{E_2p}{T_t^{2/3}}-\frac{Y^0_t}{T_t},
\end{gather}
where $A_j$, $B_j$, $C$ and $E_j$, $j=1,2$, are constants with $C(A_1B_2-A_2B_1)\ne0$.
\noprint{
\begin{gather*}
\tilde t=T,\quad
\tilde x=T_t^{1/3}(A_1x+A_2y)+X^0,\quad
\tilde y=T_t^{1/3}(B_1x+B_2y)+Y^0,\quad
\tilde p=\frac{Cp}{T_t^{2/3}},\\
\tilde h^1=\frac{A_1h^1+A_2h^1+E_1p}{T_t^{2/3}}-\frac{T_{tt}}{3T_t^{5/3}}(A_1x+A_2y)-\frac{X^0_t}{T_t},
\\
\tilde h^2=\frac{B_1h^1+B_2h^1+E_2p}{T_t^{2/3}}-\frac{T_{tt}}{3T_t^{5/3}}(B_1x+B_2y)-\frac{Y^0_t}{T_t}.
\end{gather*}
}

The transformations defined by~\eqref{eq:dNSystemPreliminaryFormOfPointSyms}--\eqref{eq:dNSystemPreliminaryFormOfPointSyms2}
constitute a pseudogroup~$\mathfrak G$,
which contains the set~$\mathfrak N$ of the transformations of the form~\eqref{eq:dNSystemPointSymForm} with $C=1$
as a normal pseudosubgroup.
The later transformations are point symmetries of the system~\eqref{eq:dNSystem},
which can be easily checked by the direct method
although it is also clear due to the observation that they are generated by Lie symmetries of the system~\eqref{eq:dNSystem}
and the time reflection $(\tilde t,\tilde x,\tilde y,\tilde p,\tilde h^1,\tilde h^2)=(-t,-x,-y,p,h^1,h^2)$.
The pseudogroup~$\mathfrak G$ splits over~$\mathfrak N$, $\mathfrak G=\mathfrak H\ltimes\mathfrak N$,
where the subgroup~$\mathfrak H$ of~$\mathfrak G$ consists of the transformations of the form
\begin{gather*}
T=t,\quad
X=A_1x+A_2y,\quad
Y=B_1x+B_2y,\\
P=Cp,\quad
H^1=A_1h^1+A_2h^2+E_1p,\quad
H^2=B_1h^1+B_2h^2+E_2p,\quad
\end{gather*}
$A_j$, $B_j$, $C$ and $E_j$, $j=1,2$, are arbitrary constants with $C(A_1B_2-A_2B_1)\ne0$.
Therefore, we can factor out the transformations from~$\mathfrak N$ and
consider only the transformations from~$\mathfrak H$ in the remainder of the proof.

It is easy to check that $\Psi_*\mathfrak g_{\rm dN}=\mathfrak g_{\rm dN}$ for any $\Psi\in\mathfrak H$.
This means that no constraints for the above constant parameters can be found within the algebraic approach.
Therefore, for completing the proof, the direct method should necessarily be applied.
The computation is standard.
The chain rule implies expressions for first-order derivatives
of $(\tilde p,\tilde h^1,\tilde h^2)$ with respect to $(\tilde t,\tilde x,\tilde y)$
in terms of the variables and derivatives without tildes,
which we substitute jointly with the expressions for $(\tilde p,\tilde h^1,\tilde h^2)$
and, e.g., the expressions for the derivatives~$p_t$, $h^1_y$ and~$h^2_x$ according to the system~\eqref{eq:dNSystem}
into the system~\eqref{eq:dNSystem} written in terms of variables with tildes
and split the derived equations with respect to the other (parametric) first-order derivatives of~$(p,h^1,h^2)$.
As a result, we derive the system
\begin{gather*}
A_1A_2=B_1B_2=0,\quad
B_1E_1=A_1E_2,\quad
B_2E_1=A_2E_2,\quad
\\
A_1^{\,2}-E_1A_2=CB_2,\quad
A_2^{\,2}-E_1A_1=CB_1,\quad
B_1^{\,2}-E_2B_2=CA_2,\quad
B_2^{\,2}-E_2B_1=CA_1.
\end{gather*}
In view of the inequality $C(A_1B_2-A_2B_1)\ne0$,
it implies that $E_1=E_2=0$ and $(A_1,A_2,B_1,B_2)\in\big\{(C,0,0,C),\,(0,C,C,0)\big\}$.
\end{proof}

\begin{corollary}\label{cor:dNInductionOfAlgAndGroup}
$\mathcal M_*\mathfrak g=\mathfrak g_{\rm dN}$ and $\mathcal MG=G_{\rm dN}$.
In other words,
the maximal Lie invariance algebra~$\mathfrak g_{\rm dN}$ and the point-symmetry pseudogroup~$G_{\rm dN}$
the system~\eqref{eq:dNSystem} are induced by
their counterparts~$\mathfrak g$ and~$G$ for the equation~\eqref{eq:dN}, respectively.
\end{corollary}

\begin{corollary}\label{cor:dNSystemDiscrSyms}
A complete list of discrete point symmetry transformations of the system~\eqref{eq:dNSystem}
that are independent up to composing with each other and with continuous point symmetry transformations of this equation
is exhausted by three commuting involutions, which can be chosen to be
the permutation~$\mathscr J$ of the variables~$x$ and~$y$,
and two transformations~$\mathscr I^{\rm i}$ and~$\mathscr I^{\rm s}$
alternating the signs of $(t,x,y)$ and of $(x,y,p,h^1,h^2)$, respectively,
\begin{gather*}
\mathscr J\colon(\tilde t,\tilde x,\tilde y,\tilde p,\tilde h^1,\tilde h^2)=(t,y,x,p,h^2,h^1),\\[1ex]
\mathscr I^{\rm i}\colon(\tilde t,\tilde x,\tilde y,\tilde p,\tilde h^1,\tilde h^2)=(-t,-x,-y,p,h^1,h^2),\\[1ex]
\mathscr I^{\rm s}\colon(\tilde t,\tilde x,\tilde y,\tilde p,\tilde h^1,\tilde h^2)=(t,-x,-y,-p,-h^1,-h^2).
\end{gather*}
\end{corollary}

Hence again the quotient group of the point-symmetry pseudogroup~$G_{\rm dN}$ of the dispersionless Nizhnik system~\eqref{eq:dNSystem}
with respect to the identity component of this pseudogroup is isomorphic to the group $\mathbb Z_2\times\mathbb Z_2\times\mathbb Z_2$.

\section{Defining geometric properties}\label{sec:DefiningGeometricProperties}

We find geometric properties of the dispersionless Nizhnik equation~\eqref{eq:dN}
that completely define this equation.
In this section, by $u_\kappa$ or by $u_{\kappa_0\kappa_1\kappa_2}$
with the multi-index $\kappa=(\kappa_0,\kappa_1,\kappa_2)\in\mathbb N_0^{\,\,3}$
we denote the jet variable that is associated with the derivative
$\p^{\kappa_0+\kappa_1+\kappa_2}u/\p t^{\kappa_0}\p x^{\kappa_1}\p y^{\kappa_2}$.

\begin{lemma}\label{lem:dNEqsWithSameIA}
A partial differential equation of order less than or equal to three with three independent variables
is invariant with respect to the algebra~$\mathfrak g$ if and only if it is of the form
\begin{gather}\label{eq:dNEqsWithSameIA}
u_{txy}=(u_{xx}u_{xy})_x+(u_{xy}u_{yy})_y+u_{xy}u_{xyy}H\left(\frac{u_{xxx}-u_{yyy}}{u_{xyy}},\frac{u_{xxy}}{u_{xyy}}\right),
\end{gather}
where $H$ is an arbitrary smooth function of its arguments.
\end{lemma}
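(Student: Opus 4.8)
The plan is to impose the infinitesimal invariance criterion $\mathrm{pr}\,v(F)\big|_{F=0}=0$ for a general function $F$ defining a third-order equation and for every generating vector field $v$ of~$\mathfrak g$ listed in~\eqref{eq:dNMIA}, exploiting the fact that the coefficients $\tau,\chi,\rho,\alpha,\beta,\sigma$ are arbitrary functions of~$t$. First I would prolong each generator to the third-order jet space and, since every parameter function enters the prolongation together with its $t$-derivatives, split each invariance condition with respect to these derivatives, which are pointwise independent. Feeding in the \emph{shift} generators $Z(\sigma)$, $R^x(\alpha)$ and $R^y(\beta)$ immediately kills the dependence of~$F$ on~$u$ and on all jet variables carrying a $\p_t$ of low order: concretely $F$ is forced to be independent of $u$, of the first-order derivatives, and of $u_{tt},u_{tx},u_{ty}$ and the order-three derivatives differentiated in~$t$ other than $u_{txy}$.

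Next I would use $P^x(\chi)$ and $P^y(\rho)$. Splitting with respect to $\chi,\chi_t,\chi_{tt}$ (and the analogues for~$\rho$) removes the explicit dependence of~$F$ on $x$ and $y$, eliminates $u_{txx}$ and $u_{tyy}$, and yields the two linear relations $F_{u_{xx}}=-u_{xxy}F_{u_{txy}}$ and $F_{u_{yy}}=-u_{xyy}F_{u_{txy}}$. These say precisely that $F$ depends on the triple $(u_{xx},u_{yy},u_{txy})$ only through the combination $L:=u_{txy}-(u_{xx}u_{xy})_x-(u_{xy}u_{yy})_y$, so at this stage $F$ reduces to a function of the seven arguments $t$, $L$, $u_{xy}$, $u_{xxx}$, $u_{xxy}$, $u_{xyy}$, $u_{yyy}$, and one checks that $P^x$, $P^y$, $Z$, $R^x$, $R^y$ now act trivially on all seven.

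The decisive step is the action of $D^{\rm s}$ and $D^t(\tau)$ on these variables. Under $D^{\rm s}$ the quantities $L$ and $u_{xy}$ have weight one and the pure spatial third derivatives have weight zero; under $D^t(\tau)$ one finds $\p_t\mapsto\tau\p_t$, the transport of $u_{xxx}$ and $u_{yyy}$ carries the inhomogeneous term $-\tfrac13\tau_{tt}$ while $u_{xxy},u_{xyy},u_{xy}$ transform homogeneously, and — after assembling all contributions to the prolonged action on $L$ — the inhomogeneous pieces cancel, giving a clean relative-invariance law for $L$. I expect this cancellation, together with the matching of multipliers that turns $L/(u_{xy}u_{xyy})$, $(u_{xxx}-u_{yyy})/u_{xyy}$ and $u_{xxy}/u_{xyy}$ into genuine absolute invariants, to be the main obstacle, since it requires the full third-order prolongation of $D^t(\tau)$ and careful bookkeeping; in particular, the appearance of the difference $u_{xxx}-u_{yyy}$ rather than the sum is exactly what annihilates the $\tau_{tt}$-term.

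Finally, the generators induce on the seven-dimensional reduced space an algebra whose pointwise span is exhausted by $\p_t$ (from~$\tau$), the diagonal scaling from~$\tau_t$, the constant field $-\tfrac13(\p_{u_{xxx}}+\p_{u_{yyy}})$ from~$\tau_{tt}$, and $D^{\rm s}$, whose generic orbits are four-dimensional. Hence there are exactly three functionally independent invariants, so $\{L/(u_{xy}u_{xyy}),\,(u_{xxx}-u_{yyy})/u_{xyy},\,u_{xxy}/u_{xyy}\}$ is a complete set and $F=0$ must be a relation among them. On the locus $u_{xy}u_{xyy}\ne0$, solving this relation for the principal invariant $L/(u_{xy}u_{xyy})$ produces $L=u_{xy}u_{xyy}H\big((u_{xxx}-u_{yyy})/u_{xyy},\,u_{xxy}/u_{xyy}\big)$, which is~\eqref{eq:dNEqsWithSameIA}; the converse is immediate because each ingredient entering~\eqref{eq:dNEqsWithSameIA} is $\mathfrak g$-invariant.
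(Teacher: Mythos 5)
Your computational core coincides with the paper's argument: prolong the generators, split the invariance conditions with respect to the derivatives of the parameter functions $\tau,\chi,\rho,\alpha,\beta,\sigma$ (and then with respect to $x$ and $y$), and integrate the resulting linear system to conclude that every differential invariant of order at most three is a function of $\omega_0=L/(u_{xy}u_{xyy})$, $\omega_1=(u_{xxx}-u_{yyy})/u_{xyy}$ and $\omega_2=u_{xxy}/u_{xyy}$. Your intermediate claims check out against the paper's system for~$F$: the relations $F_{u_{xx}}=-u_{xxy}F_{u_{txy}}$ and $F_{u_{yy}}=-u_{xyy}F_{u_{txy}}$, the cancellation of the inhomogeneous $\tau_{tt}$-contributions in $L=u_{txy}-(u_{xx}u_{xy})_x-(u_{xy}u_{yy})_y$, and the count of three functionally independent invariants from four-dimensional generic orbits on the reduced seven-dimensional space are all correct (your attribution of the elimination of $u_{txx}$, $u_{tyy}$ to the shift generators rather than to $P^x$, $P^y$ is a harmless bookkeeping slip that you yourself correct in the next step).

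The genuine gap is at the very end. From ``$F=0$ must be a relation among the three invariants'' you pass directly to ``solving this relation for the principal invariant $\omega_0$'', but this step fails exactly when $F$ does not involve $\omega_0$, and such equations survive every condition you derived: for instance $u_{xxy}=0$ (the relation $\omega_2=0$) is a third-order equation invariant under all of~$\mathfrak g$, yet it is not of the form~\eqref{eq:dNEqsWithSameIA} for any~$H$. The paper closes this hole explicitly, by requiring $F_{\omega_0}\ne0$ and arguing that otherwise $t$ is not a significant variable of the equation but only a parameter, which is how the hypothesis ``with three independent variables'' must be read for the lemma to hold. A second, related point you pass over silently is that an invariant equation need not be defined by an invariant function at all: the criterion $\mathrm{pr}\,v(F)\big|_{F=0}=0$ yields your split equations only modulo~$F$, so one must separately exclude codimension-one singular invariant manifolds (loci such as $u_{xy}=0$ or $u_{xyy}=0$, where your invariants are undefined and which are themselves $\mathfrak g$-invariant); the paper devotes an explicit sentence to this before asserting that every $G_{\rm id}$-invariant equation has the form $F(\omega_0,\omega_1,\omega_2)=0$. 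Without these two additions your argument proves the ``if'' direction and the classification of invariant \emph{functions}, but not the ``only if'' direction of the lemma as stated.
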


\begin{proof}
The (infinite) prolongations~$Q_{(\infty)}$ of the vector fields~$Q$,
which are presented in~\eqref{eq:dNMIA}
and span the maximal Lie invariance (pseudo)algebra~$\mathfrak g$ of the equation~\eqref{eq:dN},
are
\begin{gather*}
D^t_{(\infty)}(\tau)=\tau\p_t+\tfrac13\tau_tx\p_x+\tfrac13\tau_ty\p_y
\\ \hphantom{D^t_{(\infty)}(\tau)=}
-\sum_{k=2}^\infty\tau^{(k)}\left(
 \frac{x^3+y^3}{18}\p_{u_{k-2,00}}
+\frac{x^2}6\p_{u_{k-2,10}}+\frac x3\p_{u_{k-2,20}}+\frac13\p_{u_{k-2,30}}\right.
\\ \hphantom{D^t_{(\infty)}(\tau)=-\sum_{k=2}^\infty\tau^{(k)}}\ \left.
{}+\frac{y^2}6\p_{u_{k-2,01}}+\frac y3\p_{u_{k-2,02}}+\frac13\p_{u_{k-2,03}}\right)
\\ \hphantom{D^t_{(\infty)}(\tau)=}
-\sum_\kappa\sum_{k=1}^\infty\tau^{(k)}\left(
\binom{\kappa_0}k+\frac{\kappa_1+\kappa_2}3\binom{\kappa_0}{k-1}
\right)u_{\kappa_0+1-k,\kappa_1\kappa_2}\p_{u_\kappa}
\\ \hphantom{D^t_{(\infty)}(\tau)=}
-\frac13\sum_\kappa\sum_{k=2}^\infty\binom{\kappa_0}{k-1}\tau^{(k)}
\big(xu_{\kappa_0+1-k,\kappa_1+1,\kappa_2}+yu_{\kappa_0+1-k,\kappa_1,\kappa_2+1}\big)\p_{u_\kappa},
\\[2ex]
D^{\rm s}_{(\infty)}=x\p_x+y\p_y+\sum_\kappa(3-\kappa_1-\kappa_2)u_\kappa\p_{u_\kappa},
\\[1ex]
P^x_{(\infty)}(\chi)=\chi\p_x-\sum_{k=1}^\infty\chi^{(k)}\!\left(
\frac{x^2}2\p_{u_{k-1,00}}+x\p_{u_{k-1,10}}+\p_{u_{k-1,20}}
+\sum_\kappa\binom{\kappa_0}ku_{\kappa_0-k,\kappa_1+1,\kappa_2}\p_{u_\kappa}
\right)\!,
\\[1ex]
P^y_{(\infty)}(\rho)=\rho\p_y-\sum_{k=1}^\infty\rho^{(k)}\!\left(
\frac{y^2}2\p_{u_{k-1,00}}+y\p_{u_{k-1,01}}+\p_{u_{k-1,02}}
+\sum_\kappa\binom{\kappa_0}ku_{\kappa_0-k,\kappa_1,\kappa_2+1}\p_{u_\kappa}
\right)\!,
\\[1ex]
R^x_{(\infty)}(\alpha)=\sum_{k=0}^\infty\alpha^{(k)}\big(x\p_{u_{k00}}+\p_{u_{k10}}\big),\quad
R^y_{(\infty)}(\beta)=\sum_{k=0}^\infty\beta^{(k)}\big(y\p_{u_{k00}}+\p_{u_{k01}}\big),
\\[1ex]
Z_{(\infty)}(\sigma)=\sum_{k=0}^\infty\sigma^{(k)}\p_{u_{k00}}.
\end{gather*}
Recall that  run through the set of smooth functions of~$t$,
$\binom nk:=0$ if $k>n$.
The differential invariants of the identity component~$G_{\rm id}$
of the point-symmetry pseudogroup~$G$ of the equation~\eqref{eq:dN}
can be found as differential functions~$F$ of~$u$ that satisfy the equations $Q_{(\infty)}F=0$,
where $Q$ runs through the set of the vector fields~\eqref{eq:dNMIA}.
We can split these equations
with respect to the derivatives of the parameter functions~$\tau$, $\chi$, $\rho$, $\alpha$, $\beta$ and~$\sigma$
and then, after deriving the equations $F_x=F_y=0$, with respect to~$x$ and~$y$,
which leads to the following system for~$F$:
\begin{gather*}
F_t=F_x=F_y=F_{u_{k00}}=F_{u_{k10}}=F_{u_{k01}}=0,\\[1ex]
F_{u_{k20}}+\sum_\kappa\binom{\kappa_0}{k+1}u_{\kappa_0-k-1,\kappa_1+1,\kappa_2}F_{u_\kappa}=0,\ \
F_{u_{k02}}+\sum_\kappa\binom{\kappa_0}{k+1}u_{\kappa_0-k-1,\kappa_1,\kappa_2+1}F_{u_\kappa}=0,\\[1ex]
\sum_\kappa(3-\kappa_1-\kappa_2)u_\kappa F_{u_\kappa}=0,\ \
\sum_\kappa(\kappa_0+1)u_\kappa F_{u_\kappa}=0, 
\\
F_{u_{k30}}+F_{u_{k03}}+\sum_\kappa\left(
3\binom{\kappa_0}{k+2}+(\kappa_1+\kappa_2)\binom{\kappa_0}{k+1}
\right)u_{\kappa_0-k-1,\kappa_1\kappa_2}F_{u_\kappa}=0, \quad k\in\mathbb N_0.
\end{gather*}

Let the order of~$F$ as a differential function be less than or equal to three.
Then the equations in the first row mean that $F$ is a function at most
$u_{xx}$, $u_{xy}$, $u_{yy}$, $u_{txx}$, $u_{txy}$, $u_{tyy}$, $u_{xxx}$, $u_{xxy}$, $u_{xyy}$ and $u_{yyy}$.
Then the equations in the second row with $k=1$ and with $k=0$ successively imply
\[F_{u_{txx}}=F_{u_{tyy}}=0,\quad F_{u_{xx}}+u_{xxy}F_{u_{txy}}=0,\quad F_{u_{yy}}+u_{xyy}F_{u_{txy}}=0.\]
The equations in the third row and the equation in the last row with $k=0$ reduce to
\begin{gather*}
u_{xx}F_{u_{xx}}+u_{xy}F_{u_{xy}}+u_{yy}F_{u_{yy}}+u_{txy}F_{u_{txy}}=0,\\
u_{xxx}F_{u_{xxx}}+u_{xxy}F_{u_{xxy}}+u_{xyy}F_{u_{xyy}}+u_{yyy}F_{u_{yyy}}+u_{txy}F_{u_{txy}}=0,\\
F_{u_{xxx}}+F_{u_{yyy}}+2u_{xy}F_{u_{txy}}=0.
\end{gather*}
The other equations are satisfied identically in view of the derived equations.
Integrating the latter equations, we obtain that any differential invariant~$F$ of order less than or equal to three
of the group~$G_{\rm id}$ is a function of
\[
\omega_0:=\frac{u_{txy}-(u_{xx}u_{xy})_x-(u_{xy}u_{yy})_y}{u_{xy}u_{xyy}},\quad
\omega_1:=\frac{u_{xxx}-u_{yyy}}{u_{xyy}},\quad
\omega_2:=\frac{u_{xxy}}{u_{xyy}},
\]
$F=F(\omega_0,\omega_1,\omega_2)$.
Hence the group~$G_{\rm id}$ admits no differential invariants of orders zero, one and two.
The above consideration also implies that the group~$G_{\rm id}$
admits no codimension-one singular invariant manifolds
in the third-order jet space ${\rm J}^3(\mathbb R^3_{txy}\times\mathbb R_u)$.
Therefore, a partial differential equation for the unknown function~$u$ depending on $(t,x,y)$ is $G_{\rm id}$-invariant
if and only if it is of the form $F(\omega_0,\omega_1,\omega_2)=0$,
where $F_{\omega_0}\ne0$ since otherwise the variable~$t$ is not significant and rather plays the role of a parameter,
and \eqref{eq:dNEqsWithSameIA} is an equivalent form for such equations.
\end{proof}

Any equation of the form~\eqref{eq:dNEqsWithSameIA} is invariant with respect to the point transformations
$\mathscr I^{\rm i}$ and~$\mathscr I^{\rm s}$ alternating the signs of $(t,x,y)$ and of $(x,y,u)$, respectively,
cf.\ Corollary~\ref{cor:dNDiscrSyms}.
At the same time, the permutation~$\mathscr J$ of the variables~$x$ and~$y$ is a point symmetry transformation
of such an equation if and only if \[H(\omega_1,\omega_2)=\omega_2H(-\omega_2^{-1}\omega_1,\omega_2^{-1}).\]

The space of local conservation laws of the dispersionless Nizhnik equation~\eqref{eq:dN}
is infinite-dimensional, and the simplest conservation-law characteristics of this equation
are $1$, $u_{xx}$ and~$u_{yy}$.
Let us check when an equation of the form~\eqref{eq:dNEqsWithSameIA} admits these conservation-law characteristics.

\begin{lemma}\label{lem:dNEqsWithSameIAandCLChars}
(i) An equation of the form~\eqref{eq:dNEqsWithSameIA} admits the conservation-law characteristic~$1$
and thus it is in conserved form if and only if
$H$ is an affine function of~$(\omega_1,\omega_2)$,
i.e., $H=a\omega_1+b\omega_2+c$ for some constants~$a$, $b$ and~$c$,
and the equation takes the form
\begin{gather}\label{eq:dNEqsWithSameIAandCLChar1}
u_{txy}=(u_{xx}u_{xy})_x+(u_{xy}u_{yy})_y+u_{xy}\big(a(u_{xxx}-u_{yyy})+bu_{xxy}+cu_{xyy}\big).
\end{gather}

(ii) An equation of the form~\eqref{eq:dNEqsWithSameIAandCLChar1}
admits the conservation-law characteristic~$u_{xx}$ or~$u_{yy}$ if and only if
$a=b=0$ or~$a=c=0$, respectively.
\end{lemma}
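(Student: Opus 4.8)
The starting point is the standard criterion that a differential function $\lambda$ is a conservation-law characteristic of an equation $L=0$ if and only if $\lambda L$ is a total divergence, equivalently if and only if the Euler operator (variational derivative) $\mathsf E$ annihilates it, $\mathsf E(\lambda L)=0$ identically on the jet space. Write the left-hand side of~\eqref{eq:dNEqsWithSameIA} as $L:=u_{txy}-(u_{xx}u_{xy})_x-(u_{xy}u_{yy})_y-u_{xy}u_{xyy}H$, and let $L_0:=u_{txy}-(u_{xx}u_{xy})_x-(u_{xy}u_{yy})_y$, which is the dispersionless Nizhnik equation in conserved form and hence a total divergence, so $\mathsf E(L_0)=0$. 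Since $1$, $u_{xx}$ and $u_{yy}$ are characteristics of $L_0$ as recalled before the lemma, we have $\mathsf E(\lambda L)=-\mathsf E\big(\lambda\,u_{xy}u_{xyy}H\big)$ for each $\lambda\in\{1,u_{xx},u_{yy}\}$. Thus both parts reduce to computing Euler operators of the single nonlinear term $u_{xy}u_{xyy}H$.

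For part~(i) the plan is to analyse $\mathsf E(u_{xy}u_{xyy}H)=0$. I would first note that $u_{xy}u_{xyy}H$ depends only on the jet variables $u_{xy},u_{xxx},u_{xxy},u_{xyy},u_{yyy}$, express the five corresponding partial derivatives through $H,H_{\omega_1},H_{\omega_2}$, and assemble
\[
\mathsf E(u_{xy}u_{xyy}H)=D_xD_y(u_{xyy}H)-D_x^3(u_{xy}H_{\omega_1})-D_x^2D_y(u_{xy}H_{\omega_2})-D_xD_y^2\big(u_{xy}(H-\omega_1H_{\omega_1}-\omega_2H_{\omega_2})\big)+D_y^3(u_{xy}H_{\omega_1}).
\]
The easy (``if'') direction is that for affine $H=a\omega_1+b\omega_2+c$ the term equals $u_{xy}\big(a(u_{xxx}-u_{yyy})+bu_{xxy}+cu_{xyy}\big)$, whose summands are all total divergences: $u_{xy}u_{xxy}=\tfrac12 D_x(u_{xy}^2)$, $u_{xy}u_{xyy}=\tfrac12 D_y(u_{xy}^2)$, while $u_{xy}u_{xxx}$ and $u_{xy}u_{yyy}$ are annihilated by $\mathsf E$ by a one-line check; hence~\eqref{eq:dNEqsWithSameIAandCLChar1} is in conserved form. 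For the converse I would split the displayed expression by differential order: as $u_{xyy}H$ has order three, the term $D_xD_y(u_{xyy}H)$ contributes nothing of order six, so the order-six part of $\mathsf E(u_{xy}u_{xyy}H)$ arises only from the four terms carrying three total derivatives and equals $u_{xy}$ times a combination of the order-six parts of $D^3\omega_1$ and $D^3\omega_2$, whose coefficients are precisely the second derivatives $H_{\omega_1\omega_1}$, $H_{\omega_1\omega_2}$, $H_{\omega_2\omega_2}$ (with functional coefficients in $\omega_1,\omega_2$). Splitting with respect to the independent order-six jet variables then yields a linear system forcing all three second derivatives of $H$ to vanish, so $H$ is affine; reading off $a,b,c$ gives~\eqref{eq:dNEqsWithSameIAandCLChar1}.

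For part~(ii), with $H=a\omega_1+b\omega_2+c$ fixed, the reduction above turns $\mathsf E(u_{xx}L)=0$ into $\mathsf E(M)=0$ for the cubic $M:=u_{xx}u_{xy}\big(a(u_{xxx}-u_{yyy})+bu_{xxy}+cu_{xyy}\big)$, which I would treat monomial by monomial. Two of the four monomials are annihilated by $\mathsf E$: the identity $u_{xx}u_{xy}u_{xyy}=D_y(\tfrac12 u_{xx}u_{xy}^2)-\tfrac16 D_x(u_{xy}^3)$ exhibits the $c$-monomial as a divergence (so $c$ stays free), and $\mathsf E(u_{xx}u_{xy}u_{xxx})=0$ by a direct expansion in which every term cancels. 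The remaining two give nonzero fourth-order expressions, $\mathsf E(u_{xx}u_{xy}u_{yyy})=2u_{xxxy}u_{yyy}+u_{xxx}u_{yyyy}-3u_{xxyy}u_{xyy}$ and $\mathsf E(u_{xx}u_{xy}u_{xxy})=2u_{xxy}u_{xxxy}-u_{xxxx}u_{xyy}-u_{xxx}u_{xxyy}$, whose monomials are pairwise distinct and share no monomial across the two. Hence $\mathsf E(M)=-a\,\mathsf E(u_{xx}u_{xy}u_{yyy})+b\,\mathsf E(u_{xx}u_{xy}u_{xxy})$ vanishes identically exactly when $a=b=0$, with $c$ unconstrained. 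The $u_{yy}$-statement then follows without further computation from the discrete symmetry $J\colon x\leftrightarrow y$ of Corollary~\ref{cor:dNDiscrSyms}, under which $u_{xx}\leftrightarrow u_{yy}$ and the coefficient triple transforms as $(a,b,c)\mapsto(-a,c,b)$, carrying the condition $a=b=0$ to $a=c=0$.

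The routine pieces are the finite Euler-operator computations of part~(ii) together with the ``if'' direction of part~(i). The main obstacle is the converse in part~(i): one must verify that the order-six part of $\mathsf E(u_{xy}u_{xyy}H)$ genuinely \emph{separates} the three second derivatives of $H$, that is, that the coefficients of the independent sixth-order jet variables form a non-degenerate linear system in $H_{\omega_1\omega_1}$, $H_{\omega_1\omega_2}$, $H_{\omega_2\omega_2}$. This requires tracking the order-six parts of $D^3\omega_1$ and $D^3\omega_2$ for all four differential operators $D_x^3$, $D_x^2D_y$, $D_xD_y^2$, $D_y^3$ (including the contributions from differentiating the denominators $u_{xyy}$ inside $\omega_1,\omega_2$) and checking that no degeneracy collapses the system, so that only the trivial solution survives.
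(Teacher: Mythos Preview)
Your proposal is correct and follows essentially the same route as the paper: both use the Euler-operator criterion $\mathsf E(\lambda L)=0$, reduce to the nonlinear term $u_{xy}u_{xyy}H$, and in part~(i) isolate the sixth-order contributions to force $H_{\omega_1\omega_1}=H_{\omega_1\omega_2}=H_{\omega_2\omega_2}=0$, after which the affine case is checked to satisfy the identity. For part~(ii) the paper simply says to ``similarly consider'' $\mathsf E(u_{xx}N)=0$ and $\mathsf E(u_{yy}N)=0$; your monomial-by-monomial treatment and your use of the permutation~$J$ to deduce the $u_{yy}$-case from the $u_{xx}$-case are just a more explicit (and slightly more economical) execution of that same plan.
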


\begin{proof}
The differential function \[N:=u_{txy}-(u_{xx}u_{xy})_x-(u_{xy}u_{yy})_y-u_{xy}u_{xyy}H(\omega_1,\omega_2)\]
is (locally) a total divergence if and only if $\mathsf EN=0$,
where $\mathsf E$ is the Euler operator with respect to~$u$,
$\mathsf E:=\sum_\kappa(-\mathrm D_t)^{\kappa_0}(-\mathrm D_x)^{\kappa_1}(-\mathrm D_y)^{\kappa_2} \p_{u_\kappa}$,
see, e.g., \cite[Theorem 4.7]{olve1993A}.
Collecting coefficients of sixth-order derivatives of~$u$ in the equation $\mathsf EN=0$,
we derive the system \[H_{\omega_1\omega_1}=H_{\omega_1\omega_2}=H_{\omega_2\omega_2}=0,\]
and, in view of this system, the equation $\mathsf EN=0$ is satisfied identically.
This proves item (i).

To prove item (ii), it suffices to similarly consider the equations $\mathsf E(u_{xx}N)=0$ and $\mathsf E(u_{yy}N)=0$
for affine functions~$H$ of~$(\omega_1,\omega_2)$.
\end{proof}

Lemmas~\ref{lem:dNEqsWithSameIA} and~\ref{lem:dNEqsWithSameIAandCLChars} jointly imply the following theorem.

\begin{theorem}\label{thm:dNDefiningGeometricProperties}
An $r$th order ($r\in\{1,2,3\}$) partial differential equation with three independent variables
admits the algebra~$\mathfrak g$ as its Lie invariance algebra
and the conservation-law characteristics~$1$, $u_{xx}$ and~$u_{yy}$
if and only if it coincides with the dispersionless Nizhnik equation~\eqref{eq:dN}.
\end{theorem}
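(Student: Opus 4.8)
The plan is simply to assemble Lemmas~\ref{lem:dNEqsWithSameIA} and~\ref{lem:dNEqsWithSameIAandCLChars}, which between them determine the free function~$H$ completely; indeed, the lead-in sentence already signals that the theorem is their joint consequence. The easy \emph{if} direction I would dispose of first: the dispersionless Nizhnik equation~\eqref{eq:dN} is recovered from the form~\eqref{eq:dNEqsWithSameIA} by setting $H\equiv0$; it admits~$\mathfrak g$ as its (maximal) Lie invariance algebra by the computation in Section~\ref{sec:LieInvAlgebra}; and it possesses the characteristics $1$, $u_{xx}$ and $u_{yy}$, as noted before Lemma~\ref{lem:dNEqsWithSameIAandCLChars} and as also follows from that lemma with $a=b=c=0$.

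For the \emph{only if} direction I would impose the hypotheses one at a time. Assume an equation of order $r\in\{1,2,3\}$ admits~$\mathfrak g$ as its Lie invariance algebra. By Lemma~\ref{lem:dNEqsWithSameIA} it must then be of the form~\eqref{eq:dNEqsWithSameIA} for some smooth~$H$; in particular it is necessarily of third order, since the proof of that lemma shows that $G_{\rm id}$ admits no differential invariants of orders $0$, $1$, $2$, so the cases $r=1,2$ are vacuous. Requiring the characteristic~$1$ forces, by Lemma~\ref{lem:dNEqsWithSameIAandCLChars}(i), the function~$H$ to be affine, $H=a\omega_1+b\omega_2+c$, and the equation to reduce to~\eqref{eq:dNEqsWithSameIAandCLChar1}. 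Requiring the characteristic~$u_{xx}$ then forces $a=b=0$ and requiring~$u_{yy}$ forces $a=c=0$, both by Lemma~\ref{lem:dNEqsWithSameIAandCLChars}(ii). Intersecting these constraints yields $a=b=c=0$, hence $H\equiv0$ and the equation coincides with~\eqref{eq:dN}.

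Since the analytic content of both directions is already carried by the two lemmas, I do not expect a genuine obstacle: the theorem is essentially a bookkeeping combination. The only point warranting a word of care is the assertion that the hypotheses force the order to be exactly three, ruling out $r=1,2$; this rests entirely on the nonexistence of low-order differential invariants of~$G_{\rm id}$ established within the proof of Lemma~\ref{lem:dNEqsWithSameIA}.
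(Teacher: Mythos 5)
Your proposal is correct and follows exactly the paper's route: the paper proves this theorem in one line, stating that Lemmas~\ref{lem:dNEqsWithSameIA} and~\ref{lem:dNEqsWithSameIAandCLChars} jointly imply it, which is precisely the bookkeeping you carry out (with the added, correct observation that the cases $r=1,2$ are vacuous since $G_{\rm id}$ admits no differential invariants of order less than three).
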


In view of Theorem~\ref{thm:dNDefiningGeometricProperties},
the invariance with respect to the algebra~$\mathfrak g$ and
admitting the conservation-law characteristics~$1$, $u_{xx}$ and~$u_{yy}$
lead to the invariance with respect to the entire group~$G$,
which includes the discrete point symmetry transformations~$\mathscr J$, $\mathscr I^{\rm i}$ and~$\mathscr I^{\rm s}$,
and to admitting the entire (infinite-dimensional) space of conservation-law characteristics
of the equation~\eqref{eq:dN}.

\section{Discussion}\label{sec:Discussion}

Let us discuss some implications of the paper's results in the form of a chain of remarks.

\begin{remark}\label{rem:dNAlgMethodFirstApplicationToContSymGroups}
Item (ii) of the proof of Theorem~\ref{thm:dNPointSymPseudogroup} is the first example
of applying the megaideal-based version of the algebraic method to computing
the contact-symmetry (pseudo)group of a partial differential equation in the literature.
An example of computing contact symmetries of a partial differential equation
using the automorphism-based version of the algebraic method was presented in~\cite{hydo1998b}.
\end{remark}

\begin{remark}\label{rem:dNAlgMethodSpecificFeature}
Item (i) of the proof of Theorem~\ref{thm:dNPointSymPseudogroup} shows
that the conditions~\eqref{eq:dNMainPushforwards} exhaustively define
the point-symmetry pseudogroup~$G$ of the equation~\eqref{eq:dN},
which is the first example of such a kind in the literature.
In other words, the second part of the computation procedure of the algebraic method
using the direct method is a trivial check that all the singled out point transformations,
which are either of the form~\eqref{eq:dNPointSymForm}
or compositions of transformations of the form~\eqref{eq:dNPointSymForm} with the transformation~$\mathscr J$,
are indeed symmetries of the equation~\eqref{eq:dN}.
In view of item (ii) of the proof of Theorem~\ref{thm:dNPointSymPseudogroup},
the same claim is relevant for the contact-symmetry pseudogroup~$G_{\rm c}$ of the equation~\eqref{eq:dN} as well.
At the same time, this is not the case for the point-symmetry pseudogroup~$G_{\rm L}$
of the nonlinear Lax representation~\eqref{eq:dNLaxPair} of the equation~\eqref{eq:dN}
and even more so for the point-symmetry pseudogroup~$G_{\rm dN}$ of the system~\eqref{eq:dNSystem},
which is nonlocally related to~\eqref{eq:dN}.
\end{remark}

\begin{remark}\label{rem:dNNecessityOfComputingCompletePointSymGroup}
It is obvious that $\mathscr J$, $\mathscr I^{\rm i}$ and~$\mathscr I^{\rm s}$ are point symmetries of the equation~\eqref{eq:dN},
and the identity component of the pseudogroup~$G$, whose infinitesimal counterpart is the algebra~$\mathfrak g$,
consists of the point transformations of the form~\eqref{eq:dNPointSymForm} with $T_t>0$ and $C>0$.
Therefore, all the transformations described in Theorem~\ref{thm:dNPointSymPseudogroup}
are point symmetries of~\eqref{eq:dN},
and the first prolongation of these transformations are contact symmetries of~\eqref{eq:dN}.
At the same time, this is a simple part of the statement of Theorem~\ref{thm:dNPointSymPseudogroup}
although it is still not too trivial as shown by the imprecise formulation of its analog in~\cite{moro2021a}.
In fact, the purpose of the proof of Theorem~\ref{thm:dNPointSymPseudogroup}
is to check that the equation~\eqref{eq:dN} admits no other point and contact symmetry transformations.
\end{remark}

\begin{remark}\label{rem:dNMinSetOfMegaIdeals}
As noted at the end of Section~\ref{sec:LieInvAlgebra},
the nonzero improper megaideal of~$\mathfrak g$, which is the entire algebra~$\mathfrak g$ itself,
can be neglected in the course of applying the megaideal-based method
to computing the point-symmetry pseudogroup~$G$ of the equation~\eqref{eq:dN}
since it is the sum of two proper megaideals, $\mathfrak g=\mathfrak m_1+\mathfrak m_2$.
This is not the case for the megaideals~$\mathfrak m_1$ and~$\mathfrak m_2$.
Nevertheless, if we use one of them, then the condition $\Phi_*\mathfrak m\subseteq\mathfrak m$
for the other implies no new constraints for the transformation components,
and the megaideal set $\{\mathfrak m_2,\dots,\mathfrak m_6\}$ assures a bit more effective and simpler computations
than $\{\mathfrak m_1,\mathfrak m_3,\dots,\mathfrak m_6\}$.
It is not yet clear how to a priori identify megaideals whose involvement in the computation
is not too essential although they are not sums of other proper megaideals.
\end{remark}

\begin{remark}\label{rem:dNSpanningSubalg}
The span of each of the sets of linearly independent vector fields that were selected
for use in the course of applying the megaideal-based version of the algebraic method in question
in the present paper and in~\cite{malt2021a}
is closed with respect to the Lie bracket, i.e.,
it is a subalgebra of the corresponding invariance algebra.
It is still not known whether this property plays a certain role
and whether its appearance is an occasional phenomenon or
appropriate vector fields can be always chosen in the way to possess it.
\end{remark}

\begin{remark}\label{rem:dNAlgMinimalRequiredSetOfVFs}
The selected sets of linearly independent vector fields are unexpectedly small
but still allow us to effectively compute the corresponding point- and contact-symmetry groups,
especially when involving megaideals.
Nevertheless, we do not know whether the cardinalities of these sets are minimum.
In general, one has developed no techniques that would help
to a priori estimate sufficient numbers of such vector fields,
not to mention finding the minimum among these numbers
and the optimal selection of vector fields for simplifying computations.
\end{remark}

\begin{remark}\label{rem:dNAlgMethodSufficentNumberOfVFs}
In the course of computing the point-symmetry pseudogroup~$G$ of the equation~\eqref{eq:dN},
it is optimal and sufficient to use the conditions
$\Phi_*(\mathfrak s_1\cap\mathfrak m_j)\subseteq\mathfrak m_j$, $j=2,\dots,6$,
which jointly implies the condition $\Phi_*\mathfrak g\subseteq\mathfrak g$.
We have additionally checked that for each $k\in\{2,\dots,6\}$,
the collection of the conditions
$\Phi_*(\mathfrak s_1\cap\mathfrak m_j)\subseteq\mathfrak m_j$, $j\in M_k$,
where $M_6=\{6\}$, $M_5=\{5\}$, $M_4=\{4,5\}$, $M_3=\{3,4,5\}$, $M_2=\{2,3,4,5\}$ and $M_1=\{2,3,4,5,6\}$,
implies the condition $\Phi_*\mathfrak m_k\subseteq\mathfrak m_k$.
As noted in Remark~\ref{rem:dNMinSetOfMegaIdeals},
using the subalgebra~$\mathfrak s_2$ leads to a bit more complicated computations
but allows us to replace the megaideal~$\mathfrak m_2$ with~$\mathfrak m_1$.
Moreover, it suffices to consider the conditions
$\Phi_*(\mathfrak s_2\cap\mathfrak m_j)\subseteq\mathfrak m_j$, $j=1,3,4,5$,
which jointly implies the conditions
$\Phi_*\mathfrak g\subseteq\mathfrak g$, and thus
$\Phi_*\mathfrak m_k\subseteq\mathfrak m_k$, $k\in\{1,2,6\}$, as well.
\end{remark}

\begin{remark}\label{rem:dNAlgMethodPropertyP}
In the course of proving Theorem~\ref{thm:dNDefSubalgs},
we have checked which subalgebras of~$\mathfrak g$ among~$\mathfrak s_1$, $\bar{\mathfrak s}_1$ and~$\mathfrak s_2$
define diffeomorphisms stabilizing~$\mathfrak g$.
Simultaneously, we have in fact recomputed the group~$G$ only using the condition $\Phi_*\mathfrak s_2\subseteq\mathfrak g$,
i.e., involving no proper megaideals of~$\mathfrak g$.
Although the recomputation is based on a technique
analogous to that in the proof of Theorem~\ref{thm:dNPointSymPseudogroup}, it is much more complicated.
Moreover, in contrast to the condition with~$\mathfrak s_2$,
the analogous condition with~$\bar{\mathfrak s}_1$ or, moreover, with~$\mathfrak s_1$
does not imply the complete system of determining equations
for point symmetries of the equation~\eqref{eq:dN}.
The situation is even more dramatic in the case of contact transformations.
From the condition $\Psi_*(\mathfrak s_4\cap\mathfrak m_{j(1)})\subseteq\mathfrak m_{j(1)}$, $j=4,5$,
for a contact transformation~$\Psi$,
where $\mathfrak s_4=\langle Z(1),Z(t),R^x(1),R^y(1)\rangle$ is the common four-dimensional subalgebra
of~$\mathfrak s_1$ and~$\mathfrak s_2$, it is easy to derive
that this transformation is the first prolongation
of a point transformation in the space with the coordinates $(t,x,y,u)$.
This is the content of item (ii) of the proof of Theorem~\ref{thm:dNPointSymPseudogroup}.
We do not know whether this property of~$\Psi$ follows even from the condition
$\Psi_*(\mathfrak s_1\cup\mathfrak s_2)_{(1)}\subseteq\mathfrak g_{(1)}$.
This weakened condition, which does not involve the knowledge of megaideals of~$\mathfrak g$,
implies a too complicated system of equations for the components of~$\Psi$,
and techniques applied in the present paper are not appropriate for solving such a system.
This is the reason why in Theorem~\ref{thm:dNContDefSubalgs} we have used the subalgebra~$\mathfrak s_3$,
which is wider than the subalgebra~$\mathfrak s_4$.
The presented facts demonstrate the importance of using proper megaideals
within the framework of the algebraic method for finding point-symmetry groups
of systems of differential equations.
\end{remark}

\begin{remark}\label{rem:dNComparingSymsOfdNAndOfItslaxRepresentation}
In contrast to continuous point symmetry transformations,
not all discrete point symmetry transformations of the equation~\eqref{eq:dN}
are extended to ones of its nonlinear Lax representation~\eqref{eq:dNLaxPair}.
At the same time, the system~\eqref{eq:dNLaxPair} admits,
in addition to the expectable point symmetries of simple shifts in~$v$,
the discrete point symmetry transformation alternating the sign of~$v$.
\end{remark}

\begin{remark}\label{rem:dNDefiningGeomProperties}
Although the maximal Lie invariance algebra~$\mathfrak g$ of the equation~\eqref{eq:dN} exhaustively defines
the point-symmetry pseudogroup~$G$ of this equation,
it does not define exhaustively the equation itself.
Nevertheless, to single out the equation~\eqref{eq:dN} from the entire set of
third-order partial differential equations with three independent variables,
it suffices to supplement the $\mathfrak g$-invariance with a few nice conditions.
As such conditions, we have selected
admitting the conservation-law characteristics~$1$, $u_{xx}$ and~$u_{yy}$.
\end{remark}

The enhanced description of the point- and contact-symmetry pseudogroups of the dispersionless Nizhnik equation~\eqref{eq:dN}
is only the first step in further enhancing the results of~\cite{moro2021a} on Lie reductions of this equation.
We also plan to reclassify the one-dimensional subalgebras of the algebra~$\mathfrak g$ and
to classify its two-dimensional subalgebras,
to exhaustively carry out Lie reductions of the equation~\eqref{eq:dN} and then to accurately study its hidden Lie symmetries.
In addition, we would like to compute the entire algebra of (local) generalized symmetries of the equation~\eqref{eq:dN}
and the entire space of its local conservation laws.
We conjecture that in fact,
the generalized symmetries of the equation~\eqref{eq:dN} are exhausted by its Lie symmetries,
and the essential order of its conservation-law characteristics is not greater than two.
The above results will create necessary prerequisites for the consideration of
nonlocal symmetry-like objects that are related to the equation~\eqref{eq:dN}.\looseness=-1

A~similar study can be carried out for both the symmetric and asymmetric Nizhnik equations
over the real and the complex fields in the presence of dispersion~\cite{nizh1980a,vese1984a},
for the Nizhnik equation in the Novikov--Veselov form and its dispersionless counterpart,
as well as for the stationary Nizhnik equation,
which was considered in \cite{fera1999c,marv2003a} and in \cite[Sections~9.7~and~9.8]{roge2002A}.

\section*{Acknowledgments}

The authors are grateful to Dmytro Popovych, Galyna Popovych and Artur Sergyeyev for helpful discussions and suggestions.
This work was supported in part by a grant from the Simons Foundation (1030291, O.O.V., V.M.B.)
and by the Ministry of Education, Youth and Sports of the Czech Republic (M\v SMT \v CR)
under RVO funding for I\v C47813059 (R.O.P.).
R.O.P. expresses his gratitude for the hospitality shown by the University of Vienna during his long staying at the university.
The authors express their deepest thanks to the Armed Forces of Ukraine and the civil Ukrainian people
for their bravery and courage in defense of peace and freedom in Europe and in the entire world from russism.


\footnotesize

\end{document}